\newtheorem{theorem}{\textbf{Theorem}}
\newtheorem{corollary}{\textbf{Corollary}}
\newtheorem{proposition}{\textbf{Proposition}}
\newtheorem{definition}{\textbf{Definition}}
\newtheorem{lemma}{\textbf{Lemma}}
\newtheorem{remark}{\textbf{Remark}}
\newcommand{\dv}{\mathbf} 
\newcommand{\mc}{\mathcal} 
\newcommand{\mkv}{-\!\!\!\!\minuso\!\!\!\!-}
\newcommand{\squeezeup}{\vspace{-1em}}
\newcommand*\xbar[1]{%
    \hbox{%
		 \vbox{%
		 \hrule height 0.5pt 
		 \kern0.5ex
		 \hbox{%
		 \kern-0.1em
		\ensuremath{#1}%
		\kern-0.1em
		}%
		}%
		}%
		} 
\begin{document}
\title{Rate-Exponent Region for a Class of Distributed Hypothesis Testing Against Conditional Independence Problems\\}

\author{Abdellatif Zaidi$\:^{\dagger}$$\:^{\nmid}$\vspace{0.3cm}\\
$^{\dagger}$ Universit\'e Paris-Est, Champs-sur-Marne 77454, France\\
$^{\nmid}$ Mathematical and Algorithmic Sciences Lab., Paris Research Center, Huawei France
abdellatif.zaidi@u-pem.fr
}

\maketitle

\begin{abstract}
We study a class of $K$-encoder hypothesis testing against conditional independence problems. Under the criterion that stipulates minimization of the Type II error subject to a (constant) upper bound $\epsilon$ on the Type I error, we  characterize the set of encoding rates and exponent for both discrete memoryless and memoryless vector Gaussian settings. For the DM setting, we provide a converse proof and show that it is achieved using the Quantize-Bin-Test scheme of Rahman and Wagner. For the memoryless vector Gaussian setting, we develop a tight outer bound by means of a technique that relies on the de Bruijn identity and the properties of Fisher information. In particular, the result shows that for memoryless vector Gaussian sources the rate-exponent region is exhausted using the Quantize-Bin-Test scheme with \textit{Gaussian} test channels; and there is \textit{no} loss in performance caused by restricting the sensors' encoders not to employ time sharing. Furthermore, we also study a variant of the problem in which the source, not necessarily Gaussian, has finite differential entropy and the sensors' observations noises under the null hypothesis are Gaussian. For this model, our main result is an upper bound on the exponent-rate function. The bound is shown to mirror a corresponding explicit lower bound, except that the lower bound involves the source power (variance) whereas the upper bound has the source entropy power. Part of the utility of the established bound is for investigating asymptotic exponent/rates and losses incurred by distributed detection as function of the number of sensors.
\end{abstract}

\section{Introduction}

Consider the multiterminal detection system shown in Figure~\ref{fig-distributed-hypothesis-testing}. In this problem, a memoryless vector source $(X,Y_0,Y_1,\hdots,Y_K)$, $K \geq 1$, has a joint distribution that depends on two hypotheses, a null hypothesis $H_0$ and an alternate hypothesis $H_1$. A detector that observes directly the pair $(X,Y_0)$ but only receives summary information of the sensors' observations $(Y_1,\hdots,Y_K)$ seeks to determine which of the two hypotheses is true. Specifically, Encoder $k$, $1 \leq k \leq K$, which observes an independent and identically distributed (i.i.d.) string $Y^n_k$, sends a message $M_k$ to the detector at finite rate of $R_k$ bits per observation over a noise-free channel; and the detector makes its decision between the two hypotheses on the basis of the received messages $(M_1,\hdots,M_K)$ as well as the available pair $(X^n,Y^n_0)$. In doing so, the detector can make two types of error: Type I error (guessing $H_1$ while $H_0$ is true) and Type II error (guessing $H_0$ while $H_1$ is true). The type II error probability can decrease exponentially fast with the size $n$ of the i.i.d. strings, say with an exponent $E$; and, classically, one is interested is characterizing the set of achievable rate-exponent tuples $(R_1,\hdots,R_K,E)$ in the regime in which the probability of the Type I error is kept below a prescribed small value $\epsilon$. This problem, which was first introduced by Berger~\cite{B79} and then studied further in~\cite{AC86,H87,SP92}, arises naturally in many applications. Recent developments include analysis of the tradeoff between the two types of error exponents~\cite{WK19} or from the perspective of information spectrum~\cite{W17}, and extensions to networks with multiple sensors \cite{RW12,ZL14,ZE-A19a,Z20,UEZ20b}, multiple detectors \cite{SWT18, SW21}, interactive terminals~\cite{TC08a,XK12}, multi-hop networks~\cite{ZL14,ZL15,SWW19,EWZ18,EZW18},  noisy channels~\cite{SW20,SG20} and scenarios with privacy constraints~\cite{LSCT17,SGC18,SCG20,GBST19}. Its theoretical understanding, however, is far from complete, even from seemingly simple instances of it.

\begin{figure}[h!]
\centering
\includegraphics[width=0.6\linewidth]{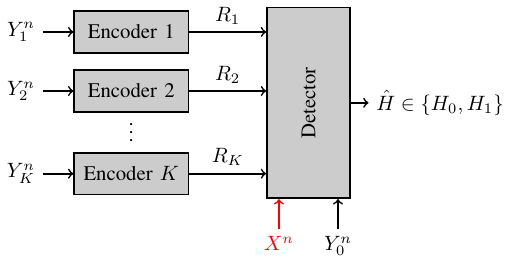}
\caption{Distributed hypothesis testing against conditional independence.} 
\squeezeup
\label{fig-distributed-hypothesis-testing}
\end{figure}

One important such instances was studied by Rahman and Wagner in~\cite{RW12}. In~\cite{RW12}, the two hypotheses are such that $X$ and $(Y_1,\hdots,Y_K)$ are correlated conditionally given $Y_0$ under the null hypothesis $H_0$; and they are independent conditionally given $Y_0$ under the alternate hypothesis $H_1$, i.e.,~\footnote{In fact, the model of~\cite{RW12} also involves a random variable $Y_{K+1}$, which is chosen here to be deterministic as it is not relevant for the analysis and discussion that will follow in this paper.}
\begin{subequations}
\begin{align}
\label{distribution-under-null-hypothesis-special-case-ht-against-conditional-independence-DM-Rahman-Wagner}
H_0  &: P_{X,Y_0,Y_1,\hdots,Y_K} = P_{Y_0} P_{X,Y_1,\hdots,Y_K|Y_0} \\
H_1  &: Q_{X,Y_0,Y_1,\hdots,Y_K} = P_{Y_0} P_{X|Y_0} P_{Y_1,\hdots,Y_K|Y_0}. 
\label{distribution-under-alternate-hypothesis-special-case-ht-against-conditional-independence-DM-Rahman-Wagner}
\end{align}
\label{distributions-under-null-and-alternate-hypotheses-special-case-ht-against-conditional-independence-DM-Rahman-Wagner}
\end{subequations}
Note that $(Y_0,Y_1,\hdots,Y_K)$ and $(Y_0,X)$ have the same distributions under both hypotheses; and the multiterminal problem~\eqref{distributions-under-null-and-alternate-hypotheses-special-case-ht-against-conditional-independence-DM-Rahman-Wagner} is a multi-encoder version of the single-encoder test against independence studied by Ahlswede and Csisz\'ar in~\cite[Theorem 2]{AC86}. For the problem~\eqref{distributions-under-null-and-alternate-hypotheses-special-case-ht-against-conditional-independence-DM-Rahman-Wagner} Rahman and Wagner provided inner and outer bounds on the rate-exponent region which do \textit{not} match in general (see~\cite[Theorem 1]{RW12} for the inner bound and~\cite[Theorem 2]{RW12} for the outer bound). The inner bound of~\cite[Theorem 1]{RW12} is similar to a generalized  Berger-Tung inner bound for distributed source coding~\cite{B77,T78}; and is based on a scheme, named Quantize-Bin-Test (QBT) therein, in which like in the Shimokawa–Han–Amari scheme~\cite{Shimokawa1994} the encoders quantize and then bin their observations but the detector performs the test directly using the bins.

In this paper, we study a class of the distributed hypothesis testing problem~\eqref{distributions-under-null-and-alternate-hypotheses-special-case-ht-against-conditional-independence-DM-Rahman-Wagner} obtained by restricting the joint distribution of the variables under the null hypothesis $H_0$ to satisfy the Markov chain
\begin{equation}
Y_{\mc S} \mkv  (X, Y_0)  \mkv Y_{{\mc S}^c} \quad \forall \:\: \mc S \subseteq \mc K := \{1,\ldots,K\}
\label{markov-chain}
\end{equation}
i.e., the encoders' observations $\{Y_k\}_{k \in \mc K}$ are independent conditionally given $(X,Y_0)$. We investigate both discrete memoryless (DM) and memoryless vector Gaussian models. For the DM setting, we provide a converse proof and show that it is achieved using the Quantize-Bin-Test scheme of~\cite[Theorem 1]{RW12}. Our converse proof is strongly inspired by that of the rate-distortion region of the Chief-Executive Officer (CEO) problem under logarithmic loss of Courtade and Weissman~\cite[Theorem 10]{CW14}. In fact, with an easy entropy characterization of the rate-exponent region that we develop here the problem is shown equivalent operationally to an CEO problem in which the remote source is $X$, agent $k$ observes $Y_k$, the decoder observes side information (SI) $Y_0$ and wants to reconstruct the remote source $X$ to within average distortion level $(H(X|Y_0)-E)$, and where the distortion is measured under logarithmic loss. It appears that the result of our converse can be implied by Rahman-Wagner outer bound of~\cite[Theorem 2]{RW12} when in the problem~\eqref{distributions-under-null-and-alternate-hypotheses-special-case-ht-against-conditional-independence-DM-Rahman-Wagner} one imposes the Markov condition~\eqref{markov-chain} on the distribution under the null hypothesis. This, moreover, also means that for the multiterminal CEO problem under logarithmic loss of~\cite{CW14} the outer bound of Wagner-Anantharam of~\cite[Theorem 1]{Wagner2008b} implies the converse part of their Theorem 10 therein. Finally, we note that, for general distributions under the null hypothesis, i.e., without the Markov chain~\eqref{markov-chain}, prior to this work the optimality of the Quantize-Bin-Test scheme of~\cite{RW12} for the problem of testing against conditional independence was known only for the special case of a single encoder, i.e., $K=1$~\cite[Theorem 3]{RW12}, a result which can also be recovered from our result in this paper. 

For the vector Gaussian setting we provide an explicit characterization of the rate-exponent region. For the proof of the converse part of this result, essentially we develop an outer bound by means of a technique that relies on the de Bruijn identity and the properties of Fisher information; and we show that it is tight. Past application of these techniques was shown recently to yield the optimal region for the related vector Gaussian CEO problem under logarithmic loss in~\cite{UEZ20b}, while previously found generally non-tight for the classic squared error distortion measure~\cite{EU14}. In particular, our result here shows that for memoryless vector Gaussian sources the rate-exponent region is exhausted using the Quantize-Bin-Test Scheme of~\cite[Theorem 1]{RW12} with \textit{Gaussian} test channels. Furthermore, it also shows that there is no loss in performance caused by restricting the sensors' encoders \textit{not} to employ time sharing. This provides what appears to be the first optimality result for the Gaussian hypothesis testing against conditional independence problem in the vector sources case.

Furthermore, we broaden our view to also study a generalization of the $K$-encoder scalar Gaussian hypothesis testing against independence problem in which the sensors' observations under the null hypothesis are independent noisy versions of $X$, with Gaussian noises, but $X$ itself is an arbitrary continuous memoryless source. For instance, the distribution of $X$, not necessarily Gaussian, is arbitrary and has non-zero finite entropy power. We recall that the entropy power of a continuous random variable $X$ which has density $p_X(x)$ is defined as
\begin{equation}
N(X) = \frac{e^{2h(X)}}{2{\pi}e}
\label{definition-entropy-power}
\end{equation}
where $h(X)$ denotes the differential entropy of $X$. In this case, we establish an upper bound on the exponent rate function. It is shown that the bound exactly mirrors a corresponding \textit{explicit} lower bound, except that the lower bound has the source power (variance) whereas the upper bound has the source entropy power. The bounds do not depend on auxiliaries; and, while they hold generally for arbitrary distributions of source $X$ with finite differential entropy, their utility is mostly in that they reflect the right behavior as a function of the number of sensors.

\vspace{-0.2cm}

\subsection{Outline and Notation}

The rest of this paper is organized as follows. Section~\ref{secII} provides a formal description of the hypothesis testing problem that we study in this paper, as well as some definitions that are related to it.  Sections~\ref{secIII} and ~\ref{secIV} contain the main results of this paper. Section~\ref{secIII} provides a single-letter characterization of the rate-exponent region in the DM setting, as well as an explicit characterization of the region for the case of memoryless vector Gaussian sources. Section~\ref{secIV} provides an upper bound on the exponent-rate function for the case in which the sensors' noises are Gaussian but the source itself is memoryless continuous with arbitrary density that has finite differential entropy. This section also contains application to the study of asymptotics of the exponent-rate function for a large number of sensors. The proofs are deferred to the appendices section.

Throughout this paper, we use the following notation. Upper case letters are used to denote random variables, e.g., $X$; lower case letters are used to denote realizations of random variables, e.g., $x$; and calligraphic letters denote sets, e.g., $\mc X$.  The cardinality of a set $\mc X$ is denoted by $|\mc X|$. The closure of a set $\mc A$ is denoted by $\xbar{\mc A}$. The length-$n$ sequence $(X_1,\ldots,X_n)$ is denoted as  $X^n$; and, when confusion is not possible, for integers $j$ and $k$ such that $1 \leq k \leq j \leq n$ the sub-sequence $(X_k,X_{k+1},\ldots, X_j)$ is denoted as  $X_{k}^j$. Probability mass functions (pmfs) are denoted by $P_X(x)=\mathrm{Pr}\{X=x\}$; and, sometimes, for short, as $p(x)$. We use $\mc P(\mc X)$ to denote the set of discrete probability distributions on $\mc X$.  Boldface upper case letters denote vectors or matrices, e.g., $\dv X$, where context should make the distinction clear. For an integer $K \geq 1$, we denote the set of integers smaller or equal $K$ as $\mc K = \{ k \in \mathbb{N} \: : \: 1 \leq k \leq K\}$. For a set of integers $\mc S \subseteq \mc K$, the complementary set of $\mc S$ is denoted by $\mc S^c$, i.e., $\mc S^c = \{k \in \mathbb{N} \: : \: k \in \mc K \setminus \mc S\}$. Sometimes, for convenience we will need to define $\bar{\mc S}$ as $\bar{\mc S} = \{0\} \cup \mc S^c$. For a set of integers $\mc S \subseteq \mc K$; the notation $X_{\mc S}$ designates the set of random variables $\{X_k\}$ with indices in the set $\mc S$, i.e., $X_{\mc S}=\{X_k\}_{k \in \mc S}$. We denote the covariance of a zero mean, complex-valued, vector $\dv X$ by $\mathbf{\Sigma}_{\mathbf{x}} =\mathbb{E}[\mathbf{XX}^{\dag}]$, where $(\cdot)^{\dag}$ indicates conjugate transpose. Similarly, we denote the cross-correlation of two zero-mean vectors $\dv X$  and $\dv Y$ as $\mathbf{\Sigma}_{\mathbf{x},\mathbf{y}} = \mathbb{E}[\mathbf{XY}^{\dag}]$, and the conditional correlation matrix of $\mathbf{X}$ given $\mathbf{Y}$ as $\mathbf{\Sigma}_{\mathbf{x}|\mathbf{y}} = \mathbb{E}\big[\big(\dv X - \mathbb{E}[\dv X|\dv Y]\big)\big(\dv X - \mathbb{E}[\dv X|\dv Y]\big)^{\dag}\big]$ i.e., $\mathbf{\Sigma}_{\mathbf{x}|\mathbf{y}} = \mathbf{\Sigma}_{\mathbf{x}}-\mathbf{\Sigma}_{\mathbf{x},\mathbf{y}}\mathbf{\Sigma}_{\mathbf{y}}^{-1}\mathbf{\Sigma}_{\mathbf{y},\mathbf{x}}$. For matrices $\dv A$ and $\dv B$, the notation $\mathrm{diag}(\dv A, \dv B)$ denotes the block diagonal matrix whose diagonal elements are the matrices $\dv A$ and $\dv B$ and its off-diagonal elements are the all zero matrices. Also, for a set of integers $\mc J \subset \mathbb{N}$ and a family of matrices $\{\dv A_i\}_{i \in \mc J}$ of the same size, the notation $\dv A_{\mc J}$ is used to denote the (super) matrix obtained by concatenating vertically the matrices $\{\dv A_i\}_{i \in \mc J}$, where the indices are sorted in the ascending order, e.g, $\dv A_{\{0,2\}}=[\dv A^{\dag}_0, \dv A^{\dag}_2]^{\dag}$.


\section{Problem Formulation}~\label{secII}

\vspace{-0.4cm}

Consider a $(K+2)$-dimensional memoryless source $(X,Y_0,Y_1,\ldots,Y_K)$ with finite alphabet $\mc X \times \mc Y_0 \times \mc Y_1 \times \ldots \times \mc Y_K$. The joint probability mass function (pmf) of $(X,Y_0,Y_1,\ldots,Y_K)$ is assumed to be determined by a hypothesis $H$  that takes one of two values, a null hypothesis $H_0$ and an alternate hypothesis $H_1$. Under the null hypothesis $H_0$, it is assumed that $X$ and $(Y_0,Y_1,\hdots,Y_K)$ are correlated and the joint distribution of $(X,Y_0,Y_1,\hdots,Y_K)$ satisfies the following Markov chain 
\begin{equation}
Y_{\mc S} \mkv  (X, Y_0)  \mkv Y_{{\mc S}^c} \quad \forall \:\: \mc S \subseteq \mc K := \{1,\ldots,K\}.
\label{eq:MKChain_pmf}
\end{equation}
Under the alternate hypothesis $H_1$, it is assumed that $X$ and $(Y_1,\hdots,Y_K)$ are independent conditionally given $Y_0$. That is, 
\begin{subequations}
\begin{align}
\label{distribution-under-null-hypothesis-ht-against-conditional-independence-DM}
& H_0 : P_{X,Y_0,Y_1\hdots,Y_K} = P_{X,Y_0} \prod_{k=1}^K P_{Y_k|X,Y_0} \\
&H_1 : Q_{X,Y_0,Y_1\hdots,Y_K} = Q_{Y_0} Q_{X|Y_0} Q_{Y_1,\hdots,Y_K|Y_0}. 
\label{distribution-under-alternate-hypothesis-ht-against-conditional-independence-DM}
\end{align}
\label{distribution-under-null-and-alternate-hypotheses-ht-against-conditional-independence-DM}
\end{subequations}

\noindent Throughout we make the assumption that the distributions $P$ and $Q$ have same $(X,Y_0)$- and $(Y_0,Y_1,\hdots,Y_K)$-marginals, i.e.,
\begin{equation}
P_{X,Y_0} = Q_{X,Y_0} \qquad \text{and} \qquad  P_{Y_0,Y_1,\hdots,Y_K} = Q_{Y_0,Y_1,\hdots,Y_K}.
\label{condition-on-same-marginals-under-H0-and-H1}
\end{equation}

\noindent Let now $\{(X_i,Y_{0,i},Y_{1,i},\ldots,Y_{K,i})\}^n_{i=1}$ be a sequence of $n$ independent copies of $(X,Y_0,Y_1,\ldots,Y_K)$; and consider the detection system shown in Figure~\ref{fig-distributed-hypothesis-testing}. Here, there are $K$ sensors and one detector. Sensor $k \in \mc K$ observes the memoryless source component $Y^n_k$ and sends a message $M_k = {\phi}^{(n)}_k(Y^n_k)$ to the detector, where the mapping
\begin{equation}
{\phi}^{(n)}_k \: : \:  \mc Y^n_k \rightarrow \{1,\ldots,M^{(n)}_k\}
\end{equation}
designates the encoding operation at this sensor. The detector observes the pair $(X^n,Y^n_0)$ and uses them, as well as the messages $\{M_1,\hdots,M_K\}$ gotten from the sensors, to make a decision between the two hypotheses, based on a decision rule
\begin{equation}
{\psi}^{(n)} \: : \{1,\ldots,M^{(n)}_1\}\times \ldots \times  \{1,\ldots,M^{(n)}_K\}\times \mc X^n \times \mc Y_0^n  \rightarrow \{H_0,H_1\}.
\label{decision-rule-hypothesis-testing}
\end{equation}
The mapping~\eqref{decision-rule-hypothesis-testing} is such that ${\psi}^{(n)}(m_1,\hdots,m_K,x^n,y^n_0)=H_0$ if $(m_1,\hdots,m_K,x^n,y^n_0) \in \mc A_n$ and $H_1$ otherwise, with
\begin{equation*}
\mc A_n \subseteq \prod_{k=1}^n \{1,\ldots,M^{(n)}_k\} \times \mc X^n \times \mc Y_0^n
\end{equation*}
designating the acceptance region for $H_0$. The encoders $\{{\phi}^{(n)}_k\}_{k=1}^K$ and the detector ${\psi}^{(n)}$ are such that the Type I error probability does not exceed a prescribed level $\epsilon \in [0,1]$, i.e.,
\begin{equation}
P_{{\phi}^{(n)}_1(Y^n_1),\hdots,{\phi}^{(n)}_K(Y^n_K),X^n,Y^n_0} (\mc A^c_n) \leq \epsilon
\end{equation}
and the Type II error probability does not exceed $\beta$, i.e.,
\begin{equation}
Q_{{\phi}^{(n)}_1(Y^n_1),\hdots,{\phi}^{(n)}_K(Y^n_K),X^n,Y^n_0} (\mc A_n) \leq \beta.
\end{equation}
\begin{definition}
A rate-exponent tuple $(R_1,\hdots,R_K,E)$ is achievable for a fixed $\epsilon \in [0,1]$ if for any positive $\delta$ and sufficiently large $n$ there exist encoders $\{{\phi}^{(n)}_k\}_{k=1}^K$ and a detector ${\psi}^{(n)}$ such that
\begin{subequations}
\begin{align}
\frac{1}{n} \log M^{(n)}_k &\leq R_k + \delta \:\: \text{for all}\:\: k \in \mc K, \:\: \text{and} \\
 -\frac{1}{n} \log \beta &\geq E - \delta. 
\end{align}
\end{subequations}
The rate-exponent region $\mc R_{\text{HT}}$ is defined as
\begin{equation}
\mc R_{\text{HT}} := \bigcap_{\epsilon > 0} \mc R_{\text{HT},\epsilon},
\label{definition-rate-exponent-region}
\end{equation}
where $\mc R_{\text{HT},\epsilon}$ is the set of all achievable rate-exponent vectors for a fixed $\epsilon \in [0,1]$. 
\qed
\end{definition}

\section{Rate-Exponent Results}~\label{secIII}

\vspace{-0.7cm}

\subsection{Discrete Memoryless Case}~\label{secIII_subsecA}

\vspace{-0.2cm}

 We start with an entropy characterization of the rate-exponent region $\mc R_{\text{HT}}$ as defined by~\eqref{definition-rate-exponent-region}. Let
\begin{equation}
\mc R^{\star} = \bigcup_{n} \bigcup_{\{{\phi}^{(n)}_k\}_{k \in \mc K}}  \mc R^{\star}\left(n,\{{\phi}^{(n)}_k\}_{k \in \mc K}\right)
\end{equation}
where
\begin{subequations}
\begin{align}
\label{entropy-characterization-rate-exponent-region-distributed-ht-against-conditional-independence-rates-inequalities}
\mc R^{\star}\left(n,  \{{\phi}^{(n)}_k\}_{k \in \mc K}\right) =  \Big\{(&R_1,\hdots,R_K,E) \:\: \text{s.t.} \nonumber\\
& R_k \geq \frac{1}{n} \log|{\phi}^{(n)}_k(Y^n_k)| \:\: \text{for all}\:\: k \in \mc K, \:\: \text{and} \\
& E \leq \frac{1}{n} I(\{{\phi}^{(n)}_k(Y^n_k)\}_{k \in \mc K};X^n|Y^n_0) \Big\}.
\label{entropy-characterization-rate-exponent-region-distributed-ht-against-conditional-independence-exponent-inequality}
\end{align}
\label{entropy-characterization-rate-exponent-region-distributed-ht-against-conditional-independence}
\end{subequations}

\noindent We have the following proposition the proof of which is essentially similar to that of~\cite[Theorem 1]{AC86} and appears in Appendix~\ref{appendix-proposition-entropy-characterization-rate-exponent-region-distributed-ht-against-conditional-independence}.

\begin{proposition}~\label{proposition-entropy-characterization-rate-exponent-region-distributed-ht-against-conditional-independence}
$\mc R_{\text{HT}} = \xbar{\mc R^{\star}}$.
\end{proposition}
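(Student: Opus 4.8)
The plan is to establish the equality $\mc R_{\text{HT}} = \xbar{\mc R^{\star}}$ by a two-sided argument, closely paralleling the characterization of Ahlswede and Csisz\'ar~\cite[Theorem 1]{AC86}. The key idea is that because the distributions $P$ and $Q$ share the same $(X,Y_0)$- and $(Y_0,Y_1,\ldots,Y_K)$-marginals, the optimal Type II exponent achievable with any fixed collection of encoders $\{\phi_k^{(n)}\}$ is governed, via Stein's lemma applied in the Markov/conditional-independence structure, by the conditional mutual information $\frac{1}{n} I(\{\phi_k^{(n)}(Y_k^n)\}_{k\in\mc K}; X^n \mid Y_0^n)$ evaluated under $P$. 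So the whole proposition amounts to showing that this information quantity is exactly the exponent one can and must achieve.

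For the achievability direction ($\xbar{\mc R^{\star}} \subseteq \mc R_{\text{HT}}$), I would fix $n$ and encoders $\{\phi_k^{(n)}\}$, and construct the detector's acceptance region $\mc A_n$ as a typicality test: accept $H_0$ when the empirical joint type of $(\{m_k\}, x^n, y_0^n)$ is close to the distribution induced by $P$ under the given encoders, and reject otherwise. Standard strong-typicality arguments show the Type I error can be driven below $\epsilon$. The Type II error is then controlled by evaluating the probability of the acceptance set under $Q$; using~\eqref{condition-on-same-marginals-under-H0-and-H1} together with the conditional independence of $(Y_1,\ldots,Y_K)$ from $X$ given $Y_0$ under $H_1$, the factorization of $Q$ causes the exponent to equal exactly the conditional mutual information in~\eqref{entropy-characterization-rate-exponent-region-distributed-ht-against-conditional-independence-exponent-inequality}. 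This yields achievability of any tuple in $\mc R^{\star}(n,\{\phi_k^{(n)}\})$, hence of the closure.

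For the converse direction ($\mc R_{\text{HT}} \subseteq \xbar{\mc R^{\star}}$), I would take any achievable tuple $(R_1,\ldots,R_K,E)$ with Type I error at most $\epsilon$ and derive the rate constraints trivially from $\frac{1}{n}\log M_k^{(n)} \geq \frac{1}{n}\log|\phi_k^{(n)}(Y_k^n)|$, while the exponent bound follows by lower-bounding $-\frac{1}{n}\log\beta$ via a change-of-measure / data-processing argument. Concretely, writing the acceptance region $\mc A_n$ and applying the standard inequality that relates the $Q$-probability of $\mc A_n$ to its $P$-probability through the relevant Kullback--Leibler divergence, the exponent is bounded above by $\frac{1}{n} D\big(P_{M_1,\ldots,M_K,X^n,Y_0^n} \,\|\, Q_{M_1,\ldots,M_K,X^n,Y_0^n}\big) + o(1)$; invoking the shared marginals and the $H_1$-factorization reduces this divergence precisely to the conditional mutual information $\frac{1}{n} I(\{M_k\}; X^n \mid Y_0^n)$. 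The constant-$\epsilon$ slack is absorbed by taking the intersection over $\epsilon>0$ in the definition~\eqref{definition-rate-exponent-region}, which is why the blowing-up or image-size type refinement used by Ahlswede--Csisz\'ar is needed to make the bound insensitive to $\epsilon$.

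The step I expect to be the main obstacle is precisely this last point: showing that the exponent bound does not degrade as $\epsilon$ approaches $1$, i.e. that the converse exponent is $\frac{1}{n} I(\{M_k\}; X^n \mid Y_0^n)$ rather than some $\epsilon$-dependent weakening. This is the technical heart inherited from~\cite{AC86} and typically requires a blowing-up lemma argument (or the equivalent strong-converse machinery) to upgrade the weak converse into one that holds uniformly in $\epsilon \in (0,1)$ before the intersection over $\epsilon$ is taken. Everything else is bookkeeping with types and the factorizations in~\eqref{distribution-under-null-and-alternate-hypotheses-ht-against-conditional-independence-DM}.
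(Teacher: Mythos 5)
Your overall route is the same as the paper's: both inclusions are proved by combining (a) the identity, valid under the shared-marginals condition \eqref{condition-on-same-marginals-under-H0-and-H1} and the $H_1$-factorization, that the divergence $D\big(P_{M_1,\ldots,M_K,X^n,Y^n_0}\,\|\,Q_{M_1,\ldots,M_K,X^n,Y^n_0}\big)$ collapses to $I(M_1,\ldots,M_K;X^n|Y^n_0)$, with (b) Stein's lemma (equivalently, a typicality test on the induced super-symbol) for achievability and a data-processing/log-sum argument for the converse. Two remarks. First, on achievability you should be explicit that for fixed $p$ and fixed encoders the tuple $(M_1,\ldots,M_K,X^p,Y_0^p)$ is a \emph{single} super-symbol, so the typicality (or Stein) argument must be run over $l$ independent uses of the length-$p$ code; the paper handles this by the concatenation \eqref{proof-of-converse-part-proposition1-step7} together with the monotonicity sandwich \eqref{proof-of-converse-part-proposition1-step9} to cover blocklengths that are not multiples of $p$. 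As written, a typicality test on ``the empirical type of $(\{m_k\},x^n,y_0^n)$'' for one block does not produce an exponent.

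Second, and more importantly, your diagnosis of the ``main obstacle'' is backwards: no blowing-up lemma or strong-converse machinery is needed, and the paper uses none. The definition \eqref{definition-rate-exponent-region} takes the \emph{intersection} over $\epsilon>0$, so a tuple in $\mc R_{\text{HT}}$ must be achievable for arbitrarily small $\epsilon$; consequently the weak converse suffices. Concretely, the log-sum inequality gives $I(M_{\mc K};X^n|Y^n_0)\geq \alpha_n\log(\alpha_n/\beta_n)+(1-\alpha_n)\log\big((1-\alpha_n)/(1-\beta_n)\big)\geq (1-\epsilon)\,n(E-\delta)-h(\alpha_n)-(1-\alpha_n)\log(1-\beta_n)$, and the multiplicative $(1-\epsilon)$ loss is absorbed by letting $\epsilon\to 0$, which the intersection entitles you to do. A blowing-up argument would be required only to prove a strong converse, i.e., that $\mc R_{\text{HT},\epsilon}$ is the same for every $\epsilon\in(0,1)$ — a statement the proposition does not make. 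So the step you flag as the technical heart is in fact absent from the proof; with the concatenation fix above and the weak converse, your argument closes.
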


\noindent The result of Proposition~\ref{proposition-entropy-characterization-rate-exponent-region-distributed-ht-against-conditional-independence} essentially means that the studied hypothesis testing problem is operationally equivalent to a chief executive officer (CEO) source coding problem where the distortion is measured under logarithmic loss. Specifically, this equivalent CEO problem is one in which the remote source is $X$; there are $K$ agents observing noisy versions of it, with agent $k$ observing $Y_k$; and the decoder observes side information (SI) $Y_0$ and wants to reconstruct the remote source $X$ to within average logarithmic loss distortion $(H(X|Y_0)-E)$. The latter problem was solved in~\cite[Theorem 10]{CW14} in the case of no decoder SI  (i.e., $Y_0=\emptyset$) but its proof carries over with minimal changes to the case in which the decoder is equipped with SI $Y_0$. Thus, with the result of Proposition~\ref{proposition-entropy-characterization-rate-exponent-region-distributed-ht-against-conditional-independence} and a rather straightforward generalization of~\cite[Theorem 10]{CW14} we have the following theorem which provides a single-letter characterization of the rate-exponent region $\mc R_{\text{HT}}$.

\begin{theorem}~\label{theorem-rate-exponent-region-hypothesis-testing-DM-case}
The rate-exponent region $\mc R_{\text{HT}}$ is given by the union of all  non-negative tuples $(R_1,\ldots,R_K,E)$ that satisfy, for all subsets $\mc S \subseteq \mc K$, 
\begin{equation}
E \leq I(U_{\mc S^c};X|Y_0,Q) + \sum_{k \in \mc S} \big(R_k - I(Y_k;U_k|X,Y_0,Q)\big)
\end{equation}
for some auxiliary random variables $(U_1,\ldots,U_K,Q)$ with distribution $P_{U_{\mc K},Q}$ such that 
\begin{equation}
P_{X, Y_0, Y_{\mc K}, U_{\mc K}, Q} = P_Q  P_{X, Y_0} \prod_{k=1}^K P_{Y_k|X, Y_0} \: \prod_{k=1}^{K} P_{U_k|Y_k,Q}.
\label{joint-measure-theorem-rate-exponent-region-hypothesis-testing-DM-case}
\end{equation}
\end{theorem}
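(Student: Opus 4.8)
The plan is to build the theorem on top of Proposition~\ref{proposition-entropy-characterization-rate-exponent-region-distributed-ht-against-conditional-independence}, which already collapses the operational problem to the task of characterizing $\xbar{\mc R^{\star}}$. The key observation is that, since the source is memoryless, $\tfrac{1}{n}H(X^n|Y^n_0)=H(X|Y_0)$, so the exponent constraint~\eqref{entropy-characterization-rate-exponent-region-distributed-ht-against-conditional-independence-exponent-inequality} is equivalent to the logarithmic-loss distortion constraint $\tfrac{1}{n}H(X^n\mid M_{\mc K},Y^n_0)\leq H(X|Y_0)-E=:D$. Setting the target distortion to $D=H(X|Y_0)-E$ thus turns the problem into the CEO problem under logarithmic loss in which $X$ is the remote source, sensor $k$ observes $Y_k$, and the decoder has side information $Y_0$. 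The whole theorem then follows by exhibiting matching inner and outer bounds on $\xbar{\mc R^{\star}}$.

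For achievability I would invoke the Quantize-Bin-Test scheme of~\cite[Theorem 1]{RW12}, which in this CEO formulation is the side-information-aided Berger--Tung scheme. Each sensor passes $Y^n_k$ through the test channel $P_{U_k|Y_k,Q}$, quantizes into a codebook generated i.i.d.\ according to $P_{U_k|Q}$, and bins the index at rate $R_k$; the detector recovers $\{U^n_k\}_{k\in\mc K}$ jointly using $Y^n_0$ whenever the rates meet the Markov Berger--Tung constraints, and then evaluates the induced conditional empirical entropy of $X^n$. The standard analysis yields precisely the subset inequalities $E\leq I(U_{\mc S^c};X|Y_0,Q)+\sum_{k\in\mc S}\bigl(R_k-I(Y_k;U_k|X,Y_0,Q)\bigr)$, with the factorization~\eqref{joint-measure-theorem-rate-exponent-region-hypothesis-testing-DM-case} enforced by construction.

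The crux is the converse, for which I would transcribe the single-letterization of Courtade and Weissman~\cite[Theorem 10]{CW14}, threading the side information $Y_0$ through every term. Fixing $\mc S\subseteq\mc K$, I would lower bound $n\sum_{k\in\mc S}R_k\geq H(M_{\mc S})\geq I(M_{\mc S};Y^n_{\mc S}\mid M_{\mc S^c},X^n,Y^n_0)$ and combine it with the exponent bound $nE\leq I(M_{\mc K};X^n|Y^n_0)$. Introducing a time-sharing variable $Q$ uniform on $\{1,\ldots,n\}$ together with auxiliaries of the form $U_{k,i}=(M_k,Y^{i-1}_0)$ and applying the chain rule and the Csisz\'ar sum identity single-letterizes both quantities. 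The conditional-independence Markov chain~\eqref{eq:MKChain_pmf}, combined with the memorylessness of the source, is exactly what guarantees that the induced auxiliaries obey $U_k\mkv(Y_k,Q)\mkv(X,Y_0,Y_{\mc K\setminus k},U_{\mc K\setminus k})$ and hence factor as $\prod_k P_{U_k|Y_k,Q}$ in~\eqref{joint-measure-theorem-rate-exponent-region-hypothesis-testing-DM-case}; it is also the ingredient that renders the outer bound tight.

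The main obstacle I anticipate is this single-letterization itself: one must choose the auxiliaries so that (i) they satisfy the required conditional factorization $P_{U_k|Y_k,Q}$, (ii) the quantization-cost term $I(Y_k;U_k|X,Y_0,Q)$ emerges with the correct sign so as to be subtracted from $R_k$, and (iii) the residual-uncertainty terms aggregate to $I(U_{\mc S^c};X|Y_0,Q)$. Keeping $Y_0$ correctly placed as a conditioning variable in every conditional mutual-information term, and verifying that~\eqref{eq:MKChain_pmf} is preserved after inserting $Q$, is the delicate bookkeeping. As noted in the surrounding discussion, an independent route to the converse is to specialize the Rahman--Wagner outer bound~\cite[Theorem 2]{RW12} under the Markov condition~\eqref{markov-chain}, which I would use as a consistency check on the single-letter expression.
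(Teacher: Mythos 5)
Your proposal is correct and follows essentially the same route as the paper: the entropy characterization of Proposition~\ref{proposition-entropy-characterization-rate-exponent-region-distributed-ht-against-conditional-independence} reduces the problem to a CEO problem under logarithmic loss with decoder side information $Y_0$, achievability is by the Quantize-Bin-Test scheme, and the converse is a Courtade--Weissman-style single-letterization with $Q_i=(X^{i-1},X_{i+1}^{n},Y_0^{i-1},Y_{0,i+1}^{n})$. The only discrepancy is your auxiliary $U_{k,i}=(M_k,Y_0^{i-1})$, which should be $U_{k,i}=(M_k,Y_k^{i-1})$ (the encoder's own past, as in the paper) so that the chain-rule identification of $\sum_i I(U_{k,i};Y_{k,i}|X_i,Y_{0,i},Q_i)$ with $I(M_k;Y_k^n|X^n,Y_0^n)$ goes through cleanly; also, no Csisz\'ar sum identity is actually needed, only conditioning-reduces-entropy and memorylessness.
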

A direct proof of the achievability part of Theorem~\ref{theorem-rate-exponent-region-hypothesis-testing-DM-case} follows by an easy application of the Quantize-Bin-Test scheme of Rahman and Wagner~\cite[Theorem 1]{RW12}. The interested reader may also find an alternate, direct, proof of its converse part in Appendix~\ref{appendix-proof-theorem-rate-exponent-region-hypothesis-testing-DM-case}.

Comparatively, the hypothesis testing model of~\cite{RW12} is one in which under the null hypothesis $(Y_1,\hdots,Y_K)$ are arbitrarily correlated among them and with the pair $(X,Y_0)$; and under the alternate hypothesis $Y_0$ induces conditional independence between $(Y_1,\hdots,Y_K)$ and $X$. More precisely, the joint distributions of $(X,Y_0,Y_1,\hdots,Y_K)$ under the null and alternate hypotheses as considered in~\cite{RW12} are
\begin{subequations}
\begin{align}
\label{distribution-under-null-hypothesis-ht-against-conditional-independence-DM-Rahman-Wagner}
H_0  &: \tilde{P}_{X,Y_0,Y_1\hdots,Y_K} = P_{Y_0} P_{X|Y_0} P_{Y_1,\hdots,Y_K|X,Y_0} \\
H_1  &: \tilde{Q}_{X,Y_0,Y_1\hdots,Y_K} = P_{Y_0} P_{X|Y_0} P_{Y_1,\hdots,Y_K|Y_0}. 
\label{distribution-under-alternate-hypothesis-ht-against-conditional-independence-DM-Rahman-Wagner}
\end{align}
\label{distributions-under-null-and-alternate-hypotheses-ht-against-conditional-independence-DM-Rahman-Wagner}
\end{subequations}
 For this more general model, they provide inner and outer bounds on the rate-exponent region which do \textit{not} match in general (see~\cite[Theorem 1]{RW12} for the inner bound and~\cite[Theorem 2]{RW12} for the outer bound). Our Theorem~\ref{theorem-rate-exponent-region-hypothesis-testing-DM-case} shows that if, in addition, the joint distribution of the variables under the null hypothesis $H_0$ is restricted to satisfy the Markov chain condition~\eqref{eq:MKChain_pmf}, then the Quantize-Bin-Test scheme of~\cite[Theorem 1]{RW12} is optimal. Accordingly, the reader may wonder whether the converse of Theorem~\ref{theorem-rate-exponent-region-hypothesis-testing-DM-case} could be implied by Rahman-Wagner outer bound of~\cite[Theorem 2]{RW12} when specialized to the test setting studied here. The answer to this question, brought to the attention of the author during the revision of this paper, appears to be affirmative. To see this, recall that the outer bound of~\cite[Theorem 2]{RW12}, denoted hereafter as $\mc R^{\text{out}}_{\text{RW}}$, is given by
\begin{equation}
\mc R^{\text{out}}_{\text{RW}} = \bigcap_{A \in \mc A} \bigcup_{\lambda_0 \in \Lambda_0} \mc R^{\text{out}}_{\text{RW}}(A,\lambda_0)
\label{response-reviewer2-step8}
\end{equation}
where:
\begin{itemize}
\item[i)] $\mc A$ is the set of finite-alphabet random variable $A$ such that $Y_1,\hdots,Y_K,X$ are conditionally independent given $(A,Y_0)$;
\item[ii)] $\Lambda_0$ is the set of finite-alphabet random variables $\lambda_0=(U_1,\hdots,U_K, W,Q)$ such that: 
\begin{itemize}
\item[(a)] $(W,Q)$ is independent of $(Y_1,\hdots,Y_K,X,Y_0)$
\item[(b)] $U_k \mkv (Y_k,W,Q) \mkv (U_{k^c}, Y_{k^c}, X, Y_0)$ for all $k \in \mc K$;
\end{itemize}
\item[iii)] for given $A \in \mc A$ and $\lambda_0 \in \Lambda_0$ for which the joint distribution of $A$, $(X,Y_0,Y_1,\hdots,Y_K)$ and $\lambda_0$ satisfies the Markov chain condition
\begin{equation}
A \mkv (Y_1,\hdots,Y_K,X,Y_0) \mkv \lambda_0
\end{equation} 
and $\mc R^{\text{out}}_{\text{RW}}(A,\lambda_0)$ is defined as the set of all non-negative $(R_1,\hdots,R_K,E)$ for which
\begin{subequations}
\begin{align}
\label{rahman-wagner-outer-bound-constraint-on-rates}
\sum_{k \in \mc S} R_k & \geq I(\dv U_{\mc S};A|\dv U_{\mc S^c},Y_0,Q) + \sum_{k \in \mc S} I(U_k;Y_k|A,W,Y_0,Q), \:\: \forall \mc S \subseteq \mc K\\
E &\leq I(U_1,\hdots,U_K;X|Y_0,Q).
\label{rahman-wagner-outer-bound-constraint-on-exponent}
\end{align}
\label{rahman-wagner-outer-bound}
\end{subequations}
\end{itemize}

\noindent Let $ (U_1,\hdots,U_K, W,Q) \in \Lambda_0$. Noticing that $X \in \mc A$ and setting $A= X$, the inequality~\eqref{rahman-wagner-outer-bound-constraint-on-rates} can be weakened as 
\begin{align}
\label{bound-rahman-wagner-first-step}
\sum_{k \in \mc S} R_k & \geq I(\dv U_{\mc S};X|\dv U_{\mc S^c},Y_0,Q) + \sum_{k \in \mc S} I(U_k;Y_k|X,W,Y_0,Q) \\
& = I(\dv U_{\mc K};X|Y_0,Q) - I(\dv U_{\mc S^c};X|Y_0,Q) + \sum_{k \in \mc S} I(U_k;Y_k|X,W,Y_0,Q) \\
&\geq E - I(\dv U_{\mc S^c};X|Y_0,Q) + \sum_{k \in \mc S} I(U_k;Y_k|X,W,Y_0,Q)
\label{rahman-wagner-outer-bound-specialized-to-markov-chain-of-null-hypothesis}
\end{align}
where the last inequality follows by using~\eqref{rahman-wagner-outer-bound-constraint-on-exponent}. Also, we have
\begin{align}
I(\dv U_{\mc S^c};X|Y_0,Q) &= H(X|Y_0,Q) - H(X|\dv U_{\mc S^c}, Y_0, Q) \\
&\stackrel{(a)}{=} H(X|Y_0,W,Q) - H(X|\dv U_{\mc S^c}, Y_0, Q) \\
&\stackrel{(b)}{\leq} H(X|Y_0,W,Q) - H(X|\dv U_{\mc S^c}, Y_0, W, Q) \\
& = I(\dv U_{\mc S^c};X|Y_0,W,Q)
\label{bound-second-term}
\end{align}
where $(a)$ holds since $(W,Q)$ is independent of $(X,Y_0)$ and $(b)$ holds since conditioning reduces entropy.
 
\noindent Combining~\eqref{rahman-wagner-outer-bound-specialized-to-markov-chain-of-null-hypothesis} and~\eqref{bound-second-term}, we get that for all $\mc S \subseteq K$ we have
Thus, we have the bound 
\begin{equation}
\sum_{k \in \mc S} R_k  \geq E - I(\dv U_{\mc S^c};X|Y_0,W,Q) + \sum_{k \in \mc S} I(U_k;Y_k|X,W,Y_0,Q).
\label{bound-rahman-wagner-last-step}
\end{equation}
Thus, the variable $W$ can be absorbed into the time sharing random variable $Q$, and one gets the expression of the above Theorem~\ref{theorem-rate-exponent-region-hypothesis-testing-DM-case}. 

\begin{remark}\label{relation-to-Courtade-Weissman}
 For reasons that are essentially similar to the above it is not difficult to see that, for the related $K$-encoder CEO problem under logarithmic loss, $k \geq 2$,  the outer bound of Wagner-Anantharam of~\cite[Theorem 1]{Wagner2008b} implies the converse part of Courtade-Weissman~\cite[Theorem 10]{CW14}. 
\end{remark}

\begin{remark}\label{remark-optimality-QBT-DM-case}
Prior to this work, the optimality of the QBT scheme of~\cite{RW12} for the problem of testing against conditional independence was known only for the special case of a single encoder, i.e., $K=1$~\cite[Theorem 3]{RW12}, a result which can also be recovered from Theorem~\ref{theorem-rate-exponent-region-hypothesis-testing-DM-case}. 
\end{remark}

\vspace{-0.4cm}

\subsection{Memoryless Vector Gaussian Case}~\label{secIII_subsecB}

\vspace{-0.4cm}

We now turn to a continuous example of the hypothesis testing problem studied in this paper. Here,  $(\dv X,\dv Y_0,\dv Y_1,\hdots,\dv Y_K)$ is a zero-mean circularly-symmetric complex-valued Gaussian random vector. Without loss of generality, let 
\begin{equation}~\label{mimo-gaussian-ht-model-2}
\dv Y_0 = \dv H_0 \dv X  + \dv Z_0
\end{equation}
 where $\dv H_0 \in \mathds{C}^{n_0\times n_x}$, $\dv X \in \mathds{C}^{n_x}$ and $\dv Z_0 \in \mathds{C}^{n_0}$ are independent Gaussian vectors with zero-mean and covariance matrices $\dv\Sigma_{\dv x} \succ \dv 0$ and $\dv\Sigma_0 \succ \dv 0$, respectively. The vectors $(\dv Y_1, \hdots,\dv Y_K)$ and $\dv X$ are correlated under the null hypothesis $H_0$ and are independent under the alternate hypothesis $H_1$. Specifically, under the null hypothesis 
\begin{equation}
H_0 \: : \dv Y_k = \dv H_k \dv X + \dv Z_k, \quad\text{for all}\:\: k \in \mc K
\label{distributions-under-null-hypothesis-ht-against-conditional-independence-vector-Gaussian}
\end{equation}
where the noise vectors $(\dv Z_1,\hdots,\dv Z_K)$ are jointly Gaussian with zero mean and covariance matrix $\dv\Sigma_{\dv n_{\mc K}}  \succ \dv 0$, and assumed to be independent from $\dv X$ but correlated among them and with $\dv Z_0$, such that for every $\mc S \subseteq \mc K$, 
\begin{equation}  
\dv Z_{\mc S} \mkv  \dv Z_0  \mkv \dv Z_{\mc S^c}. 
\label{markov-chain-assumption-gaussian-hypothesis-testing-model}
\end{equation}
\noindent For every $k \in \mc K$ we denote by $\dv\Sigma_k$ the \textit{conditional} covariance matrix of noise $\dv Z_k$ conditionally given $\dv Z_0$. Under the alternate hypothesis $H_1$, the joint distribution of $(\dv X, \dv Y_0, \dv Y_1,\hdots,\dv Y_K)$, denoted as $Q_{\dv X, \dv Y_0, \dv Y_1,\hdots,\dv Y_K}$, factorizes as
\begin{equation}
H_1 \: :  Q_{\dv X, \dv Y_0,\dv Y_1,\hdots,\dv Y_K} = Q_{\dv Y_0}Q_{\dv X|\dv Y_0} Q_{\dv Y_1,\hdots,\dv Y_K |\dv Y_0}.
\label{distributions-under-alternate-hypothesis-ht-against-conditional-independence-vector-Gaussian}
\end{equation}
Here $Q_{\dv X, \dv Y_0}=P_{\dv X, \dv Y_0}$ where $P_{\dv X, \dv Y_0}$ is the joint distribution of the vector $(\dv X, \dv Y_0)$ under $H_0$ as induced by~\eqref{mimo-gaussian-ht-model-2} and $Q_{\dv Y_0, \dv Y_1,\hdots,\dv Y_K}=P_{\dv Y_0, \dv Y_1,\hdots,\dv Y_K}$ where $P_{\dv Y_0, \dv Y_1,\hdots,\dv Y_K}$ is the joint distribution of the vector $(\dv Y_0, \dv Y_1,\hdots,\dv Y_K)$ under $H_0$ as induced by~\eqref{mimo-gaussian-ht-model-2},~\eqref{distributions-under-null-hypothesis-ht-against-conditional-independence-vector-Gaussian} and~\eqref{markov-chain-assumption-gaussian-hypothesis-testing-model}.

\noindent Let $\mc R_{\text{VG-HT}}$ denote the rate-exponent region of this vector Gaussian hypothesis testing against conditional independence problem. 

\noindent For convenience, we now introduce the following notation which will be instrumental in what follows. Let, for every set $\mc S \subseteq \mc K$, the set $\bar{\mc S} = \{0\} \cup \mc S^c$. Also, for $\mc S \subseteq \mc K$ and given matrices $\{\dv\Omega_k\}_{k=1}^K$ such that $\dv 0 \preceq \dv\Omega_k \preceq \dv\Sigma_k^{-1}$, let $\boldsymbol{\Lambda}_{\bar{\mc S}}$ designate the block-diagonal matrix given by
\begin{align}~\label{equation-definition-T}
\boldsymbol{\Lambda}_{\bar{\mc S}} :=
\begin{bmatrix}
\dv 0 & \dv 0 \\
\dv 0 & \mathrm{diag}\left(\left\{ \dv\Sigma_k - \dv\Sigma_k \dv\Omega_k \dv\Sigma_k \right\}_{k\in\mc S^c}\right)
\end{bmatrix} 
\end{align}
where $\dv 0$ in the principal diagonal elements is the $n_0{\times}n_0$-all zero matrix.

\noindent The following theorem provides an explicit characterization of $\mc R_{\text{VG-HT}}$.  

\begin{theorem}~\label{theorem-rate-exponent-region-gaussian-hypothesis-testing-against-conditional-independence}
The rate-exponent region $\mc R_{\text{VG-HT}}$ of the vector Gaussian hypothesis testing against conditional independence problem is given by the set of all non-negative tuples $(R_1,\ldots, R_K,E)$ that satisfy, for all subsets $\mc S \subseteq \mc K$,  
\begin{align}
E &\leq \sum_{k \in \mc S} \big(R_k + \log \left| \dv I - \dv\Omega_k \dv\Sigma_k \right| \big) - \log \left|\dv I + \dv\Sigma_{\dv x}\dv H_0^{\dagger}\dv\Sigma_0^{-1}\dv H_0 \right| \nonumber\\
&\vspace{0.2cm} + \log \left| \dv I + \dv\Sigma_{\dv x} \dv H_{\bar{\mc S}}^\dagger \dv\Sigma_{\dv n_{\bar{\mc S}}}^{-1}\big( \dv I - \boldsymbol{\Lambda}_{\bar{\mc S}} \dv\Sigma_{\dv n_{\bar{\mc S}}}^{-1}\big)\dv H_{\bar{\mc S}}\right| 
\label{rate-exponent-region-gaussian-hypothesis-testing-against-conditional-independence}
\end{align}
for matrices $\{\dv\Omega_k\}_{k=1}^K$ such that $\dv 0 \preceq \dv\Omega_k \preceq \dv\Sigma_k^{-1}$, where $\bar{\mc S}=\{0\} \cup \mc S^c$ and $\boldsymbol{\Lambda}_{\bar{\mc S}}$ is given by~\eqref{equation-definition-T}. \qed
\end{theorem}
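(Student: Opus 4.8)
The plan is to prove Theorem~\ref{theorem-rate-exponent-region-gaussian-hypothesis-testing-against-conditional-independence} by treating achievability and converse separately, both anchored on the single-letter DM characterization of Theorem~\ref{theorem-rate-exponent-region-hypothesis-testing-DM-case}, which continues to hold for the memoryless vector Gaussian source (the arguments behind Proposition~\ref{proposition-entropy-characterization-rate-exponent-region-distributed-ht-against-conditional-independence} and Theorem~\ref{theorem-rate-exponent-region-hypothesis-testing-DM-case} carry over to continuous alphabets under the relevant finiteness conditions). Rewriting the DM constraint as
\[E \leq \sum_{k \in \mc S} R_k + I(\dv U_{\mc S^c};\dv X|\dv Y_0,Q) - \sum_{k \in \mc S} I(\dv Y_k;\dv U_k|\dv X,\dv Y_0,Q),\]
the task reduces to showing that, over all admissible auxiliaries, the right-hand side is maximized by jointly Gaussian test channels and equals the expression~\eqref{rate-exponent-region-gaussian-hypothesis-testing-against-conditional-independence}.

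For achievability I would specialize Theorem~\ref{theorem-rate-exponent-region-hypothesis-testing-DM-case} to jointly Gaussian $\dv U_k$, choosing for each $k$ a test channel $\dv U_k=\dv Y_k+\dv N_k$ with $\dv N_k$ zero-mean Gaussian, parametrized so that the conditional covariance of $\dv Y_k$ given $\dv U_k$ equals $\dv\Sigma_k-\dv\Sigma_k\dv\Omega_k\dv\Sigma_k$ with $\dv 0\preceq\dv\Omega_k\preceq\dv\Sigma_k^{-1}$, and taking $Q$ deterministic. Evaluating the mutual-information terms with the Gaussian entropy formula gives $I(\dv Y_k;\dv U_k|\dv X,\dv Y_0)=-\log|\dv I-\dv\Omega_k\dv\Sigma_k|$, and computing the conditional covariance of $\dv X$ given $(\dv U_{\mc S^c},\dv Y_0)$ --- whose inverse is $\dv\Sigma_{\dv x}^{-1}+\dv H_{\bar{\mc S}}^\dagger\dv\Sigma_{\dv n_{\bar{\mc S}}}^{-1}(\dv I-\boldsymbol{\Lambda}_{\bar{\mc S}}\dv\Sigma_{\dv n_{\bar{\mc S}}}^{-1})\dv H_{\bar{\mc S}}$ with $\boldsymbol{\Lambda}_{\bar{\mc S}}$ as in~\eqref{equation-definition-T} --- reproduces the two log-determinant terms of~\eqref{rate-exponent-region-gaussian-hypothesis-testing-against-conditional-independence}. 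This simultaneously shows that the region is attained with Gaussian test channels and without time sharing.

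For the converse, starting from arbitrary (not necessarily Gaussian) auxiliaries $(\dv U_{\mc K},Q)$, I would \emph{define} $\dv\Omega_k$ through the conditional covariance $\mathrm{cov}(\dv Y_k|\dv U_k,\dv X,\dv Y_0,Q)=:\dv\Sigma_k-\dv\Sigma_k\dv\Omega_k\dv\Sigma_k$; since this matrix lies between $\dv 0$ and $\dv\Sigma_k$, the constraint $\dv 0\preceq\dv\Omega_k\preceq\dv\Sigma_k^{-1}$ follows automatically. Because $\dv Y_k$ is Gaussian with covariance $\dv\Sigma_k$ conditionally on $(\dv X,\dv Y_0,Q)$, the maximum-entropy bound yields $I(\dv Y_k;\dv U_k|\dv X,\dv Y_0,Q)\geq-\log|\dv I-\dv\Omega_k\dv\Sigma_k|$, which disposes of the rate terms in the right direction. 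It then remains to bound the exponent term, i.e.\ to show $I(\dv U_{\mc S^c};\dv X|\dv Y_0,Q)\leq\log|\dv\Sigma_{\dv x|\dv y_0}|-\log|\dv\Sigma_{\dv x|\tilde{\dv U}_{\mc S^c},\dv y_0}|$, which is equivalent to lower-bounding the conditional entropy $h(\dv X|\dv U_{\mc S^c},\dv Y_0,Q)$ by the Gaussian value $\log|(\pi e)\dv\Sigma_{\dv x|\tilde{\dv U}_{\mc S^c},\dv y_0}|$ associated with the matched $\dv\Omega_k$.

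The hard part is precisely this last entropy lower bound: a maximum-entropy argument runs the wrong way, so I would instead invoke the de Bruijn identity together with the matrix Fisher information inequality, as in the treatment of the vector Gaussian CEO problem under logarithmic loss in~\cite{UEZ20b}. Concretely, I would express $h(\dv X|\dv U_{\mc S^c},\dv Y_0,Q)$ via an integral of the conditional Fisher information of $\dv X$ given the compressed observations, and then use the Markov structure $\dv U_k\mkv\dv Y_k\mkv(\dv X,\dv Y_0)$ together with Stam-type Fisher-information inequalities to upper-bound that Fisher information by the inverse Gaussian conditional covariance $\dv\Sigma_{\dv x|\tilde{\dv U}_{\mc S^c},\dv y_0}^{-1}=\dv\Sigma_{\dv x}^{-1}+\dv H_{\bar{\mc S}}^\dagger\dv\Sigma_{\dv n_{\bar{\mc S}}}^{-1}(\dv I-\boldsymbol{\Lambda}_{\bar{\mc S}}\dv\Sigma_{\dv n_{\bar{\mc S}}}^{-1})\dv H_{\bar{\mc S}}$. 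Producing exactly the $\boldsymbol{\Lambda}_{\bar{\mc S}}$-corrected effective-noise expression --- in which the directly observed $\dv Y_0$ contributes the zero block and each compressed $\dv Y_k$, $k\in\mc S^c$, contributes only its quantization-degraded precision --- is the delicate step. Combining the rate and exponent bounds places the tuple inside the region~\eqref{rate-exponent-region-gaussian-hypothesis-testing-against-conditional-independence} for the so-defined $\dv\Omega_k$; since the resulting bound no longer depends on $Q$ in an essential way, the time-sharing variable is absorbed, which completes the converse and confirms that Gaussian test channels without time sharing exhaust $\mc R_{\text{VG-HT}}$.
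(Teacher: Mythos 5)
Your proposal is correct and follows essentially the same route as the paper's proof: achievability by evaluating the continuous-alphabet extension of Theorem~\ref{theorem-rate-exponent-region-hypothesis-testing-DM-case} with Gaussian test channels parametrized by $\dv\Omega_k$ and no time sharing, and a converse that defines $\dv\Omega_k$ through the conditional (MMSE) covariance of $\dv Y_k$, bounds the rate terms by maximum entropy, and bounds the exponent term via the de Bruijn identity and Fisher-information machinery of~\cite{UEZ20b} to obtain exactly the $\boldsymbol{\Lambda}_{\bar{\mc S}}$-corrected precision matrix. The only detail you gloss over is the handling of the time-sharing variable: the paper defines $\dv\Omega_{k,q}$ per realization $Q=q$ and then averages via concavity of $\log\det$ and Jensen's inequality to get a single $\dv\Omega_k=\sum_q p(q)\dv\Omega_{k,q}$, which is the precise mechanism by which $Q$ is ``absorbed.''
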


\begin{proof}
The proof of Theorem~\ref{theorem-rate-exponent-region-gaussian-hypothesis-testing-against-conditional-independence} appears in Appendix~\ref{appendix-proof-theorem-rate-exponent-region-gaussian-hypothesis-testing-against-conditional-independence}.
\end{proof}

The direct part of Theorem~\ref{theorem-rate-exponent-region-gaussian-hypothesis-testing-against-conditional-independence} is obtained by evaluating the region of Theorem~\ref{theorem-rate-exponent-region-hypothesis-testing-DM-case}, which can be shown easily to extend to the continuous alphabet case through standard discretization arguments, using Gaussian test channels and no-time sharing. Specifically, we let $Q=\emptyset$ and $P_{U_k|\dv Y_k, Q}(u_k|\dv y_k, q)= \mc{CN}(\dv y_k, [(\dv I - \dv\Omega_k\dv\Sigma_k)^{-1}-\dv I]^{-1} \dv\Sigma_k)$. The main contribution of Theorem~\ref{theorem-rate-exponent-region-gaussian-hypothesis-testing-against-conditional-independence} is its converse part, the proof of which uses a technique that relies on the de Bruijn identity and the properties of Fisher information. The bound is similar to the outer bound for the vector Gaussian CEO problem under logarithmic loss given by Ugur \textit{et al.} in~\cite{UEZ20b}.  In particular, the result of Theorem~\ref{theorem-rate-exponent-region-gaussian-hypothesis-testing-against-conditional-independence} shows that there is no loss in performance if one restricts the auxiliaries (test channels) of the Quantize-Test-Bin scheme to be \textit{Gaussian}. Furthermore, there is \textit{no} loss in performance caused by restricting the encoders not to employ time sharing.

In the rest of this section, we elaborate on two special cases of Theorem~\ref{theorem-rate-exponent-region-gaussian-hypothesis-testing-against-conditional-independence}, the one-encoder vector Gaussian testing against conditional independence problem (i.e., $K=1$) and the $K$-encoder scalar Gaussian testing against independence problem.

\subsubsection{The one-encoder vector Gaussian HT problem against conditional independence}~\label{secIII_subsecB_subsubsection1}
Set $K=1$ in~\eqref{distributions-under-null-hypothesis-ht-against-conditional-independence-vector-Gaussian},~\eqref{markov-chain-assumption-gaussian-hypothesis-testing-model} and~\eqref{distributions-under-alternate-hypothesis-ht-against-conditional-independence-vector-Gaussian}. In this case the Markov chain~\eqref{markov-chain-assumption-gaussian-hypothesis-testing-model} is non-restrictive as it is trivially satisfied for all arbitrarily correlated noise at the sensor and side information $\dv Y_0$ at the detector. Theorem~\ref{theorem-rate-exponent-region-gaussian-hypothesis-testing-against-conditional-independence} then provides a complete solution of the (general) one-encoder vector Gaussian testing against conditional independence problem. The result is stated in the following Corollary.

\begin{corollary}
For the one-encoder vector Gaussian HT against conditional independence problem, the rate-exponent region is given by the set of all non-negative pairs $(R_1,E)$ that satisfy
\begin{subequations}
\begin{align}
E & \leq R_1 + \log \left|\dv I - \dv\Omega_1 \dv\Sigma_1 \right| \\
E &\leq \log \left| \dv I + \dv\Sigma_{\dv x} \dv H_{\{0,1\}}^\dagger \dv\Sigma_{\dv n_{\{0,1\}}}^{-1}\big( \dv I - \boldsymbol{\Lambda}_{\{0,1\}} \dv\Sigma_{\dv n_{\{0,1\}}}^{-1}\big)\dv H_{\{0,1\}} \right| - \log \left|\dv I + \dv\Sigma_{\dv x}\dv H_0^{\dagger}\dv\Sigma_0^{-1}\dv H_0 \right|,
\end{align}
\label{rate-exponent-region-one-encoder-testing-against-conditional-independence-vector-Gaussian-case}
\end{subequations}
for some $n_1{\times}n_1$ matrix $\dv\Omega_1$ such that $\dv 0 \preceq \dv\Omega_1 \preceq \dv\Sigma_1^{-1}$, where $\dv H_{\{0,1\}}=[\dv H^{\dag}_0, \dv H^{\dag}_1]^{\dag}$, $\dv\Sigma_{\dv n_{\{0,1\}}}$ is the covariance matrix of noise $(\dv Z_0, \dv Z_1)$ and 
\begin{align}
\boldsymbol{\Lambda}_{\{0,1\}} :=
\begin{bmatrix}
\dv 0 & \dv 0 \\
\dv 0 & \dv\Sigma_1 - \dv\Sigma_1 \dv\Omega_1 \dv\Sigma_1
\end{bmatrix}
\end{align}
with the $\dv 0$ in its principal diagonal denoting the $n_0{\times}n_0$-all zero matrix.
\end{corollary}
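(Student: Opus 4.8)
The plan is to read off the Corollary directly from Theorem~\ref{theorem-rate-exponent-region-gaussian-hypothesis-testing-against-conditional-independence} by setting $K=1$; since that theorem already characterizes the region exactly (both the achievability and the converse directions), the specialization inherits both directions and no separate argument is needed. The only work is to enumerate the subsets $\mc S \subseteq \mc K = \{1\}$ and simplify the resulting two instances of~\eqref{rate-exponent-region-gaussian-hypothesis-testing-against-conditional-independence}. First I would record the structural observation already noted in the text: when $K=1$ the Markov chain~\eqref{markov-chain-assumption-gaussian-hypothesis-testing-model} is vacuous, since $\dv Z_{\mc S} \mkv \dv Z_0 \mkv \dv Z_{\mc S^c}$ holds trivially for every $\mc S \subseteq \{1\}$. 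Hence Theorem~\ref{theorem-rate-exponent-region-gaussian-hypothesis-testing-against-conditional-independence} applies to the fully general one-encoder vector Gaussian problem in which $\dv Z_1$ and $\dv Z_0$ are arbitrarily correlated, and $\dv\Sigma_1$ is the conditional covariance of $\dv Z_1$ given $\dv Z_0$.

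Next I would enumerate the two subsets. For $\mc S = \emptyset$ we have $\mc S^c = \{1\}$, so $\bar{\mc S} = \{0,1\}$ and the sum $\sum_{k \in \mc S}(\cdot)$ is empty; inequality~\eqref{rate-exponent-region-gaussian-hypothesis-testing-against-conditional-independence} then reduces verbatim to the second bound in~\eqref{rate-exponent-region-one-encoder-testing-against-conditional-independence-vector-Gaussian-case}, with $\boldsymbol{\Lambda}_{\{0,1\}}$ obtained from~\eqref{equation-definition-T} by taking the $\mathrm{diag}$ block over the single index $\mc S^c = \{1\}$, i.e., the block $\dv\Sigma_1 - \dv\Sigma_1 \dv\Omega_1 \dv\Sigma_1$ as displayed. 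For $\mc S = \{1\}$ we have $\mc S^c = \emptyset$, so $\bar{\mc S} = \{0\}$; then $\dv H_{\bar{\mc S}} = \dv H_0$, $\dv\Sigma_{\dv n_{\bar{\mc S}}} = \dv\Sigma_0$, and, crucially, the block-diagonal matrix $\boldsymbol{\Lambda}_{\{0\}}$ defined through~\eqref{equation-definition-T} has an empty $\mathrm{diag}(\cdot)$ block (its index set $\mc S^c$ is empty) and therefore equals the $n_0 \times n_0$ all-zero matrix.

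The one substantive step is then the cancellation for $\mc S = \{1\}$: with $\boldsymbol{\Lambda}_{\{0\}} = \dv 0$, the final determinant term in~\eqref{rate-exponent-region-gaussian-hypothesis-testing-against-conditional-independence} becomes $\log|\dv I + \dv\Sigma_{\dv x} \dv H_0^{\dagger} \dv\Sigma_0^{-1} \dv H_0|$, which exactly cancels the subtracted term $-\log|\dv I + \dv\Sigma_{\dv x} \dv H_0^{\dagger} \dv\Sigma_0^{-1} \dv H_0|$, leaving $E \leq R_1 + \log|\dv I - \dv\Omega_1 \dv\Sigma_1|$, namely the first bound in~\eqref{rate-exponent-region-one-encoder-testing-against-conditional-independence-vector-Gaussian-case}. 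Taking the union over $n_1 \times n_1$ matrices $\dv\Omega_1$ with $\dv 0 \preceq \dv\Omega_1 \preceq \dv\Sigma_1^{-1}$ then yields the stated region. I do not expect a genuine obstacle here; the only point requiring care is the degenerate reading of~\eqref{equation-definition-T} when $\mc S^c = \emptyset$, i.e., recognizing that the $\mathrm{diag}$ block is absent so that $\boldsymbol{\Lambda}_{\{0\}}$ collapses to the all-zero $n_0 \times n_0$ matrix, which is precisely what drives the cancellation.
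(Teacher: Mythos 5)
Your proposal is correct and follows exactly the paper's route: the paper gives no separate proof of this corollary, obtaining it as the immediate $K=1$ specialization of Theorem~2 (noting, as you do, that the Markov condition~\eqref{markov-chain-assumption-gaussian-hypothesis-testing-model} is vacuous in that case), and your enumeration of $\mc S=\emptyset$ and $\mc S=\{1\}$, including the degenerate reading $\boldsymbol{\Lambda}_{\{0\}}=\dv 0$ and the resulting cancellation of the $\log|\dv I + \dv\Sigma_{\dv x}\dv H_0^{\dagger}\dv\Sigma_0^{-1}\dv H_0|$ terms, checks out.
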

\noindent In particular, for the setting of testing against independence, i.e., $\dv Y_0=\emptyset$ and the detector's task reduced to guessing whether $\dv Y_1$ and $\dv X$ are independent or not, the optimal trade-off expressed by~\eqref{rate-exponent-region-one-encoder-testing-against-conditional-independence-vector-Gaussian-case} reduces to the set of $(R_1,E)$ pairs that satisfy, for some $n_1{\times}n_1$ matrix $\dv\Omega_1$ such that $\dv 0 \preceq \dv\Omega_1 \preceq \dv\Sigma_1^{-1}$,
\begin{equation}
E \leq \min \left\{R_1 + \log \left|\dv I - \dv\Omega_1 \dv\Sigma_1 \right|, \:\: \log \left| \dv I + \dv\Sigma_{\dv x}  \dv H_1^\dagger \dv\Omega_1 \dv H_1 \right| \right\}.
\label{optimal-rate-exponent-region-Gaussian-one-encoder-testing-against-independence}
\end{equation}
Observe that~\eqref{rate-exponent-region-one-encoder-testing-against-conditional-independence-vector-Gaussian-case} is the counter-part, to the vector Gaussian setting, of the result of~\cite[Theorem 3]{RW12} which provides a single-letter formula for the Type II error exponent for the one-encoder DM testing against conditional independence problem. Similarly,~\eqref{optimal-rate-exponent-region-Gaussian-one-encoder-testing-against-independence} is the solution of the vector Gaussian version of the one-encoder DM testing against independence problem which is studied, and solved, by Ahlswede and Csisz\'ar in~\cite[Theorem 2]{AC86}. 

\subsubsection{The $K$-encoder scalar Gaussian HT problem against independence}~\label{secIII_subsecB_subsubsection2}
Consider now the special case of the setup of Theorem~\ref{theorem-rate-exponent-region-gaussian-hypothesis-testing-against-conditional-independence} in which $K \geq 2$, $Y_0 = \emptyset$, and the sources and noises are all scalar complex-valued Gaussian, i.e., $n_x=1$ and $n_k=1$ for all $k \in \mc K$. The vector $(Y_1,\hdots,Y_K)$ and $X$ are correlated under the null hypothesis $H_0$ with
\begin{equation}
H_0 \: : Y_k = X + Z_k, \quad\text{for all}\:\: k \in \mc K\\
\label{distributions-under-null-hypothesis-ht-against-independence-scalar-Gaussian}
\end{equation}
The noises $Z_1,\hdots,Z_K$ are zero-mean jointly Gaussian, mutually independent and independent from $X$. Also, we assume that the variances $\sigma^2_k$ of noise $Z_k$, $k \in \mc K$, and $\sigma^2_X$ of $X$ are all non-negative. Under the alternate hypothesis $H_1$, the joint distribution of $(X,Y_1,\hdots,Y_K)$, denoted as $Q_{X,Y_1,\hdots,Y_K}$, factorizes as
\begin{equation}
H_1 \: :  Q_{X, Y_1,\hdots,Y_K} = Q_{X} Q_{Y_1,\hdots,Y_K}
\label{distributions-under-alternate-hypothesis-ht-against-conditional-independence-vector-Gaussian}
\end{equation}
where $Q_{X}=P_{X}$, i.e., complex Gaussian with zero-mean and variance $\sigma^2_X$ and $Q_{Y_1,\hdots,Y_K}=P_{Y_1,\hdots,Y_K}$ where $P_{Y_1,\hdots,Y_K}$ is the joint distribution of the vector $(Y_1,\hdots,Y_K)$ under $H_0$ as induced by~\eqref{distributions-under-null-hypothesis-ht-against-independence-scalar-Gaussian}.

\noindent In this case, the result of Theorem~\ref{theorem-rate-exponent-region-gaussian-hypothesis-testing-against-conditional-independence} reduces as stated in the following corollary.

\begin{corollary}~\label{corollary-rates-exponent-region-K-encoder-scalar-Gaussian-HT-against-independence}
For the $K$-encoder scalar Gaussian HT against independence problem described by~\eqref{distributions-under-null-hypothesis-ht-against-independence-scalar-Gaussian} and~\eqref{distributions-under-alternate-hypothesis-ht-against-conditional-independence-vector-Gaussian}, the rate-exponent region is given by the set of all non-negative tuples $(R_1,\hdots,R_K, E)$ that satisfy
\begin{align}
 \mc R_{\text{SG-HT}} = \Big\{  (R_1,\hdots,R_K,E)\::\: &\exists \: (\gamma_1,\hdots,\gamma_K) \in \mathbb{R}^K_{+} \:\: \text{s.t.} \nonumber\\
&\gamma_k \leq \frac{1}{\sigma^2_k},\:\forall k\in \mc K,\:\: \text{and} \:\: \forall \: \mc S \subseteq \mc K \nonumber\\
&\sum_{k \in \mc S} R_k \geq E + \log\Big[\Big(\Big(1+\sigma^2_X\sum_{k \in \mc S^c}\gamma_k\Big)\prod_{k\in \mc S}(1-\gamma_k \sigma^2_k)\Big)^{-1}\Big]\Big\}.
\label{rate-exponent-region-ht-against-independence-scalar-Gaussian}
\end{align}
\end{corollary}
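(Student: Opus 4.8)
The plan is to obtain Corollary~\ref{corollary-rates-exponent-region-K-encoder-scalar-Gaussian-HT-against-independence} as a direct specialization of the general vector Gaussian characterization in Theorem~\ref{theorem-rate-exponent-region-gaussian-hypothesis-testing-against-conditional-independence}; no new machinery is required, only a careful evaluation of the log-determinants in the scalar, independent-noise regime. First I would instantiate the generic quantities of Theorem~\ref{theorem-rate-exponent-region-gaussian-hypothesis-testing-against-conditional-independence} with the data of the present setup: since $n_x=1$ and $n_k=1$ for all $k$, every channel gain is $\dv H_k=1$, and we identify $\dv\Sigma_{\dv x}=\sigma_X^2$, $\dv\Sigma_k=\sigma_k^2$, and $\dv\Omega_k=\gamma_k$, with the admissible set $\dv 0\preceq\dv\Omega_k\preceq\dv\Sigma_k^{-1}$ becoming $0\leq\gamma_k\leq 1/\sigma_k^2$. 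The assumption $\dv Y_0=\emptyset$ (equivalently $\dv Z_0$ deterministic) has two effects: the middle term $\log|\dv I+\dv\Sigma_{\dv x}\dv H_0^{\dagger}\dv\Sigma_0^{-1}\dv H_0|$ drops to $0$, and the index set $\bar{\mc S}=\{0\}\cup\mc S^c$ collapses to $\mc S^c$. Moreover, once $\dv Z_0$ is absent the Markov condition~\eqref{markov-chain-assumption-gaussian-hypothesis-testing-model} degenerates to $\dv Z_{\mc S}$ being independent of $\dv Z_{\mc S^c}$ for every $\mc S$, i.e.\ exactly the mutual independence of the noises hypothesized here, so that $\dv\Sigma_{\dv n_{\bar{\mc S}}}=\mathrm{diag}(\{\sigma_k^2\}_{k\in\mc S^c})$.

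The only computation of substance is the last log-determinant. Because all matrices involved are now diagonal, I would evaluate $\boldsymbol{\Lambda}_{\bar{\mc S}}$ from~\eqref{equation-definition-T} as $\mathrm{diag}(\{\sigma_k^2(1-\gamma_k\sigma_k^2)\}_{k\in\mc S^c})$, whence $\boldsymbol{\Lambda}_{\bar{\mc S}}\dv\Sigma_{\dv n_{\bar{\mc S}}}^{-1}=\mathrm{diag}(\{1-\gamma_k\sigma_k^2\})$ and therefore $\dv I-\boldsymbol{\Lambda}_{\bar{\mc S}}\dv\Sigma_{\dv n_{\bar{\mc S}}}^{-1}=\mathrm{diag}(\{\gamma_k\sigma_k^2\})$. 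Multiplying on the left by $\dv\Sigma_{\dv n_{\bar{\mc S}}}^{-1}$ gives $\mathrm{diag}(\{\gamma_k\})$, and since $\dv H_{\bar{\mc S}}$ is the all-ones column vector the quadratic form $\dv H_{\bar{\mc S}}^{\dagger}(\cdots)\dv H_{\bar{\mc S}}$ collapses to the scalar $\sum_{k\in\mc S^c}\gamma_k$. Hence the last term equals $\log(1+\sigma_X^2\sum_{k\in\mc S^c}\gamma_k)$, while the first term is $\sum_{k\in\mc S}R_k+\sum_{k\in\mc S}\log(1-\gamma_k\sigma_k^2)$.

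Combining these pieces, the constraint of Theorem~\ref{theorem-rate-exponent-region-gaussian-hypothesis-testing-against-conditional-independence} reads $E\leq\sum_{k\in\mc S}R_k+\sum_{k\in\mc S}\log(1-\gamma_k\sigma_k^2)+\log(1+\sigma_X^2\sum_{k\in\mc S^c}\gamma_k)$. Transposing the two logarithmic terms to the left-hand side and rewriting the resulting sum of logarithms as the logarithm of a product then yields precisely the region~\eqref{rate-exponent-region-ht-against-independence-scalar-Gaussian}, with the $\gamma_k$ playing the role of the free parameters $\dv\Omega_k$ over their admissible range.

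There is no genuine obstacle here — the argument is pure bookkeeping — but the step that most warrants care is the evaluation of the final log-determinant: one must track that the index $0$ contributes nothing once $\dv Y_0=\emptyset$, and that the cancellation $1-(1-\gamma_k\sigma_k^2)=\gamma_k\sigma_k^2$ inside $\dv I-\boldsymbol{\Lambda}_{\bar{\mc S}}\dv\Sigma_{\dv n_{\bar{\mc S}}}^{-1}$ is what ultimately produces the clean scalar $\sum_{k\in\mc S^c}\gamma_k$. It is also worth confirming at the outset that the degenerate Markov condition~\eqref{markov-chain-assumption-gaussian-hypothesis-testing-model} indeed reproduces the independent-noise hypothesis, so that the specialization is legitimate rather than a strict sub-case.
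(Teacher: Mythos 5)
Your proposal is correct and follows exactly the route the paper intends: the paper states the corollary as a direct reduction of Theorem~\ref{theorem-rate-exponent-region-gaussian-hypothesis-testing-against-conditional-independence} without giving the computation, and your instantiation ($\dv H_k=1$, $\dv\Sigma_{\dv x}=\sigma_X^2$, $\dv\Sigma_k=\sigma_k^2$, $\dv\Omega_k=\gamma_k$, the $Y_0$-term and the index $0$ dropping out, and the diagonal evaluation yielding $\log(1+\sigma_X^2\sum_{k\in\mc S^c}\gamma_k)$) is precisely that reduction. The observation that the Markov condition~\eqref{markov-chain-assumption-gaussian-hypothesis-testing-model} degenerates to mutual independence of the noises when $\dv Z_0$ is absent is the right sanity check for the legitimacy of the specialization.
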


\noindent The region $\mc R_{\text{SG-HT}}$ as given by~\eqref{rate-exponent-region-ht-against-independence-scalar-Gaussian} can be used to, e.g., characterize the centralized rate region, i.e., the set of rate vectors $(R_1,\hdots,R_K)$ that achieve the centralized Type II error exponent 
\begin{equation}
I(Y_1,\hdots,Y_K;X) = \sum_{k=1}^K \log \frac{\sigma^2_X}{\sigma^2_k}.
\end{equation}

 We close this section by mentioning that, as it can be seen from the proof of Theorem~\ref{theorem-rate-exponent-region-gaussian-hypothesis-testing-against-conditional-independence}, the Quantize-Bin-Test scheme of~\cite[Theorem 1]{RW12} evaluated with Gaussian test channels and no time-sharing is optimal for the vector Gaussian $K$-encoder hypothesis testing against conditional independence problem described by~\eqref{distributions-under-null-hypothesis-ht-against-conditional-independence-vector-Gaussian} and~\eqref{distributions-under-alternate-hypothesis-ht-against-conditional-independence-vector-Gaussian}. Furthermore, we note that Rahman and Wagner also characterized the optimal rate-exponent region of a different\footnote{This problem is related to the Gaussian many-help-one problem~\cite{O05,PTR04,WTV08}. Here, different from the setup of Figure~\ref{fig-distributed-hypothesis-testing}, the source $X$ is observed directly by a \textit{main encoder} who communicates with a detector that observes $Y$ in the aim of making a decision on whether $X$ and $Y$ are independent or not. Also, there are helpers that observe independent noisy versions of $X$ and communicate with the detector in the aim of facilitating that test.} Gaussian hypothesis testing against independence problem, called the Gaussian many-help-one hypothesis testing against independence problem therein, in the case of scalar valued sources~\cite[Theorem 7]{RW12}. Specialized to the case $K=1$, the result of Theorem~\ref{theorem-rate-exponent-region-gaussian-hypothesis-testing-against-conditional-independence} recovers that of~\cite[Theorem 7]{RW12} in the case of no helpers; and extends it to vector-valued sources and testing against conditional independence in that case.

\vspace{-0.2cm}

\section{Testing Under Gaussian Noise: Dual Roles of Power and Entropy Power}~\label{secIV}

\vspace{-0.2cm}

In this section, we broaden our view to study a generalization of the $K$-encoder scalar Gaussian HT against independence problem described by~\eqref{distributions-under-null-hypothesis-ht-against-independence-scalar-Gaussian} and~\eqref{distributions-under-alternate-hypothesis-ht-against-conditional-independence-vector-Gaussian} in which the sensors' observations $(Y_1,\hdots,Y_K)$ under the null hypothesis are still independent noisy versions of $X$, with Gaussian noises, but $X$ itself is an arbitrary continuous memoryless source. In particular, $X$ is not necessarily Gaussian. Throughout this section we assume that $X$ has density $P_X(x)$ (not necessarily Gaussian) which has finite differential entropy $h(X)$, variance $\sigma^2_X$ and non-zero finite entropy power
\begin{equation}
N(X) = \frac{e^{2h(X)}}{2{\pi}e}.
\label{definition-entropy-power}
\end{equation}
\noindent Specifically,  $X$ and $(Y_1,\hdots,Y_K)$ are correlated under the null hypothesis $H_0$  with
\begin{equation}
H_0 \: : Y_k = X + Z_k, \quad\text{for}\:\: k=1,\hdots,K
\label{distribution-under-null-hypothesis-ht-against-independence-2}
\end{equation}
where the noise $Z_k$ is zero-mean Gaussian with variance $\sigma_k^2$ and is independent from all other noises and from $X$; and they are independent under $H_1$ with their joint distribution given by
\begin{equation} 
H_1 \: :  Q_{X,Y_1,\hdots,Y_K} = P_X P_{Y_1,\hdots,Y_K}
\label{distribution-under-alternate-hypothesis-ht-against-independence-2}
\end{equation}
where $P_X$ is the distribution of $X$ under $H_0$ (not necessarily Gaussian!) and $P_{Y_1,\hdots,Y_K}$ is the joint distribution of $(Y_1,\hdots,Y_K)$ under $H_0$ as induced by~\eqref{distribution-under-null-hypothesis-ht-against-independence-2}.

\noindent In this section sometimes we will be interested in the sum-rate exponent function, which is defined as
\begin{equation}
R_{\text{sum}}(E) = \min_{(R_1,\hdots,R_K,E)\: \in \:  \mc R_{\text{HT},\epsilon}} \sum_{k=1}^K R_k.
\end{equation}

\noindent Throughout it will be convenient to use the following shorthand notation. For any non-empty subset $\mc S \subseteq \mc K$ the sufficient statistic for $X$ given $\{Y_k\}_{k \in \mc S}$ is given by
\begin{subequations}
\begin{align}
Y(\mc S) &= \frac{1}{|\mc S|} \sum_{k \in \mc S} \frac{\sigma^2_{\mc S}}{\sigma^2_k} Y_k \\
&= X + Z(\mc S)
\end{align}
\label{definition-sufficient-statistic}
\end{subequations}
where
\begin{equation}
Z(\mc S) = \frac{1}{|\mc S|} \sum_{k \in \mc S} \frac{\sigma^2_{\mc S}}{\sigma^2_k} Z_k
\label{harmonic-mean-noises}
\end{equation}
is a zero-mean Gaussian random variable of variance $\sigma^2_{\mc S}/|\mc S|$, and $\sigma^2_{\mc S}$ denotes the harmonic mean of the noise variances in the set $\mc S$, given by
\begin{equation}
\sigma^2_{\mc S} = \left(\frac{1}{|\mc S|} \sum_{k \in \mc S} \frac{1}{\sigma^2_k}\right)^{-1}.
\label{harmonic-mean-variances}
\end{equation}
For the special case of empty set $\mc S =\emptyset$, we set $Y(\mc \emptyset)=Z(\emptyset)=\text{constant}$.

In the rest of this section, we will develop bounds on the rate-exponent region of this model which exhibit a pleasant duality between power and entropy power. Bounds of the same kind of duality were already observed in the context of source coding under the classic squared error distortion measure for point-to-point~\cite[p. 338]{CT06} and multiterminal CEO~\cite{EG19} settings. The recent work~\cite{SV21} is somewhat related, but to a lesser extent. 

\vspace{-0.2cm}

\subsection{Special Case $K=1$}~\label{secIV_subsecA}
Set $K=1$ in~\eqref{distribution-under-null-hypothesis-ht-against-independence-2} and~\eqref{distribution-under-alternate-hypothesis-ht-against-independence-2}. For notational convenience, we use the substitutions $Y=Y_1$, $Z=Z_1$ and $\sigma^2_Z=\sigma^2_1$. First let us recall that for a given non-negative rate $R$ the optimal rate exponent is given by~\cite[Theorem 2]{AC86}
\vspace{-0.3cm}
\begin{equation}
E(R) = \max_{P_{U|Y} \::\: I(U;Y) \: \leq R} \:\: I(U;X).
\label{optimal-exponent-rate-function-point-to-point-model}
\end{equation}
It is rather easy to see that a simple lower bound on the exponent-rate function is given by
\begin{equation}
E(R) \geq \frac{1}{2} \log^+ \left(\frac{\sigma^2_Y}{\sigma^2_Xe^{-2R}+ \sigma^2_Z}\right).
\label{lower-bound-EPI-exponent-rate-function-point-to-point-model}
\end{equation}
This can be obtained by evaluating the right hand side (RHS) of~\eqref{optimal-exponent-rate-function-point-to-point-model} using the choice of auxiliary
\begin{equation}
U = Y + V
\end{equation}
where V is zero-mean Gaussian with variance
\begin{equation}
\sigma^2_V = \frac{\sigma^2_X+\sigma^2_Z}{e^{2R}-1}
\end{equation}
and is independent from $(X,Z)$.

\noindent Also, it can be shown (see Appendix~\ref{appendix-proof-upper-bound-EPI-exponent-rate-function-point-to-point-model}) that 
\begin{equation}
E(R) \leq \frac{1}{2} \log^+ \left(\frac{N(Y)}{N(X)e^{-2R}+ \sigma^2_Z}\right).
\label{upper-bound-EPI-exponent-rate-function-point-to-point-model}
\end{equation}
Part of the appeal of these bounds is the interesting duality that is played by the source power and its entropy power. Also, this directly implies their tightness in the special case in which the source $X$ is Gaussian since power and entropy power are equal in that case. Moreover, the above also implies that among all sources with the same variance (power) the Gaussian is the worst (i.e., has the smallest 
Type-II error exponent for given $R$). Conversely, among all sources with the same entropy power the Gaussian is the best (i.e., has the largest Type-II error exponent for given $R$). 


\vspace{-0.2cm}

\subsection{Upper Bound}~\label{secIV_subsecB}

\vspace{-0.4cm}

\noindent We now turn to the $K$-encoder test described by~\eqref{distribution-under-null-hypothesis-ht-against-independence-2} and~\eqref{distribution-under-alternate-hypothesis-ht-against-independence-2}. The main result of this section is an upper bound on the exponent-rate function  for an arbitrary continuous source $X$ with finite differential entropy. Its strength is in that a direct consequence of it (Corollary~\ref{corollary-lower-bound-sum-rate-exponent-function-distributed-setting} below) is shown to reflect the right behavior as a function of the number of observations/sensors.

\noindent Recall the definition of the sufficient statistic $Y(\mc S)$ for $X$ given $Y_{\mc S}=\{Y_k\}_{k \in \mc S}$ as given by~\eqref{definition-sufficient-statistic} for given $\mc S \subseteq \mc K$.

\begin{theorem}~\label{theorem-upper-bound-rate-exponent-function-distributed-setting}
If a rate-exponent tuple $(R_1,\ldots,R_K,E)$ is achievable, i.e., $(R_1,\ldots,R_K,E) \in \mc R_{\text{HT}}$, then there must exist non-negative real numbers $(\gamma_1,\hdots,\gamma_K)$ with $\gamma_k \leq 1/\sigma^2_k$ for all $k \in \mc K$ such that for all (strict) subsets $\mc S \subset \mc K$, we have
\begin{align}
E &\leq \frac{1}{2} \log\left(|\mc S^c|\frac{N(Y(\mc S^c))}{\sigma^2_{\mc S^c}} - N(X)\sum_{k \in \mc S^c}   \left(\frac{1}{\sigma^2_k} - \gamma_k\right)\right) \nonumber\\
& \qquad \qquad + \sum_{k \in \mc S} \left(R_k - \frac{1}{2}\log \frac{1}{1-\gamma_k \sigma^2_k}\right);
\label{upper-bound-rate-exponent-function-distributed-setting-strict-subset}
\end{align}
and for the full set $\mc S = \mc K$ we have 
\begin{equation}
E \leq \sum_{k=1}^K \left(R_k - \frac{1}{2}\log \frac{1}{1-\gamma_k \sigma^2_k}\right),
\label{upper-bound-rate-exponent-function-distributed-setting-full-set}
\end{equation}
where $Y(\mc S^c)$ and $\sigma^2_{\mc S^c}$ are defined using~\eqref{definition-sufficient-statistic} and~\eqref{harmonic-mean-variances} respectively.
\end{theorem}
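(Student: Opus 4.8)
The plan is to start from the entropy characterization of Proposition~\ref{proposition-entropy-characterization-rate-exponent-region-distributed-ht-against-conditional-independence}, specialized to $Y_0=\emptyset$. Writing $M_k={\phi}^{(n)}_k(Y^n_k)$, any achievable tuple satisfies $nR_k\ge I(M_k;Y^n_k)$ and $nE\le I(M_{\mc K};X^n)$. The structural facts I would exploit are that $M_k-Y^n_k-X^n$ is Markov for each $k$, and that, because $Y_0=\emptyset$ and the $Y_k=X+Z_k$ have mutually independent noises, the messages $M_1,\dots,M_K$ are mutually independent given $X^n$. I would first fix the single-letter parameters $\gamma_k$ by the per-encoder identity
\begin{equation*}
\tfrac1n I(M_k;Y^n_k\,|\,X^n)=\tfrac12\log\tfrac{1}{1-\gamma_k\sigma^2_k},
\end{equation*}
which, since $h(Y^n_k|X^n)=h(Z^n_k)$ and conditioning only reduces entropy, makes $0\le\gamma_k\le 1/\sigma^2_k$ automatic; equivalently $\gamma_k$ records the conditional entropy power $\tfrac1n h(Y^n_k|X^n,M_k)=\tfrac12\log\big(2\pi e\,\sigma^2_k(1-\gamma_k\sigma^2_k)\big)$.

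The two ``additive'' terms then come out almost immediately. Using $M_k-Y^n_k-X^n$ I would write $I(M_k;Y^n_k)=I(M_k;X^n)+I(M_k;Y^n_k|X^n)$, so that $I(M_k;X^n)\le n\big(R_k-\tfrac12\log\tfrac{1}{1-\gamma_k\sigma^2_k}\big)$. Conditional independence of the messages given $X^n$ gives $H(M_{\mc K}|X^n)=\sum_k H(M_k|X^n)$ and hence $I(M_{\mc K};X^n)\le\sum_k I(M_k;X^n)$, which yields the full-set inequality~\eqref{upper-bound-rate-exponent-function-distributed-setting-full-set}. For a strict subset I would split $I(M_{\mc K};X^n)=I(M_{\mc S};X^n|M_{\mc S^c})+I(M_{\mc S^c};X^n)$ and bound the first summand, again by conditional independence, by $\sum_{k\in\mc S}I(M_k;X^n)$, producing the rate part $\sum_{k\in\mc S}\big(R_k-\tfrac12\log\tfrac{1}{1-\gamma_k\sigma^2_k}\big)$ of~\eqref{upper-bound-rate-exponent-function-distributed-setting-strict-subset}.

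Everything then reduces to upper bounding $\tfrac1n I(M_{\mc S^c};X^n)$ by the entropy-power term. Here I would first discard what is irrelevant: since $Y(\mc S^c)$ is the precision-weighted average it is a sufficient statistic, so $X^n-Y(\mc S^c)^n-M_{\mc S^c}$ is Markov and data processing gives $I(M_{\mc S^c};X^n)\le I(M_{\mc S^c};Y(\mc S^c)^n)=n\,h(Y(\mc S^c))-h\big(Y(\mc S^c)^n|M_{\mc S^c}\big)$, which already exposes $N(Y(\mc S^c))$ through $h(Y(\mc S^c))=\tfrac12\log(2\pi e\,N(Y(\mc S^c)))$. It then remains to prove the matching lower bound on the conditional entropy $\tfrac1n h\big(Y(\mc S^c)^n|M_{\mc S^c}\big)$. \textbf{This is the step I expect to be the main obstacle:} because $X$ is not Gaussian the conditional entropies cannot be evaluated, and the messages couple the $n$ coordinates, so one cannot single-letterize directly. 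I would attack it with the de~Bruijn identity --- adding a vanishing amount of fresh Gaussian noise so that the conditional Fisher information of the channel output given $M_{\mc S^c}$ is well defined --- and bound that Fisher information by aggregating the per-encoder contributions, which add precisely because $M_1,\dots,M_K$ are conditionally independent given $X^n$ and each contributes at most $\gamma_k$; integrating the de~Bruijn relation and invoking the entropy power inequality then converts the Fisher-information bookkeeping into the stated combination of $N(X)$ and $N(Y(\mc S^c))$.

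Two points deserve care. First, the entropy power inequality is what forces the source \emph{entropy power} $N(X)$ (rather than its variance) into the bound, and the $K=1$ specialization should reproduce exactly~\eqref{upper-bound-EPI-exponent-rate-function-point-to-point-model}: optimizing the resulting two constraints over $\gamma_1$ (they meet at the optimum) gives $e^{2E}=N(Y)/(N(X)e^{-2R}+\sigma^2_Z)$, a useful consistency check on the constants and on the joint appearance of $N(X)$ and $N(Y(\mc S^c))$. Second, the whole argument produces bounds on the per-block averages $\tfrac1n I(\cdot)$ directly in terms of the scalars $\gamma_k$ and the single-letter entropy powers $N(X)$ and $N(Y(\mc S^c))$ (the latter being single-letter because $X^n$ and $Y(\mc S^c)^n$ are i.i.d.), so no auxiliary time-sharing variable is needed and the statement is already in single-letter form.
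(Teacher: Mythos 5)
Your decomposition of the exponent is the right one and, modulo the level at which it is carried out, matches the paper's: the full-set inequality and the rate terms $\sum_{k\in\mc S}\big(R_k-\tfrac12\log\tfrac{1}{1-\gamma_k\sigma_k^2}\big)$ follow exactly as you say from the Markov chains $M_k \mkv Y_k^n \mkv X^n$ and the conditional independence of the messages given $X^n$, and your choice of $\gamma_k$ is the same parametrization the paper uses in~\eqref{parametrization-proof-upper-bound} (there at the single-letter level). The gap is precisely at the step you yourself flag: the bound $\tfrac1n I(M_{\mc S^c};X^n)\le \tfrac12\log\big(|\mc S^c|N(Y(\mc S^c))/\sigma^2_{\mc S^c}-N(X)\sum_{k\in\mc S^c}(1/\sigma_k^2-\gamma_k)\big)$ is the entire content of the theorem, and your sketch does not establish it. The engine the paper uses is Courtade's \emph{strong} entropy power inequality~\cite[Corollary 2]{C18}: for $Y=X+Z$ with $Z$ Gaussian independent of $X$ and any $U$ with $U\mkv Y\mkv X$, one has $e^{2[h(Y)-I(X;U)]}\ge e^{2[h(X)-I(U;Y)]}+e^{2h(Z)}$; this is applied to the sufficient statistic $Y(\mc S^c)=X+Z(\mc S^c)$ with $U=U_{\mc S^c}$, and is followed by a second, classical EPI applied to $h\big(\tfrac{1}{|\mc S^c|}\sum_{k}\tfrac{\sigma^2_{\mc S^c}}{\sigma_k^2}Y_k\,\big|\,U_{\mc S^c},X\big)$ to split the conditional entropy into the per-encoder terms $h(Z_k)-I(U_k;Y_k|X)$ that become the $\gamma_k$'s. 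It is the strong EPI --- not the classical one, and not a generic de~Bruijn/Fisher-information computation --- that makes $N(X)$ appear multiplied by the factor that becomes $\sum_{k\in\mc S^c}(1/\sigma_k^2-\gamma_k)$; the coupling between $I(X;U)$ and $I(U;Y)$ in that inequality is exactly what standard de~Bruijn manipulations do not deliver, so ``aggregating per-encoder Fisher-information contributions and integrating'' is a restatement of the difficulty rather than a resolution of it.

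There is also a structural difference that adds to your burden: the paper never works at the multi-letter level here. It starts from the single-letter characterization of Theorem~\ref{theorem-rate-exponent-region-hypothesis-testing-DM-case} (specialized to $Y_0=\emptyset$), so the strong EPI is applied to genuine single-letter auxiliaries $U_k$ satisfying $U_k\mkv Y_k\mkv X\mkv Y_{\mc K\setminus k}\mkv U_{\mc K\setminus k}$ given $Q$. Starting instead from Proposition~\ref{proposition-entropy-characterization-rate-exponent-region-distributed-ht-against-conditional-independence} with blocklength-$n$ messages $M_k$, you additionally owe a single-letterization of the entropy-power quantities (the conditional law of $Y(\mc S^c)^n$ given $M_{\mc S^c}$ is not a product measure), which your sketch also leaves open. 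The cleanest repair is to adopt the paper's route: invoke Theorem~\ref{theorem-rate-exponent-region-hypothesis-testing-DM-case}, set $\gamma_k=\sigma_k^{-2}\big(1-e^{-2I(U_k;Y_k|X,Q)}\big)$, and prove the key bound on $I(U_{\mc S^c};X|Q)$ via the conditional strong EPI together with one application of the classical EPI to the conditionally independent observations.
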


\begin{proof}
The proof of Theorem~\ref{theorem-upper-bound-rate-exponent-function-distributed-setting} appears in Appendix~\ref{appendix-proof-theorem-upper-bound-rate-exponent-function-distributed-setting}.  
\end{proof}

\begin{remark}
A simple entropy power inequality argument can be used to show that the term inside the logarithm in the RHS of~\eqref{upper-bound-rate-exponent-function-distributed-setting-strict-subset} is guaranteed to be larger than $1$ for all non-negative choices of $(\gamma_1,\hdots,\gamma_K)$ that satisfy $ 0 \leq \gamma_k \leq 1/\sigma^2_k$ for all $k \in \mc K$, making the expression well defined. 
\end{remark}

We now state the next corollary which provides a lower bound on the exponent-rate function for an arbitrary continuous source $X$ with finite differential entropy, and whose proof, omitted here for brevity, can be obtained easily from that of the direct part of Theorem~\ref{theorem-rate-exponent-region-gaussian-hypothesis-testing-against-conditional-independence} (for instance, see Eq.~\eqref{rate-exponent-region-ht-against-independence-scalar-Gaussian} of Corollary~\ref{corollary-rates-exponent-region-K-encoder-scalar-Gaussian-HT-against-independence}). In fact, while the result of Theorem~\ref{theorem-rate-exponent-region-gaussian-hypothesis-testing-against-conditional-independence} pertains to the case of jointly Gaussian $(X,Y_1,\hdots,Y_K)$, the key argument of its direct part is the Markov lemma~\cite{CT06} whose proof only uses the fact that conditioned on the source sequence $X^n$ the noisy observations $\{Y_k\}_{k=1}^K$ and the auxiliaries $\{U_k\}_{k=1}^K$ are Gaussian. Clearly, this still holds here even though $X$ is not necessarily Gaussian.

\begin{corollary}~\label{corollary-lower-bound-rate-exponent-function-distributed-setting}
If there exist non-negative real numbers $(\gamma_1,\hdots,\gamma_K)$ with $\gamma_k \leq 1/\sigma^2_k$ for all $k \in \mc K$ such that for all (strict) subsets $\mc S \subset \mc K$, we have
\begin{align}
E &\geq \frac{1}{2} \log\left(|\mc S^c|\frac{\sigma^2_{Y(\mc S^c)}}{\sigma^2_{\mc S^c}} - \sigma^2_X\sum_{k \in \mc S^c}   \left(\frac{1}{\sigma^2_k} - \gamma_k\right)\right) \nonumber\\
& \qquad \qquad + \sum_{k \in \mc S} \left(R_k - \frac{1}{2}\log \frac{1}{1-\gamma_k \sigma^2_k}\right)
\label{lower-bound-rate-exponent-function-distributed-setting}
\end{align}
and for the full set $\mc S = \mc K$ we have 
\begin{equation}
E \geq \sum_{k=1}^K \left(R_k - \frac{1}{2}\log \frac{1}{1-\gamma_k \sigma^2_k}\right),
\end{equation}
then the tuple $(R_1,\ldots,R_K,E)$ is achievable, i.e., $(R_1,\ldots,R_K,E) \in \mc R_{\text{HT}}$, where $Y(\mc S^c)$ and $\sigma^2_{\mc S^c}$ are defined using~\eqref{definition-sufficient-statistic} and~\eqref{harmonic-mean-variances} respectively.
\end{corollary}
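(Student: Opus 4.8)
The plan is to obtain the claimed region as the achievable rate-exponent region of the Quantize-Bin-Test scheme of~\cite[Theorem 1]{RW12}, specialized to $Y_0=\emptyset$ and evaluated with Gaussian test channels and no time sharing, exactly as in the direct part of Theorem~\ref{theorem-rate-exponent-region-gaussian-hypothesis-testing-against-conditional-independence}. First I would invoke the single-letter characterization of Theorem~\ref{theorem-rate-exponent-region-hypothesis-testing-DM-case}, which, as already noted there, extends to the present continuous-alphabet setting through standard discretization arguments. For $Y_0=\emptyset$ this gives that $(R_1,\ldots,R_K,E)$ is achievable whenever there exist auxiliaries $(U_1,\ldots,U_K,Q)$ with the Markov structure~\eqref{joint-measure-theorem-rate-exponent-region-hypothesis-testing-DM-case} such that, for every $\mc S\subseteq\mc K$,
\begin{equation*}
E \le I(U_{\mc S^c};X\mid Q) + \sum_{k\in\mc S}\big(R_k - I(Y_k;U_k\mid X,Q)\big).
\end{equation*}
I would then set $Q=\emptyset$ and choose, for each $k$, the Gaussian test channel $U_k=Y_k+V_k$ with $V_k$ zero-mean Gaussian, independent of everything else, whose variance is selected so that the induced parameter equals $\gamma_k$ (the scalar specialization of the choice $P_{U_k|Y_k}$ used in the direct part of Theorem~\ref{theorem-rate-exponent-region-gaussian-hypothesis-testing-against-conditional-independence}). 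Under the model~\eqref{distribution-under-null-hypothesis-ht-against-independence-2} this amounts to the degraded observation $U_k=X+N_k$ with $N_k$ zero-mean Gaussian of variance $1/\gamma_k$.

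The crucial point, and the reason the argument survives the passage to a non-Gaussian source, is that the coding-theoretic analysis underlying the Quantize-Bin-Test scheme (Berger--Tung covering followed by binning) uses the source law only through the Markov lemma~\cite{CT06}. As stressed in the discussion preceding the corollary, the proof of that lemma in this setting relies \emph{solely} on the fact that, conditioned on the source string $X^n$, the pairs $(Y_k^n,U_k^n)$ are Gaussian; since $Y_k=X+Z_k$ and $U_k=Y_k+V_k$ with $Z_k,V_k$ Gaussian, this conditional Gaussianity holds verbatim here even though the law of $X$ is arbitrary. Hence the scheme achieves the displayed region with the chosen Gaussian test channels regardless of the distribution of $X$.

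It then remains to evaluate the two families of information terms. The rate-loss terms $I(Y_k;U_k\mid X)$ I would compute exactly: conditioned on $X$, both $Y_k$ and $U_k$ are Gaussian, so this quantity is insensitive to the source law and evaluates to $\frac12\log\frac{1}{1-\gamma_k\sigma_k^2}$, just as in the jointly Gaussian case. For the exponent term I would reduce $I(U_{\mc S^c};X)$ using that the weighted combination $Y(\mc S^c)$ of~\eqref{definition-sufficient-statistic} is a sufficient statistic for $X$ from the independent Gaussian-noise observations $\{U_k\}_{k\in\mc S^c}$, and then perform the evaluation that produces the term $\frac12\log\big(|\mc S^c|\frac{\sigma^2_{Y(\mc S^c)}}{\sigma^2_{\mc S^c}} - \sigma^2_X\sum_{k\in\mc S^c}(\frac{1}{\sigma_k^2}-\gamma_k)\big)$; the full-set constraint $\mc S=\mc K$ has an empty exponent term and collapses to $E\le\sum_{k}(R_k-\frac12\log\frac{1}{1-\gamma_k\sigma_k^2})$. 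Using $\sigma^2_{Y(\mc S^c)}=\sigma^2_X+\sigma^2_{\mc S^c}/|\mc S^c|$ together with $|\mc S^c|/\sigma^2_{\mc S^c}=\sum_{k\in\mc S^c}1/\sigma_k^2$ rewrites $|\mc S^c|\sigma^2_{Y(\mc S^c)}/\sigma^2_{\mc S^c}-\sigma^2_X\sum_{k\in\mc S^c}(\frac{1}{\sigma_k^2}-\gamma_k)$ as $1+\sigma^2_X\sum_{k\in\mc S^c}\gamma_k$, in agreement with the scalar evaluation in~\eqref{rate-exponent-region-ht-against-independence-scalar-Gaussian}.

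The main obstacle is the content of the second paragraph: making rigorous that the achievability carries over to an arbitrary continuous source and, in particular, correctly identifying which exponent the Gaussian test channels deliver for such a source. All the genuine work lies in checking that the only place the source distribution enters the covering/binning error exponents is through the conditionally Gaussian Markov lemma, so that the region may be evaluated as in the jointly Gaussian case; the remaining algebra is routine and parallels the computation already carried out for Corollary~\ref{corollary-rates-exponent-region-K-encoder-scalar-Gaussian-HT-against-independence}.
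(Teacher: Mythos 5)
Your route is the one the paper itself sketches (the proof is omitted there, with a pointer to the direct part of Theorem~\ref{theorem-rate-exponent-region-gaussian-hypothesis-testing-against-conditional-independence} and the remark that the Markov lemma only needs Gaussianity conditioned on $X^n$), and two of its three ingredients are sound: the portability of the Quantize-Bin-Test analysis to a non-Gaussian source via the conditionally Gaussian Markov lemma, and the exact evaluation $I(Y_k;U_k|X)=\tfrac12\log\tfrac{1}{1-\gamma_k\sigma_k^2}$, which indeed does not depend on the law of $X$. (A minor point: you work throughout with the constraints in the form $E\le(\cdot)$, whereas the corollary as displayed writes $E\ge(\cdot)$; the former is evidently what is meant, given how the corollary is used in Section~\ref{secIV_subsecC}, but you should say so explicitly.)

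The genuine gap is in the step you yourself flag as ``the main obstacle'': identifying which exponent the Gaussian test channels deliver. With $U_k=X+N_k$, $N_k\sim\mc N(0,1/\gamma_k)$ mutually independent and independent of $X$, sufficiency reduces $I(U_{\mc S^c};X)$ to $I(X;X+\tilde N)=h(X+\tilde N)-h(\tilde N)$ with $\tilde N$ Gaussian of variance $(\sum_{k\in\mc S^c}\gamma_k)^{-1}$. By the maximum-entropy property of the Gaussian, $h(X+\tilde N)\le\tfrac12\log\left(2\pi e\left(\sigma_X^2+\mathrm{Var}(\tilde N)\right)\right)$, hence
$I(U_{\mc S^c};X)\le\tfrac12\log\left(1+\sigma_X^2\sum_{k\in\mc S^c}\gamma_k\right)$
with equality if and only if $X$ is Gaussian. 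So the evaluation does not ``produce'' the variance-based term of~\eqref{lower-bound-rate-exponent-function-distributed-setting}; it produces the strictly smaller quantity $\tfrac12\log\left(N(X+\tilde N)\sum_{k\in\mc S^c}\gamma_k\right)$, and your inequality goes the wrong way for achievability. Nor can a cleverer choice of test channels repair this: taking $\mc S=\emptyset$ and $\gamma_k=1/\sigma_k^2$, the claimed achievable exponent is $\tfrac12\log\left(1+\sigma_X^2\sum_k\sigma_k^{-2}\right)$, which for non-Gaussian $X$ strictly exceeds even the centralized (infinite-rate) exponent $I(X;Y_1,\dots,Y_K)=h(Y(\mc K))-h(Z(\mc K))$. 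The difficulty is inherited from the paper's own one-line justification (and is already latent in the $K=1$ bound~\eqref{lower-bound-EPI-exponent-rate-function-point-to-point-model}): what the Gaussian test channels actually establish for an arbitrary source is the region in which the variance terms are replaced by the corresponding entropy powers of $X+\tilde N$, not the region as stated. Your write-up needs either to prove the missing inequality $I(U_{\mc S^c};X)\ge\tfrac12\log(1+\sigma_X^2\sum_{k\in\mc S^c}\gamma_k)$ for some admissible test channels --- which the above shows is impossible --- or to weaken the claimed region accordingly.
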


\noindent Investigating the above bounds of Theorem~\ref{theorem-upper-bound-rate-exponent-function-distributed-setting} and Corollary~\ref{corollary-lower-bound-rate-exponent-function-distributed-setting}, it is interesting to observe a pleasant duality, in the sense that the power (variance) terms of the lower bound are replaced by entropy power terms (note that $\sigma^2_{Y(\mc S^c)}=\sigma^2_X+\sigma^2_{Z(\mc S^c)}$ but we prefer to use the form given in the theorem so as to emphasize such duality). Among other aspects, this directly implies tightness of the bounds in the special case in which the source $X$ is Gaussian; thus providing an alternative proof of Theorem~\ref{theorem-rate-exponent-region-gaussian-hypothesis-testing-against-conditional-independence} in the special case of testing against independence and scalar Gaussian sources. Furthermore, similar to the single-sensor setting of Section~\ref{secIV_subsecA}, the bounds also imply that for given entropy power the Gaussian is the best distribution and for given power the Gaussian is the worst distribution.

\vspace{-0.2cm}

\newpage

\subsection{Sum-Rate Exponent Function}~\label{secIV_subsecC}

\vspace{-0.2cm}

\noindent For simplicity, we set all the noise variances to be equal, i.e., $\sigma^2_k=\sigma^2_Z$ for all $k \in \mc K$. Note that in this case, the harmonic mean of the noise variances as defined by~\eqref{harmonic-mean-variances} is $\sigma^2_{\mc S}=\sigma^2_Z$ for all $\mc S \subseteq \mc K$. Using Theorem~\ref{theorem-upper-bound-rate-exponent-function-distributed-setting}, we have 
\begin{align}
& \sum_{k=1}^K R_k \: \stackrel{(a)}{\geq} E + \frac{1}{2} \log \prod_{k=1}^K \frac{1}{1-\gamma_k \sigma^2_Z} \\
&\qquad \stackrel{(b)}{\geq} E - \frac{K}{2} \log \frac{1}{K} \sum_{k=1}^K \left(1-\gamma_k \sigma^2_Z\right) \\
&\qquad \stackrel{(c)}{\geq} E + \frac{K}{2} \log \frac{K N(X)}{\sigma^2_Z}\left(\frac{K N(Y(\mc K))}{\sigma^2_{\mc K}} -e^{2E} \right)^{-1}
\label{lower-bound-sum-rate-exponent-function}
\end{align}
where: $(a)$ follows by~\eqref{upper-bound-rate-exponent-function-distributed-setting-full-set}, $(b)$ follows by using Jensen's inequality, and $(c)$ follows by applying~\eqref{upper-bound-rate-exponent-function-distributed-setting-strict-subset} for $\mc S=\emptyset$. 
 
\noindent The result of the next corollary follows directly from~\eqref{lower-bound-sum-rate-exponent-function}.

\begin{corollary}~\label{corollary-lower-bound-sum-rate-exponent-function-distributed-setting}
If a sum-rate exponent pair $(R_{\text{sum}},E)$ is achievable, i.e., $(R_1,\ldots,R_K,E) \in \mc R_{\text{HT}}$ with $(R_1+\hdots+R_K)=R_{\text{sum}}$, then the following holds,
\begin{equation}
R_{\text{sum}} \geq E + \frac{K}{2} \log^{+} \left(\frac{KN(X)}{KN(Y(\mc K)) -\sigma^2_Z e^{2E}}\right)
\label{lower-bound-sum-rate-exponent-function-distributed-setting}
\end{equation}
for $E$ for which $\sigma^2_Z e^{2E} < KN(Y(\mc K))$, where $Y(\mc K)$ is defined using~\eqref{definition-sufficient-statistic}.
\end{corollary}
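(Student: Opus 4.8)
The plan is to read the bound straight off Theorem~\ref{theorem-upper-bound-rate-exponent-function-distributed-setting} after specializing it to equal noise variances $\sigma^2_k=\sigma^2_Z$, for which every harmonic mean collapses to $\sigma^2_{\mc S}=\sigma^2_Z$. For an achievable tuple $(R_1,\hdots,R_K,E)$ the theorem hands us non-negative numbers $(\gamma_1,\hdots,\gamma_K)$ with $\gamma_k\leq 1/\sigma^2_Z$ that satisfy \emph{all} the subset inequalities \emph{simultaneously}, and I would keep only two of them: the full-set constraint~\eqref{upper-bound-rate-exponent-function-distributed-setting-full-set} and the $\mc S=\emptyset$ instance of~\eqref{upper-bound-rate-exponent-function-distributed-setting-strict-subset}. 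The entire task is then to eliminate the parameters $\gamma_k$ between these two so that the final inequality involves only $R_{\text{sum}}=\sum_k R_k$, $E$, and the source statistics.

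First I would rewrite~\eqref{upper-bound-rate-exponent-function-distributed-setting-full-set} as $R_{\text{sum}}\geq E+\tfrac12\sum_k\log\tfrac{1}{1-\gamma_k\sigma^2_Z}$ and apply Jensen's inequality to the concave logarithm to get $\tfrac12\sum_k\log\tfrac{1}{1-\gamma_k\sigma^2_Z}\geq -\tfrac{K}{2}\log\big(\tfrac1K\sum_k(1-\gamma_k\sigma^2_Z)\big)$; this reproduces steps $(a)$--$(b)$ of the displayed chain culminating in~\eqref{lower-bound-sum-rate-exponent-function}. Next, the $\mc S=\emptyset$ case of~\eqref{upper-bound-rate-exponent-function-distributed-setting-strict-subset} gives $e^{2E}\leq \tfrac{K N(Y(\mc K))}{\sigma^2_Z}-\tfrac{N(X)}{\sigma^2_Z}\sum_k(1-\gamma_k\sigma^2_Z)$, which rearranges into the upper bound $\sum_k(1-\gamma_k\sigma^2_Z)\leq \tfrac{1}{N(X)}\big(KN(Y(\mc K))-\sigma^2_Z e^{2E}\big)$. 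Since $-\log$ is decreasing, feeding this upper bound into the Jensen estimate erases the $\gamma_k$ and produces $R_{\text{sum}}\geq E+\tfrac{K}{2}\log\tfrac{KN(X)}{KN(Y(\mc K))-\sigma^2_Z e^{2E}}$, i.e.~exactly~\eqref{lower-bound-sum-rate-exponent-function}. Observe moreover that because each summand $1-\gamma_k\sigma^2_Z$ is non-negative, the same $\mc S=\emptyset$ constraint already forces $\sigma^2_Z e^{2E}\leq KN(Y(\mc K))$, so the stated restriction $\sigma^2_Z e^{2E}<KN(Y(\mc K))$ merely excludes the degenerate boundary and keeps the logarithm finite; positivity of $N(Y(\mc K))-N(X)$ itself follows from the entropy-power inequality $N(Y(\mc K))\geq N(X)+\sigma^2_Z/K$ invoked in the Remark after the theorem.

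It remains to replace $\log$ by $\log^+$, which is free. Since $0\leq\gamma_k\sigma^2_Z\leq 1$, each term $\tfrac12\log\tfrac{1}{1-\gamma_k\sigma^2_Z}$ is non-negative, so~\eqref{upper-bound-rate-exponent-function-distributed-setting-full-set} already yields the crude bound $R_{\text{sum}}\geq E$. Taking the maximum of this and the bound just derived gives $R_{\text{sum}}\geq E+\tfrac{K}{2}\max\{\log(\cdot),0\}=E+\tfrac{K}{2}\log^+(\cdot)$, which is the assertion of the corollary.

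I expect the only delicate point to be the $\gamma$-elimination of the second paragraph: one has to check that the inequality senses compose correctly --- an \emph{upper} bound on $\sum_k(1-\gamma_k\sigma^2_Z)$ fed through the \emph{decreasing} map $-\log$ --- and that a single tuple $(\gamma_1,\hdots,\gamma_K)$ can be used in both the full-set and the empty-set constraints, which is exactly what the existential quantifier in Theorem~\ref{theorem-upper-bound-rate-exponent-function-distributed-setting} guarantees. Everything else is Jensen's inequality and elementary rearrangement.
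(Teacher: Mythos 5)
Your argument is correct and is essentially identical to the paper's own derivation, which obtains \eqref{lower-bound-sum-rate-exponent-function} by exactly the same three steps: the full-set bound \eqref{upper-bound-rate-exponent-function-distributed-setting-full-set}, Jensen's inequality on the concave logarithm, and substitution of the $\mc S=\emptyset$ instance of \eqref{upper-bound-rate-exponent-function-distributed-setting-strict-subset} to eliminate the $\gamma_k$. Your additional observations --- that the crude bound $R_{\text{sum}}\geq E$ justifies passing from $\log$ to $\log^{+}$, and that the same tuple $(\gamma_1,\hdots,\gamma_K)$ serves both subset constraints --- are details the paper leaves implicit, and you have handled them correctly.
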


\noindent Using Corollary~\ref{corollary-lower-bound-rate-exponent-function-distributed-setting}, it is easy to see that for given exponent $E$ for which $\sigma^2_Z e^{2E} < K\sigma^2_{Y(\mc K)}$ the sum-rate exponent is upper bounded as
\begin{equation}
R_{\text{sum}} \leq E + \frac{K}{2} \log \left(\frac{K\sigma^2_X}{K\sigma^2_{Y(\mc K)} -\sigma^2_Z e^{2E}}\right).
\label{upper-bound-sum-rate-exponent-function-distributed-setting}
\end{equation}
 
\vspace{-0.2cm}

\subsection{Application}~\label{secIV_subsecD}

\vspace{-0.2cm}

\noindent Part of the utility of the results of Theorem~\ref{theorem-upper-bound-rate-exponent-function-distributed-setting} and Corollary~\ref{corollary-lower-bound-sum-rate-exponent-function-distributed-setting} is, e.g., for investigating asymptotic exponent/rates and losses incurred by distributed detection as function of the number of observations. 

\noindent The gap between the bounds~\eqref{lower-bound-sum-rate-exponent-function-distributed-setting} and~\eqref{upper-bound-sum-rate-exponent-function-distributed-setting} is upper-bounded by
\begin{align}
\Delta(K) & := \frac{K}{2} \log^{+}\left( \frac{\sigma^2_X}{N(X)}\left(\frac{KN(Y(\mc K)) -\sigma^2_Z e^{2E}}{K\sigma^2_{Y(\mc K)} -\sigma^2_Z e^{2E}}\right) \right).
\label{gap-lower-upper-bounds-sum-rate-exponent-function}
\end{align}
Recalling that $Y(\mc K)=X+(\sigma_Z/\sqrt{K})G$ where $G \sim \mc N(0,1)$ the behavior of $\Delta(K)$ for large $K$ can be obtained easily using de Bruijn identity type bound for entropy power~\cite[Eq. (15)]{C18}
\begin{equation}
N(Y(\mc K)) \leq N(X) + \frac{\sigma^2_Z}{K} \left(\frac{d}{dt}N(X+\sqrt{t}G)|_{t=0}\right).
\end{equation}

\noindent More precisely, we obtain
\vspace{-0.2cm}
\begin{subequations}
\begin{align}
\label{upper-bound-on-limit-of-gap-between-lower-and-upper-bounds-ineq1}
0 \leq \lim_{K \to \infty} \Delta(K)  &\leq \frac{\sigma^2_Z}{2} \left[ \left(\frac{\kappa_X}{N(X)}-\frac{1}{\sigma^2_X}\right) - e^{2E}\left(\frac{1}{N(X)}-\frac{1}{\sigma^2_X}\right)\right]^{+} \\
& \leq \frac{\sigma^2_Z}{2} \left(\frac{\kappa_{X}}{N(X)}-\frac{1}{\sigma^2_X}\right) 
\label{upper-bound-on-limit-of-gap-between-lower-and-upper-bounds-ineq2}
\end{align}
\end{subequations}
where the scalar coefficient $\kappa_X$ is defined as
\begin{equation}
\kappa_X := \frac{d}{dt} N(X+\sqrt{t}G)|_{t=0}.
\end{equation}
(Note that if $X$ itself is Gaussian, $\kappa_X=1$ and $N(X)=\sigma^2_X$). Figure~\ref{figs-pdf-and-gap-bounds-inverse-Gaussian} depicts the evolution of the RHS of~\eqref{gap-lower-upper-bounds-sum-rate-exponent-function} as a function of the exponent $E$ for an example non-Gaussian distribution, the Wald distribution given by
\begin{equation}
p_X(x) = \left(\frac{\lambda}{2\pi x^3}\right)^{1/2} \exp \frac{-\lambda(x-\mu)^2}{2\mu^2 x}, \quad x \in ]0,+\infty[
\label{definition-Wald-distribution}
\end{equation}
with the scale parameter $\mu$ set to $1$ and the shape parameter $\lambda$ set to $10$. Also shown for comparison, the upper bound on the limit of $\Delta(K)$ at large $K$ as given by~\eqref{upper-bound-on-limit-of-gap-between-lower-and-upper-bounds-ineq2}. Observe that $\Delta(K)$, and so the gap between our bounds~\eqref{lower-bound-sum-rate-exponent-function-distributed-setting} and~\eqref{upper-bound-sum-rate-exponent-function-distributed-setting}, are relatively small for this example. Also, the gap is larger for larger values of $K$ and smaller values of the exponent (intuitively, this is because the Gaussian part of $Y(K)$, which is $(\sum_{i=1}^K Z_i)/K$, is weaker for increasing values of $K$).

 \begin{figure*}[!htpb]
	\begin{subfigure}[b]{0.49\linewidth}
		\centering
		\includegraphics[width=1\linewidth]{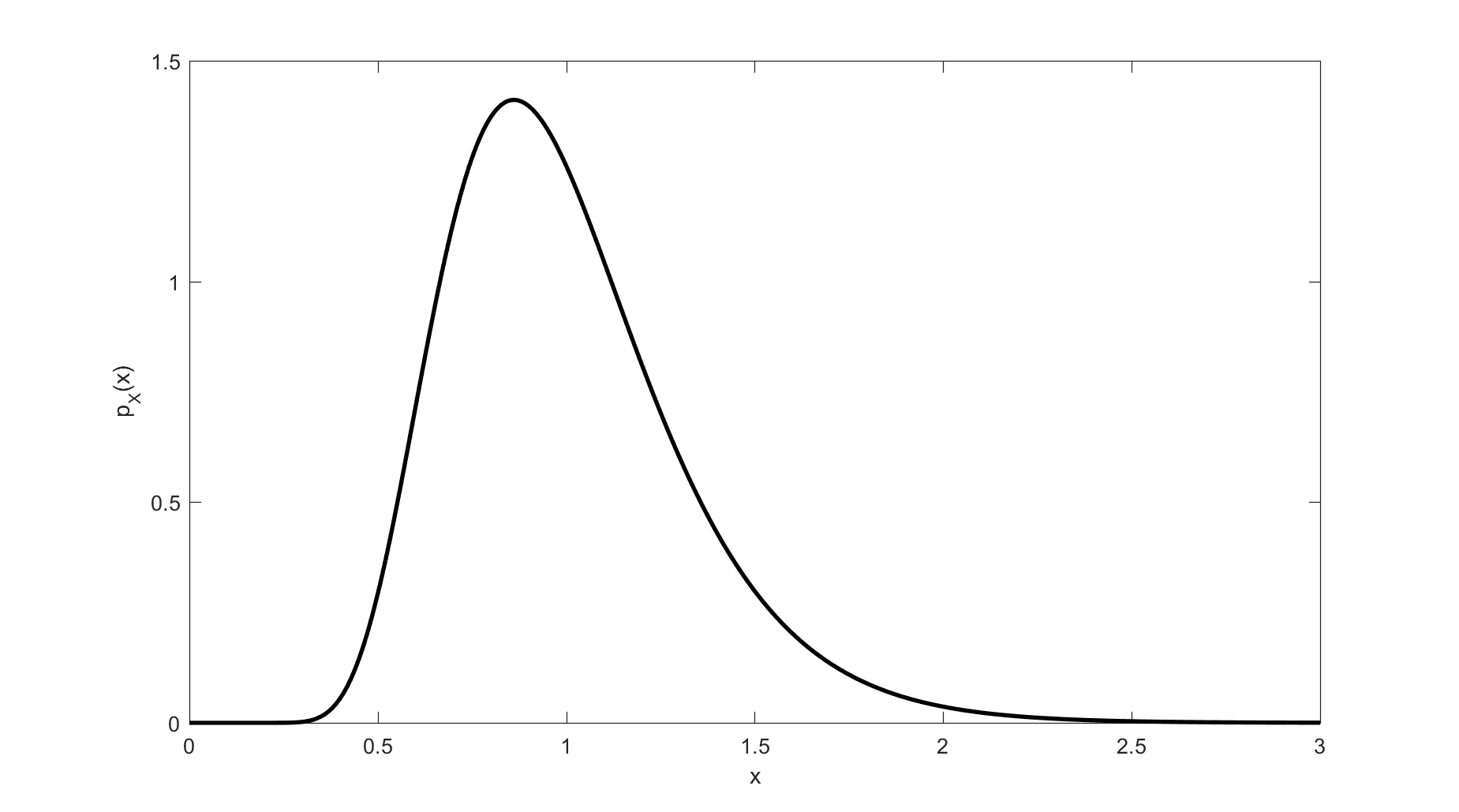}
		\caption{Wald distribution~\eqref{definition-Wald-distribution} with $\mu=1$ and $\lambda=10$}
		\label{fig-pdf-inverse-Gaussian}
	\end{subfigure}
	\begin{subfigure}[b]{0.49\linewidth}
		\centering
		\includegraphics[width=1\linewidth]{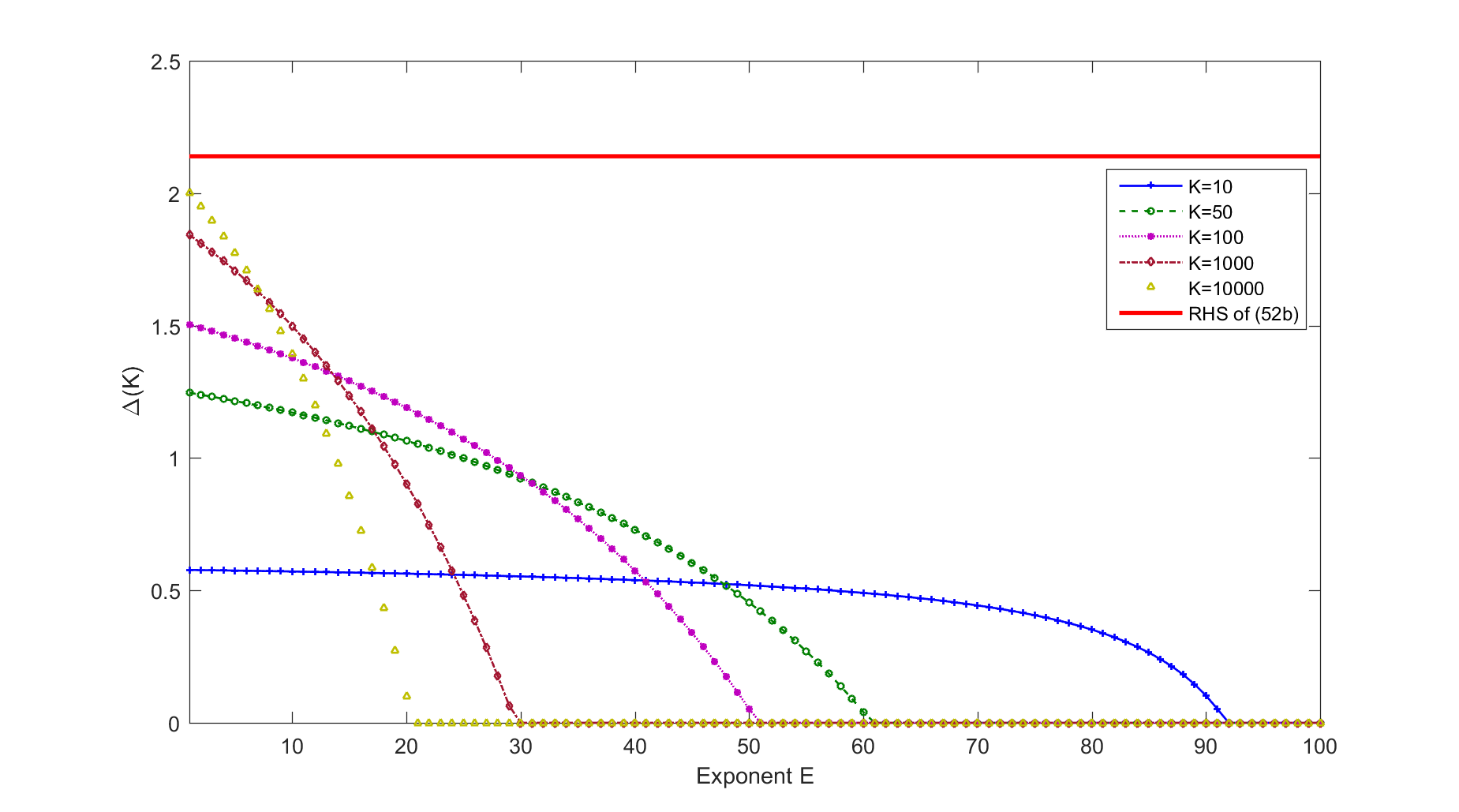}
		\caption{Evolution of $\Delta(K)$ v.s. the exponent }
		\label{fig-gap-bounds-inverse-Gaussian}
	\end{subfigure}
	\caption{Evolution of $\Delta(K)$ as given by Eq. (50) as a function of the exponent $E$ for the Wald distribution with scale parameter $\mu=1$ and shape parameter $\lambda=10$.}
	\label{figs-pdf-and-gap-bounds-inverse-Gaussian}
\end{figure*}

\noindent Consider now a setup with a single sensor that observes the vector $(Y_1,\hdots,Y_K)$. Given that $Y(\mc K)$ is a sufficient statistic for $X$ given $(Y_1,\hdots,Y_K)$, this is equivalent to a point-to-point detection system with a sensor that has $Y(\mc K)$ and a detector that has $X^n$. Let $R$ denote the rate needed to achieve exponent $E$ for this setting. Using~\eqref{lower-bound-sum-rate-exponent-function-distributed-setting} and~\eqref{lower-bound-EPI-exponent-rate-function-point-to-point-model} we get that the cost of distributed processing (rate redundancy) is lower-bounded as
\begin{align}
 (R_{sum} - R)   &\geq \frac{1}{2} \log^{+} \left(\left(\frac{N(X)}{N(Y(\mc K)) -\sigma^2_{Z(\mc K)} e^{2E}}\right)^K  \frac{\sigma^2_{Y(\mc K)} -\sigma^2_{Z(\mc K)} e^{2E}}{\sigma^2_X} \right) \nonumber\\
& = \frac{1}{2} \log^{+} \left(\left(\frac{N(X)}{N(Y(\mc K)) -\frac{\sigma^2_Z}{K} e^{2E}}\right)^K \left(1+\frac{\sigma^2_Z}{K\sigma^2_X}(1-e^{2E})\right)\right). 
\end{align}


\appendix
Throughout this section we denote the set of strongly jointly $\epsilon$-typical sequences \cite[Chapter 14.2]{CT91} with respect to the distribution $P_{X,Y}$ as $\mc T_{\epsilon}^n(P_{X,Y})$.

\renewcommand{\theequation}{A-\arabic{equation}}
\setcounter{equation}{0}  

\subsection{Proof of Proposition~\ref{proposition-entropy-characterization-rate-exponent-region-distributed-ht-against-conditional-independence}}\label{appendix-proposition-entropy-characterization-rate-exponent-region-distributed-ht-against-conditional-independence}

$i)$ Assume that $(R_1,\hdots,R_K,E) \in \mc{R_{\text{HT}}}$. Fix $\epsilon > 0$ and $\delta > 0$ and let a hypothesis test $({\phi}^{(n)}_1, \hdots, {\phi}^{(n)}_K, {\psi}^{(n)}, \mc A_n)$ with Type-I probability of error $(1-\alpha_n) \in [0,1]$ and Type-II probability of error $\beta_n \in [0,1]$ such that 
\begin{subequations}
\begin{align}
\label{proof-of-direct-part-proposition1-step1-a}
\log \|{\phi}^{(n)}_k\| &\leq n(R_k + \delta), \qquad k=1,\hdots,K \\
\label{proof-of-direct-part-proposition1-step1-b}
\alpha_n &\geq (1-\epsilon) \\
- \frac{1}{n} \log \beta_n &\geq  E-\delta.
\label{proof-of-direct-part-proposition1-step1-c}
\end{align}
\label{proof-of-direct-part-proposition1-step1}
\end{subequations}

\noindent First note that we have
\begin{align}
D\Big(P_{{\phi}^{(n)}_1(Y^n_1),\hdots,{\phi}^{(n)}_K(Y^n_K),X^n,Y^n_0}  & \| Q_{{\phi}^{(n)}_1(Y^n_1),\hdots,{\phi}^{(n)}_K(Y^n_K),X^n,Y^n_0} \Big)  \stackrel{(a)}{=} D\Big(P_{{\phi}^{(n)}_1(Y^n_1),\hdots,{\phi}^{(n)}_K(Y^n_K),Y^n_0}  \| Q_{{\phi}^{(n)}_1(Y^n_1),\hdots,{\phi}^{(n)}_K(Y^n_K),Y^n_0} \Big) \nonumber\\
& + \mathbb{E}_{P_{{\phi}^{(n)}_1(Y^n_1),\hdots,{\phi}^{(n)}_K(Y^n_K),Y^n_0}} \Big[ D\Big(P_{X^n | {\phi}^{(n)}_1(Y^n_1),\hdots,{\phi}^{(n)}_K(Y^n_K),Y^n_0} \| Q_{X^n | {\phi}^{(n)}_1(Y^n_1),\hdots,{\phi}^{(n)}_K(Y^n_K),Y^n_0} \Big) \Big] \\
& \stackrel{(b)}{=} D\Big(P_{{\phi}^{(n)}_1(Y^n_1),\hdots,{\phi}^{(n)}_K(Y^n_K),Y^n_0}  \| Q_{{\phi}^{(n)}_1(Y^n_1),\hdots,{\phi}^{(n)}_K(Y^n_K),Y^n_0} \Big) \nonumber\\
& + \mathbb{E}_{P_{{\phi}^{(n)}_1(Y^n_1),\hdots,{\phi}^{(n)}_K(Y^n_K),Y^n_0}} \Big[ D\Big(P_{X^n | {\phi}^{(n)}_1(Y^n_1),\hdots,{\phi}^{(n)}_K(Y^n_K),Y^n_0} \| Q_{X^n |Y^n_0} \Big) \Big] \\
& \stackrel{(c)}{=} D\Big(P_{{\phi}^{(n)}_1(Y^n_1),\hdots,{\phi}^{(n)}_K(Y^n_K),Y^n_0}  \| Q_{{\phi}^{(n)}_1(Y^n_1),\hdots,{\phi}^{(n)}_K(Y^n_K),Y^n_0} \Big) \nonumber\\
& + I({\phi}^{(n)}_1(Y^n_1),\hdots,{\phi}^{(n)}_K(Y^n_K);X^n|Y^n_0) + \mathbb{E}_{P_{Y^n_0}} \Big[ D\Big( P_{X^n|Y^n_0} \| Q_{X^n|Y^n_0} \Big) \Big] \\
& \stackrel{(d)}{\leq} D\Big(P_{Y^n_1,\hdots,Y^n_K,Y^n_0}  \| Q_{Y^n_1,\hdots,Y^n_K,Y^n_0} \Big) \nonumber\\
& + I({\phi}^{(n)}_1(Y^n_1),\hdots,{\phi}^{(n)}_K(Y^n_K);X^n|Y^n_0) + \mathbb{E}_{P_{Y^n_0}} \Big[ D\Big( P_{X^n|Y^n_0} \| Q_{X^n|Y^n_0} \Big) \Big] \\
& \stackrel{(e)}{=} n D\Big(P_{Y_1,\hdots,Y_K,Y_0}  \| Q_{Y_1,\hdots,Y_K,Y_0} \Big) \nonumber\\
& + I({\phi}^{(n)}_1(Y^n_1),\hdots,{\phi}^{(n)}_K(Y^n_K);X^n|Y^n_0) + n \mathbb{E}_{P_{Y_0}} \Big[ D\Big( P_{X|Y_0} \| Q_{X|Y_0} \Big) \Big] \\
& \stackrel{(f)}{=} I({\phi}^{(n)}_1(Y^n_1),\hdots,{\phi}^{(n)}_K(Y^n_K);X^n|Y^n_0)
\label{equality-divergence-conditional-mutual-information}
\end{align}
where: $(a)$ holds by the chain rule for KL divergence; $(b)$ holds since $X^n$ is independent of $(Y^n_1,\hdots,Y^n_K)$ conditionally given $Y^n_0$ under $H_1$; $(c)$ follows by straightforward algebra; $(d)$ holds by the data processing inequality; $(e)$ holds since the vector $(X^n,Y^n_0,Y^n_1,\hdots,Y^n_K)$ is i.i.d. under both $H_0$ and $H_1$; and $(f)$ holds since as per the condition~\eqref{condition-on-same-marginals-under-H0-and-H1} the distributions $P$ and $Q$ have same $(X,Y_0)$- and $(Y_0,Y_1,\hdots,Y_K)$-marginals.

\noindent Thus, for any $\epsilon' > 0$, provided that $\epsilon$ is small enough and $n$ is large enough we get
\begin{align}
I({\phi}^{(n)}_1(Y^n_1),\hdots,{\phi}^{(n)}_K(Y^n_K);X^n|Y^n_0) &\stackrel{(a)}{=} D\Big(P_{{\phi}^{(n)}_1(Y^n_1),\hdots,{\phi}^{(n)}_K(Y^n_K),X^n,Y^n_0}  \| Q_{{\phi}^{(n)}_1(Y^n_1),\hdots,{\phi}^{(n)}_K(Y^n_K),X^n,Y^n_0} \Big) \\
&\stackrel{(b)}{\geq} \alpha_n \log \frac{\alpha_n}{\beta_n} + (1-\alpha_n) \log \frac{1-\alpha_n}{1-\beta_n} \\
& \stackrel{(c)}{=} -h(\alpha_n) - \alpha_n \log \beta_n -(1-\alpha_n) \log (1-\beta_n) \\
& \stackrel{(d)}{\geq} (1-\epsilon)n(E-\delta) - h(\alpha_n) -(1-\alpha_n) \log (1-\beta_n) \\
&= n(E - \epsilon')
\label{proof-of-direct-part-proposition1-step2}
\end{align}
where $(a)$ follows from~\eqref{equality-divergence-conditional-mutual-information}; $(b)$ holds by application of the log-sum inequality~\cite[Theorem 2.7.1]{CT06}; in $(c)$, for $u \in (0,1)$, $h(u)$ denotes the entropy of a Bernoulli-$(u)$ random variable, i.e., 
\begin{equation}
h(u) = - u \log u -(1-u) \log (1-u);
\end{equation}
and $(d)$ holds by using~\eqref{proof-of-direct-part-proposition1-step1-b} and~\eqref{proof-of-direct-part-proposition1-step1-c}.

\noindent The inequalities~\eqref{proof-of-direct-part-proposition1-step1-a} and~\eqref{proof-of-direct-part-proposition1-step2} together show that the tuple $(R_1+\delta,\hdots,R_K+\delta,E-\epsilon') \in \mc R^{\star}$; and, hence, $(R_1,\hdots,R_K,E) \in \xbar{\mc R^{\star}}$. Thus, $\mc{R_{\text{HT}}} \subseteq \xbar{\mc R^{\star}}$.

$ii)$ Assume now that $(R_1,\hdots,R_K,E) \in \xbar{\mc R^{\star}}$. For any $\epsilon > 0$ and $\delta > 0$, since $(R_1 + \delta,\hdots,R_K + \delta, E - \delta) \in \mc R^{\star}$ there must exist $p \in \mathbb{N}$ and functions $(f^{(p)}_1,\hdots,f^{(p)}_K)$ such that
\begin{align}
\label{proof-of-converse-part-proposition1-step1}
 \log \|f^{(p)}_k\| &\leq p (R_k+\delta), \qquad k=1,\hdots,K\\
E - \delta &\leq \frac{1}{p} I(f^{(p)}_1(Y^p_{1,1}),\hdots,f^{(p)}_K(Y^p_{K,1}); X^p|Y^p_{0,1}).
\label{proof-of-converse-part-proposition1-step2}
\end{align}
\noindent By application of Stein's lemma to
\begin{equation}
H_0\: :\: \tilde{P}_{f^{(p)}_1(Y^p_1),\hdots,f^{(p)}_K(Y^p_K),X^p,Y^p_0} \qquad \qquad H_1\: :\: \tilde{Q}_{f^{(p)}_1(Y^p_1),\hdots,f^{(p)}_K(Y^p_K),X^p,Y^p_0}
\label{proof-of-converse-part-proposition1-step3}
\end{equation}
where
\begin{subequations}
\begin{align}
\tilde{P}_{f^{(p)}_1(Y^p_1),\hdots,f^{(p)}_K(Y^p_K),X^p,Y^p_0} &= P_{X^p,Y^p_0} \prod_{k=1}^K P_{f^{(p)}_k(Y^p_k)|X^p,Y^p_0} \\
\tilde{Q}_{f^{(p)}_1(Y^p_1),\hdots,f^{(p)}_K(Y^p_K),X^p,Y^p_0} &= Q_{Y^p_0} Q_{X^p|Y^p_0} Q_{f^{(p)}_1(Y^p_1),\hdots,f^{(p)}_K(Y^p_K)|Y^p_0},
\end{align}
\label{proof-of-converse-part-proposition1-step4}
\end{subequations}
we get for every $\epsilon \in [0,1]$
\begin{equation}
\lim_{l \to \infty}\sup -\frac{1}{l} \log \beta_{\dv R}(lp,\epsilon) \geq D\Big(\tilde{P}_{f^{(p)}_1(Y^p_1),\hdots,f^{(p)}_K(Y^p_K),X^p,Y^p_0}  \| \tilde{Q}_{f^{(p)}_1(Y^p_1),\hdots,f^{(p)}_K(Y^p_K),X^p,Y^p_0} \Big)
\label{proof-of-converse-part-proposition1-step5}
\end{equation}
where $\dv R \triangleq (R_1,\hdots,R_K)$ and 
\begin{align}
\beta_{\dv R}(lp,\epsilon) \triangleq & \min_{(g_1,\hdots, g_K)\:\::\: \log\|g_1\| \: \leq \: lp (R_1 + \delta),\hdots,\log\|g_K\|\: \leq\: lp (R_K+\delta)}  \min_{\mc A} \Big\{\tilde{Q}_{g_1(Y^{lp}_1),\hdots,g_K(Y^{lp}_K),X^{lp},Y^{lp}_0}(\mc A) \: \text{s.t.}: \: \nonumber\\
& \mc A \subset g_1(\mc Y^{lp}_1) \times \hdots \times g_K(\mc Y^{lp}_K) \times \mc X^{lp} \times \mc Y^{lp}_0, \:\: \tilde{P}_{g_1(Y^{lp}_1),\hdots,g_K(Y^{lp}_K),X^{lp},Y^{lp}_0}(\mc A) \geq 1-\epsilon \Big\}.
\label{proof-of-converse-part-proposition1-step6}
\end{align}

\noindent Let, for large $n$, functions ${\phi}^{(n)}_1,\hdots, {\phi}^{(n)}_K$ such that for $1 \leq k \leq K$ the function ${\phi}^{(n)}_k$ is defined over $\mc Y^{n}_{k,1}$ by concatenation from $f^{(p)}_k$ defined over $\mc Y^{p}_{k,1}$ as
\begin{equation}
{\phi}^{(n)}_k(y_{k,1},\hdots,y_{k,n}) \triangleq \Big(f^{(p)}_k(y_{k,1},\hdots, y_{k,p}), \hdots, f^{(p)}_k(y_{k,(l-1)p+1},\hdots, y_{k,lp})\Big), \quad lp \leq n \leq (l+1)p.
\label{proof-of-converse-part-proposition1-step7}
\end{equation}

\noindent Using~\eqref{proof-of-converse-part-proposition1-step1} and~\eqref{proof-of-converse-part-proposition1-step7}, it is easy to see that
\begin{equation}
\log \| {\phi}^{(n)}_k \| \leq n(R_k+\delta), \qquad k=1,\hdots,K.
\label{proof-of-converse-part-proposition1-step8}
\end{equation}

\noindent Also, noting that for $lp \leq n \leq (l+1)p$ we have
\begin{equation}
\beta_{\dv R}((l+1)p,\epsilon) \leq \beta_{\dv R}(n,\epsilon) \leq \beta_{\dv R}(lp,\epsilon)
\label{proof-of-converse-part-proposition1-step9}
\end{equation}
and using~\eqref{proof-of-converse-part-proposition1-step5} it follows that
\begin{align}
\lim_{n \to \infty}\sup -\frac{1}{n} \log \beta_{\dv R}(n,\epsilon) & \geq D\Big(\tilde{P}_{f^{(p)}_1(Y^p_1),\hdots,f^{(p)}_K(Y^p_K),X^p,Y^p_0}  \| \tilde{Q}_{f^{(p)}_1(Y^p_1),\hdots,f^{(p)}_K(Y^p_K),X^p,Y^p_0} \Big) \\
&\stackrel{(a)}{=} I(f^{(p)}_1(Y^p_1),\hdots,f^{(p)}_K(Y^p_{K,1});X^p|Y^p_{0,1}) \\
&\stackrel{(b)}{\geq} E-\delta
\label{proof-of-converse-part-proposition1-step10}
\end{align}
where $(a)$ follows by noticing that by the condition~\eqref{condition-on-same-marginals-under-H0-and-H1} the joint distributions $\tilde{P}_{f^{(p)}_1(Y^p_1),\hdots,f^{(p)}_K(Y^p_K),X^p,Y^p_0}$ and $\tilde{Q}_{f^{(p)}_1(Y^p_1),\hdots,f^{(p)}_K(Y^p_K),X^p,Y^p_0}$ as defined by~\eqref{proof-of-converse-part-proposition1-step4} have same $(X^p,Y^p_0)$- and $(f^{(p)}_1(Y^p_1),\hdots,f^{(p)}_K(Y^p_K),Y^p_0)$-marginals, i.e., $\tilde{P}_{X^p,Y^p_0}=\tilde{Q}_{X^p,Y^p_0}=P_{X^p,Y^p_0}$ and $\tilde{P}_{f^{(p)}_1(Y^p_1),\hdots,f^{(p)}_K(Y^p_K),Y^p_0} = \tilde{Q}_{f^{(p)}_1(Y^p_1),\hdots,f^{(p)}_K(Y^p_K),Y^p_0}=P_{f^{(p)}_1(Y^p_1),\hdots,f^{(p)}_K(Y^p_K),Y^p_0}$ and then applying the steps leading to~\eqref{equality-divergence-conditional-mutual-information}; and $(b)$ holds by using~\eqref{proof-of-converse-part-proposition1-step2}. 

\noindent Now, for convenience let us denote by $\boldsymbol{{\phi}}^{(n)} = ({\phi}^{(n)}_1,\hdots,{\phi}^{(n)}_K)$ and
\begin{align}
\beta(n,\epsilon, \boldsymbol{{\phi}}^{(n)}) \triangleq &  \min_{\mc A} \Big\{\tilde{Q}_{{\phi}^{(n)}_1(Y^{n}_1),\hdots,{\phi}^{(n)}_K(Y^{n}_K),X^{n},Y^{n}_0}(\mc A) \: \text{s.t.}: \: \nonumber\\
& \mc A \subset {\phi}^{(n)}_1(\mc Y^{n}_1) \times \hdots \times {\phi}^{(n)}_K(\mc Y^{n}_K) \times \mc X^{n} \times \mc Y^{n}_0, \:\: \tilde{P}_{{\phi}^{(n)}_1(Y^{n}_1),\hdots,{\phi}^{(n)}_K(Y^{n}_K),X^{n},Y^{n}_0}(\mc A) \geq 1-\epsilon \Big\}.
\label{proof-of-converse-part-proposition1-step11}
\end{align}

\noindent Noticing that as per ~\eqref{proof-of-converse-part-proposition1-step6} the term $\beta_{\dv R}(n,\epsilon)$ of the LHS of~\eqref{proof-of-converse-part-proposition1-step10} involves a minimization over all functions $(g_1,\hdots,g_K)$ for which $\| g_k\| \leq n(R_k+\delta)$, $k=1,\hdots,K$, and recalling that the functions $({\phi}^{(n)}_1, \hdots, {\phi}^{(n)}_K)$ as defined by~\eqref{proof-of-converse-part-proposition1-step7} satisfy~\eqref{proof-of-converse-part-proposition1-step8}, then by~\eqref{proof-of-converse-part-proposition1-step10} we get
\begin{equation}
\lim_{n \to \infty}\sup -\frac{1}{n} \log \beta(n,\epsilon, \boldsymbol{{\phi}}^{(n)}) \geq E-\delta.
\label{proof-of-converse-part-proposition1-step12}
\end{equation}

\noindent Finally, using~\eqref{proof-of-converse-part-proposition1-step8} and~\eqref{proof-of-converse-part-proposition1-step12} it follows that $(R_1,\hdots,R_K,E) \in  \mc{R_{\text{HT}}}$. Thus, $\xbar{\mc R^{\star}} \subseteq \mc{R_{\text{HT}}}$.

\renewcommand{\theequation}{B-\arabic{equation}}
\setcounter{equation}{0}  
\subsection{Proof of Converse of Theorem~\ref{theorem-rate-exponent-region-hypothesis-testing-DM-case}}\label{appendix-proof-theorem-rate-exponent-region-hypothesis-testing-DM-case}

Let a non-negative tuple $(R_1,\hdots,R_K,E) \in \mc R_{\text{HT}}$ be given. Since $\mc R_{\text{HT}} = \xbar{\mc R^{\star}}$, then there must exist a series of non-negative tuples $\{(R^{(m)}_1,\hdots,R^{(m)}_K,E^{(m)})\}_{m \in \mathbb{N}}$ such that
\begin{subequations}
\begin{align}
& (R^{(m)}_1,\hdots,R^{(m)}_K,E^{(m)}) \in \mc R^{\star} \:\:\: \text{for all}\:\: m \in \mathbb{N}, \quad \text{and} \\
& \lim_{m \to \infty} (R^{(m)}_1,\hdots,R^{(m)}_K,E^{(m)}) =  (R_1,\hdots,R_K,E).
\end{align}
\label{equivalence-HT-CEO-proof-direct-part-step1}
\end{subequations}
Fix $\delta' > 0$. Then, $\exists \:\: m_0 \in \mathbb{N}$ such that for all $m \geq m_0$, we have
\begin{subequations}
 \begin{align}
R_k &\geq R^{(m)}_k - \delta'  \:\:\: \text{for all}\:\: k \in \mc K, \quad \text{and} \\
E &\leq E^{(m)} + \delta'.
\end{align}
\label{equivalence-HT-CEO-proof-direct-part-step2}
\end{subequations}
\noindent For $m \geq m_0$, there exist a series $\{n_m\}_{m \in \mathbb{N}}$ and functions  $\{{\phi}^{(n_m)}_k\}_{k \in \mc K}$ such that 
\begin{subequations}
 \begin{align}
 R^{(m)}_k  &\geq \frac{1}{n_m} \log|{\phi}^{(n_m)}_k|   \:\: \text{for all}\:\: k \in \mc K, \:\: \text{and} \\
E^{(m)} &\leq \frac{1}{n_m} I(\{{\phi}^{(n_m)}_k(Y^{n_m}_k)\}_{k \in \mc K};X^{n_m}|Y^{n_m}_0).
\end{align}
\label{equivalence-HT-CEO-proof-direct-part-step3}
\end{subequations}
\noindent Combining~\eqref{equivalence-HT-CEO-proof-direct-part-step2} and~\eqref{equivalence-HT-CEO-proof-direct-part-step3} we get that for all $m \geq m_0$,
\begin{subequations}
 \begin{align}
 R_k  &\geq \frac{1}{n_m} \log|{\phi}^{(n_m)}_k(Y^{n_m}_k)| - \delta'   \:\: \text{for all}\:\: k \in \mc K, \:\: \text{and} \\
E  &\leq \frac{1}{n_m} I(\{{\phi}^{(n_m)}_k(Y^{n_m}_k)\}_{k \in \mc K};X^{n_m}|Y^{n_m}_0) + \delta'.
\end{align}
\label{equivalence-HT-CEO-proof-direct-part-step4}
\end{subequations}
\noindent The second inequality of~\eqref{equivalence-HT-CEO-proof-direct-part-step4} implies that
\begin{equation}
H(X^{n_m} | \{{\phi}^{(n_m)}_k(Y^{n_m}_k)\}_{k \in \mc K}, Y^{n_m}_0) \leq n_m (H(X|Y_0)-E) + n_m \delta'.
\label{equivalence-HT-CEO-proof-direct-part-step5}
\end{equation}

\noindent Let $\mc S \subseteq \mc K$ a given subset of $\mc K$ and $J_k := {\phi}_k^{(n_m)}(Y_k^{n_m})$. Also, define, for $i=1,\ldots,n_m$, the following auxiliary random variables
\begin{equation}	
	U_{k,i} := (J_k, Y_k^{i-1}), \quad Q_i := (X^{i-1}, X_{i+1}^{n_m}, Y_0^{i-1}, Y_{0,i+1}^{n_m}).
	\label{proof-converse-definition-of-auxiliary-random-variables}
\end{equation}
Note that, for all $k \in \mc K$, it holds that $U_{k,i} \mkv Y_{k,i} \mkv (X_i, Y_{0,i}) \mkv Y_{\mc K \setminus k,i} \mkv U_{\mc K \setminus k,i}$ is a Markov chain in this order.

\noindent We have 
\begin{align}
n_m  \sum_{k\in \mc S} R_k &\geq \sum_{k \in \mc S} H(J_k) \nonumber\\
& \geq H(J_\mc S) \nonumber\\
& \geq  H(J_\mc S|J_{\mc S^c},Y_0^{n_m}) \nonumber\\
& \geq I(J_{\mc S}; X^{n_m}, Y_\mc S^{n_m}|J_{\mc S^c},Y_0^{n_m}) \nonumber\\   
&= I(J_{\mc S}; X^{n_m}|J_{\mc S^c},Y_0^{n_m}) + I(J_{\mc S}; Y_\mc S^n|X^{n_m},J_{\mc S^c},Y_0^{n_m}) \nonumber\\ 
&= H(X^{n_m}|J_{\mc S^c},Y_0^{n_m}) - H(X^{n_m}|J_{\mc K},Y_0^{n_m}) + I(J_{\mc S}; Y_\mc S^{n_m}|X^{n_m},J_{\mc S^c},Y_0^{n_m}) \nonumber\\  
&\stackrel{(a)}{\geq} H(X^{n_m}|J_{\mc S^c}, Y_0^{n_m}) - H(X^{n_m}|Y^{n_m}_0)  + I(J_{\mc S}; Y_\mc S^{n_m}|X^{n_m},J_{\mc S^c},Y_0^{n_m}) + n_m E - n_m \delta' \nonumber\\  
&= \sum_{i=1}^{n_m} H(X_i|J_{\mc S^c},X^{i-1},Y_0^{n_m}) - H(X^{n_m}|Y^{n_m}_0) + I(J_{\mc S}; Y_\mc S^{n_m}|X^{n_m},J_{\mc S^c},Y_0^{n_m}) + n_m E - n_m \delta' \nonumber\\  
&\stackrel{(b)}{\geq} \sum_{i=1}^{n_m} H(X_i|J_{\mc S^c},X^{i-1},X_{i+1}^{n_m},Y_{\mc S^c}^{i-1},Y_0^{n_m}) - H(X^{n_m}|Y^{n_m}_0) + I(J_{\mc S}; Y_\mc S^{n_m}|X^{n_m},J_{\mc S^c},Y_0^{n_m}) + n_m E - n_m \delta' \nonumber\\
&\stackrel{(c)}{=} \sum_{i=1}^{n_m} H(X_i|U_{\mc S^c,i},Y_{0,i},Q_i) - H(X^{n_m}|Y^{n_m}_0) + I(J_{\mc S}; Y_\mc S^{n_m}|X^{n_m},J_{\mc S^c},Y_0^{n_m}) + n_m E - n_m \delta' \nonumber\\
&\stackrel{(d)}{=} I(J_{\mc S}; Y_\mc S^{n_m}|X^{n_m},J_{\mc S^c},Y_0^{n_m}) - \sum_{i=1}^{n_m} I(U_{\mc S^c,i}, X_i|Y_{0,i},Q_i) + n_m E - n_m \delta' 
\label{lower-bounding-sum-rate-step1}
\end{align}
where $(a)$ follows by using~\eqref{equivalence-HT-CEO-proof-direct-part-step5}; $(b)$ holds since conditioning reduces entropy; and $(c)$ follows by substituting using~\eqref{proof-converse-definition-of-auxiliary-random-variables}; and $(d)$ holds since $(X^{n_m},Y^{n_m}_0)$ is memoryless and $Q_i$ is independent of $(X_i,Y_{0,i})$ for all $i=1,\hdots,n_m$. 

\noindent The term $I(J_{\mc S}; Y_\mc S^{n_m}|X^{n_m},J_{\mc S^c},Y_0^{n_m})$ on the RHS of~\eqref{lower-bounding-sum-rate-step1} can be lower bounded as
\begin{align}
 I(J_{\mc S};  Y_{\mc S}^{n_m}|X^{n_m},J_{\mc S^c},Y_0^{n_m}) &\stackrel{(a)}{\geq} \sum_{k \in \mc S} I(J_k;Y_k^{n_m}|X^{n_m},Y_0^{n_m}) \nonumber\\
& = \sum_{k \in \mc S} \sum_{i=1}^{n_m} I(J_k;Y_{k,i}|Y_k^{i-1},X^{n_m},Y_0^{n_m}) \nonumber\\
&\stackrel{(b)}{=} \sum_{k \in \mc S} \sum_{i=1}^{n_m} I(J_k,Y_k^{i-1};Y_{k,i}|X^{n_m},Y_0^{n_m}) \nonumber\\
&\stackrel{(c)}{=} \sum_{k \in \mc S} \sum_{i=1}^{n_m} I(U_{k,i};Y_{k,i}|X_i,Y_{0,i},Q_i)
 \label{lower-bounding-sum-rate-step2}
\end{align}
where $(a)$ follows due to the Markov chain $J_k \mkv Y_k^{n_m} \mkv (X^{n_m}, Y_0^{n_m}) \mkv Y_{\mc S \setminus k}^{n_m} \mkv J_{\mc S \setminus k}$; $(b)$ follows due to the Markov chain $Y_{k,i} \mkv (X^{n_m},Y_0^{n_m}) \mkv Y_k^{i-1}$; and $(c)$ follows by substituting using~\eqref{proof-converse-definition-of-auxiliary-random-variables}. 

\noindent Then, combining~\eqref{lower-bounding-sum-rate-step1} and~\eqref{lower-bounding-sum-rate-step2}, we get
\begin{equation}
 n_m E  \leq \sum_{i=1}^{n_m} I(U_{\mc S^c,i}, X_i|Y_{0,i},Q_i ) + n_m \sum_{k\in \mc S} R_k  - \sum_{k\in \mc S} \sum_{i=1}^{n_m} I(U_{k,i};Y_{k,i}|X_i,Y_{0,i},Q_i) + n_m \delta'.
\label{upper-bounding-exponent-final-step} 
\end{equation}
\noindent Noticing that $\delta'$ in~\eqref{upper-bounding-exponent-final-step} can be chosen arbitrarily small, a standard time-sharing argument completes the proof of the converse part.


\renewcommand{\theequation}{C-\arabic{equation}}
\setcounter{equation}{0}  
\subsection{Proof of Theorem~\ref{theorem-rate-exponent-region-gaussian-hypothesis-testing-against-conditional-independence}}\label{appendix-proof-theorem-rate-exponent-region-gaussian-hypothesis-testing-against-conditional-independence}

First note that a characterization (in terms of auxiliaries) of the rate-exponent region of the memoryless vector Gaussian hypothesis testing against conditional independence problem of Section~\ref{secIV}, obtained by an easy extension of the result of Theorem~\ref{theorem-rate-exponent-region-hypothesis-testing-DM-case} to the continuous alphabet case through standard discretization arguments), is given by the union of all non-negative tuples $(R_1,\hdots,R_K,E)$ that satisfy for all $\mc S \subseteq \mc K$, 
\begin{equation}
E - \sum_{k \in \mc S} R_k \leq I(U_{\mc S^c};\dv X|\dv Y_0,Q) - \sum_{k \in \mc S} I(\dv Y_k;U_k|\dv X,\dv Y_0,Q)
\label{rates-exponent-region-vector-Gaussian-model-in-terms-of-auxiliaries}
\end{equation}
for some joint distribution of the form that factorizes as 
\begin{align}
P_{\dv X, \dv Y_0, \dv Y_{\mc K}, U_{\mc K}, Q}(\dv x, \dv y_0, \dv y_{\mc K},u_{\mc K},q) &= P_Q(q)  P_{\dv X, \dv Y_0}(\dv x, \dv y_0) \nonumber\\
& \times \prod_{k=1}^K P_{\dv Y_k|\dv X, \dv Y_0}(\dv y_k|\dv x,\dv y_0) \: \prod_{k=1}^{K} P_{U_k|\dv Y_k,Q}(u_k|\dv y_k,q).
\label{joint-distribution-rates-exponent-region-vector-Gaussian-model-in-terms-of-auxiliaries}
\end{align}

\subsubsection{Converse part}

Let an achievable tuple $(R_1,\hdots,R_K,E)$ be given. Using the above there must exist auxiliary random variables $(U_1,\hdots,U_K,Q)$ with distribution that factorizes as~\eqref{joint-distribution-rates-exponent-region-vector-Gaussian-model-in-terms-of-auxiliaries} such that~\eqref{rates-exponent-region-vector-Gaussian-model-in-terms-of-auxiliaries} holds for any subset $\mc S \subseteq \mc K$. The converse proof of Theorem~\ref{theorem-rate-exponent-region-gaussian-hypothesis-testing-against-conditional-independence} relies on deriving an upper bound on the RHS of~\eqref{rates-exponent-region-vector-Gaussian-model-in-terms-of-auxiliaries}.  In doing so, we use the technique of~\cite[Theorem 8]{EU14} which relies on the de Bruijn identity and the properties of Fisher information; and extend the argument to account for the time-sharing variable $Q$ and side information $\dv Y_0$.

\noindent For convenience, we first state the following lemma.

\begin{lemma}{\cite{DCT91,EU14}}~\label{lemma-fisher}
Let $(\mathbf{X,Y})$  be a pair of random vectors with pmf $p(\mathbf{x},\mathbf{y})$. We have
\begin{equation*}
\log\left|(\pi e) \dv J^{-1}(\dv X|\dv Y)\right| \leq h(\dv X|\dv Y) \leq \log\left|(\pi e) \mathrm{mmse}(\dv X|\dv Y)\right|
\end{equation*}
where the conditional Fisher information matrix is defined as
\begin{equation*}
\dv J(\dv X|\dv Y) := \mathbb{E} \left[\nabla\log p(\dv X|\dv Y) \nabla\log p(\dv X|\dv Y)^\dagger\right]
\end{equation*}
and the minimum mean squared error (MMSE) matrix is 
\begin{equation*}
\mathrm{mmse}(\dv X|\dv Y) := \mathbb{E} \left[\left(\dv X-\mathbb{E}[\dv X|\dv Y]\right)\left(\dv X-\mathbb{E} [\dv X|\dv Y]\right)^\dagger\right]. \qed 
\end{equation*}
\end{lemma}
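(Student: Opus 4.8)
The plan is to prove the two inequalities separately, since the right-hand bound is a maximum-entropy statement while the left-hand bound is a Fisher-information (Cramér–Rao-type) statement; both are obtained by applying the corresponding \emph{unconditional} fact pointwise in $\mathbf{Y}$ and then averaging.

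For the upper bound, I would first note that conditioned on $\mathbf{Y}$ the quantity $\mathbb{E}[\mathbf{X}|\mathbf{Y}]$ is deterministic, so by translation invariance of differential entropy $h(\mathbf{X}|\mathbf{Y}) = h(\mathbf{X}-\mathbb{E}[\mathbf{X}|\mathbf{Y}]\,|\,\mathbf{Y})$. Dropping the conditioning can only increase entropy, so $h(\mathbf{X}|\mathbf{Y}) \le h(\mathbf{X}-\mathbb{E}[\mathbf{X}|\mathbf{Y}])$. The error vector $\mathbf{X}-\mathbb{E}[\mathbf{X}|\mathbf{Y}]$ is zero-mean with covariance exactly $\mathrm{mmse}(\mathbf{X}|\mathbf{Y})$, so the maximum-entropy property of the circularly-symmetric Gaussian under a covariance constraint gives $h(\mathbf{X}-\mathbb{E}[\mathbf{X}|\mathbf{Y}]) \le \log|(\pi e)\,\mathrm{mmse}(\mathbf{X}|\mathbf{Y})|$, which is the claimed right inequality.

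For the lower bound, I would work realization-by-realization: for fixed $\mathbf{Y}=\mathbf{y}$ the conditional density $p(\mathbf{x}|\mathbf{y})$ has Fisher information matrix $\mathbf{J}(\mathbf{X}|\mathbf{Y}=\mathbf{y})$, and the unconditional vector inequality relating entropy to Fisher information — the matrix form of the fact that the Gaussian minimizes entropy for a prescribed Fisher information, provable from the de Bruijn identity together with the entropy-power inequality, and exactly the content borrowed from \cite{DCT91} (restated in \cite{EU14}) — yields $h(\mathbf{X}|\mathbf{Y}=\mathbf{y}) \ge \log|(\pi e)\,\mathbf{J}^{-1}(\mathbf{X}|\mathbf{Y}=\mathbf{y})|$. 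Averaging over $\mathbf{Y}$ and writing $h(\mathbf{X}|\mathbf{Y}) = \mathbb{E}_{\mathbf{Y}}[h(\mathbf{X}|\mathbf{Y}=\mathbf{y})]$, it remains to pass from the expectation of $-\log|\mathbf{J}(\mathbf{X}|\mathbf{Y}=\mathbf{y})|$ to $-\log$ of the expectation. Since $\log\det(\cdot)$ is concave on the cone of positive-definite matrices, Jensen's inequality gives $\mathbb{E}_{\mathbf{Y}}[-\log|\mathbf{J}(\mathbf{X}|\mathbf{Y}=\mathbf{y})|] \ge -\log|\mathbb{E}_{\mathbf{Y}}[\mathbf{J}(\mathbf{X}|\mathbf{Y}=\mathbf{y})]|$, and recognizing that the definition $\mathbf{J}(\mathbf{X}|\mathbf{Y}) = \mathbb{E}[\nabla\log p(\mathbf{X}|\mathbf{Y})\nabla\log p(\mathbf{X}|\mathbf{Y})^\dagger]$ is precisely $\mathbb{E}_{\mathbf{Y}}[\mathbf{J}(\mathbf{X}|\mathbf{Y}=\mathbf{y})]$ closes the chain.

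The routine parts are the translation and conditioning steps and the maximum-entropy estimate; the substantive ingredient is the unconditional entropy/Fisher-information inequality invoked pointwise in $\mathbf{y}$, which I expect to be the main obstacle if one insists on a self-contained argument. The other point requiring care is the \emph{direction} of the Jensen step: because the Fisher bound carries $\mathbf{J}^{-1}$ inside a $\log\det$, one must exploit concavity of $\log\det$ (equivalently convexity of $-\log\det$) to aggregate the per-realization bounds into one involving the \emph{averaged} Fisher matrix $\mathbf{J}(\mathbf{X}|\mathbf{Y})$ rather than the average of its inverses; conflating the two would produce the wrong, and generally false, bound.
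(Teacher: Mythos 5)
Your proposal is correct: the max-entropy argument for the MMSE upper bound, and the pointwise Fisher--entropy inequality combined with Jensen's inequality applied to the convex function $-\log\det$ (correctly aggregating the per-realization matrices $\dv J(\dv X|\dv Y=\dv y)$ into $\dv J(\dv X|\dv Y)$ rather than averaging their inverses) for the lower bound, is exactly the standard route. The paper itself offers no proof of this lemma --- it is imported verbatim from \cite{DCT91,EU14} --- and your argument matches the one given in those references, with the only non-elementary ingredient being the unconditional Gaussian-minimizes-entropy-for-given-Fisher-information inequality that you correctly attribute to \cite{DCT91}.
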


\noindent Fix $q\in \mc{Q}$ and $\mc S \subseteq \mc Q$. There must exists a matrix $\dv\Omega_{k,q}$ such that $\dv 0 \preceq \dv\Omega_{k,q} \preceq \dv\Sigma_k^{-1}$ and  
\begin{equation}~\label{equation-outer-1-mmse}
\mathrm{mmse}\left(\dv Y_k|\dv X, U_{k,q}, \dv Y_0, q\right) = \dv\Sigma_k - \dv\Sigma_k \dv\Omega_{k,q} \dv\Sigma_k.
\end{equation}
It is easy to see that such $\dv\Omega_{k,q}$ always exists since 
\begin{equation}
\dv 0 \preceq \mathrm{mmse}\left(\dv Y_k|\dv X,U_{k,q},\dv Y_0,q\right) \preceq \dv\Sigma_{\dv y_k|(\dv x, \dv y_0)} = \dv\Sigma_k.
\label{equation-outer-2-covariance-noise}
\end{equation} 

\noindent Then, we have
\begin{align}
 I(\dv Y_k; U_k|\dv X, \dv Y_0, Q=q) &=  \log\left|(\pi e)\dv\Sigma_k\right| - h(\dv Y_k|\dv X, U_{k,q}, \dv Y_0, Q=q) \nonumber\\
&\stackrel{(a)}{\geq} \log|\dv\Sigma_k| - \log\left|\mathrm{mmse}(\dv Y_k|\dv X, U_{k,q}, \dv Y_0, Q=q)\right| \nonumber\\
&\stackrel{(b)}{=} -\log\left|\dv I- \dv\Omega_{k,q}\dv\Sigma_k\right| \label{equation-Gausss-CEO-first-inequality-q}
\end{align}
where $(a)$ is due to Lemma~\ref{lemma-fisher}; and $(b)$ is due to~\eqref{equation-outer-1-mmse}.

\noindent Now, let the matrix $\dv\Lambda_{\bar{\mc S},q}$ be defined as  
\begin{align}~\label{equation-definition-Tq}
\dv\Lambda_{\bar{\mc S},q} :=
\begin{bmatrix}
\dv 0 & \dv 0 \\
\dv 0 & \mathrm{diag}(\{ \dv\Sigma_k - \dv\Sigma_k \dv\Omega_{k,q} \dv\Sigma_k \}_{k\in\mc S^c})
\end{bmatrix}. 
\end{align}
Then, we have 
\begin{align}
 I(U_{\mc S^c}; \dv X|\dv Y_0,Q=q) & = h(\dv X|\dv Y_0) - h\left(\dv X|U_{S^c,q}, \dv Y_0, Q=q\right) \nonumber\\
 &\: \stackrel{(a)}{\leq} h(\dv X|\dv Y_0) - \log\left|(\pi e) \dv J^{-1}\left(\dv X| \dv U_{S^c,q}, \dv Y_0, q\right)\right| \nonumber\\
&\: \stackrel{(b)}{=} h(\dv X|\dv Y_0) - \log\left| (\pi e) \left( \dv\Sigma_{\dv x}^{-1} + \dv H_{\bar{\mc S}}^\dagger \dv\Sigma_{\dv n_{\bar{\mc S}}}^{-1} \left( \dv I - \dv\Lambda_{\bar{\mc S},q} \dv\Sigma_{\dv n_{\bar{\mc S}}}^{-1} \right) \dv H_{\bar{\mc S}} \right)^{-1} \right| \label{equation-Gausss-CEO-second-inequality-q}
\end{align}
where $(a)$ follows by using Lemma~\ref{lemma-fisher}; and for $(b)$ holds by using the equality 
\begin{equation}~\label{equation-Fisher-equality}
\dv J(\dv X|U_{S^c,q}, \dv Y_0, q) = \dv\Sigma_{\dv x}^{-1} + \dv H_{\bar{\mc S}}^\dagger \dv\Sigma_{\dv n_{\bar{\mc S}}}^{-1} \big( \dv I - \dv\Lambda_{\bar{\mc S},q} \dv\Sigma_{\dv n_{\bar{\mc S}}}^{-1} \big) \dv H_{\bar{\mc S}}. 
\end{equation}
the proof of which uses a connection between MMSE and Fisher information as shown next. More precisely, for the proof of~\eqref{equation-Fisher-equality} first recall de Brujin identity which relates Fisher information and MMSE. 

\begin{lemma}{\cite{EU14}}~\label{lemma-Brujin}
Let $(\dv V_1,\dv V_2)$ be a random vector with finite second moments and $\dv Z\sim\mc{CN}(\dv 0, \dv\Sigma_{\dv z})$ independent of $(\dv V_1,\dv V_2)$. Then
\begin{equation*}
\mathrm{mmse}\left(\dv V_2|\dv V_1,\dv V_2+\dv Z\right) = \dv\Sigma_{\dv z} - \dv\Sigma_{\dv z} \dv J\left(\dv V_2+\dv Z|\dv V_1\right) \dv\Sigma_{\dv z}. \qed 
\end{equation*}
\end{lemma}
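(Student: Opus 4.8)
The plan is to establish the identity by conditioning on $\dv V_1=\dv v_1$, which reduces the statement to an unconditional relation between the MMSE of estimating $\dv V_2$ from its Gaussian-corrupted version $\dv Y:=\dv V_2+\dv Z$ and the Fisher information (score) of that observation; the final averaging over $\dv V_1$ is then trivial since both sides are defined as conditional expectations over $\dv V_1$. Let $p(\dv y|\dv v_1)$ denote the conditional density of $\dv Y$ given $\dv V_1=\dv v_1$, which is the convolution of the conditional law of $\dv V_2$ with the circularly-symmetric Gaussian kernel $p_{\dv Z}$. The key computational step is a Tweedie/Stein-type identity: differentiating $p(\dv y|\dv v_1)=\int p(\dv v_2|\dv v_1)\,p_{\dv Z}(\dv y-\dv v_2)\,d\dv v_2$ under the integral sign and using the Gaussian score $\nabla_{\dv z}\log p_{\dv Z}(\dv z)=-\dv\Sigma_{\dv z}^{-1}\dv z$ yields the representation
\begin{equation}
\nabla_{\dv y}\log p(\dv Y|\dv V_1) = -\dv\Sigma_{\dv z}^{-1}\big(\dv Y - \mathbb{E}[\dv V_2|\dv V_1,\dv Y]\big).
\end{equation}

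Writing $\dv e := \dv V_2 - \mathbb{E}[\dv V_2|\dv V_1,\dv Y]$ for the estimation error and noting $\dv Y-\mathbb{E}[\dv V_2|\dv V_1,\dv Y]=\dv e+\dv Z$, I rewrite the above as $\dv e+\dv Z = -\dv\Sigma_{\dv z}\,\nabla_{\dv y}\log p(\dv Y|\dv V_1)$. Substituting into the definition $\dv J(\dv V_2+\dv Z|\dv V_1)=\mathbb{E}\big[\nabla_{\dv y}\log p\,(\nabla_{\dv y}\log p)^{\dagger}\big]$ (as in Lemma~\ref{lemma-fisher}) gives
\begin{equation}
\dv J(\dv V_2+\dv Z|\dv V_1) = \dv\Sigma_{\dv z}^{-1}\,\mathbb{E}\big[(\dv e+\dv Z)(\dv e+\dv Z)^{\dagger}\big]\,\dv\Sigma_{\dv z}^{-1}.
\end{equation}

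It then remains to evaluate the central expectation. Expanding it, the Gaussian term contributes $\mathbb{E}[\dv Z\dv Z^{\dagger}]=\dv\Sigma_{\dv z}$, and the cross terms are handled by the orthogonality principle: since $\dv e$ is orthogonal to every (square-integrable) function of $(\dv V_1,\dv Y)$, one has $\mathbb{E}[\dv e\dv Y^{\dagger}]=\dv 0$ and $\mathbb{E}\big[\dv e\,\mathbb{E}[\dv V_2|\dv V_1,\dv Y]^{\dagger}\big]=\dv 0$, so that $\mathbb{E}[\dv e\dv Z^{\dagger}]=\mathbb{E}[\dv e(\dv Y-\dv V_2)^{\dagger}]=-\mathbb{E}[\dv e\dv e^{\dagger}]=-\mathrm{mmse}(\dv V_2|\dv V_1,\dv Y)$, and likewise for the Hermitian conjugate $\mathbb{E}[\dv Z\dv e^{\dagger}]$. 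Collecting terms gives $\mathbb{E}\big[(\dv e+\dv Z)(\dv e+\dv Z)^{\dagger}\big]=\dv\Sigma_{\dv z}-\mathrm{mmse}(\dv V_2|\dv V_1,\dv Y)$; substituting back and solving the resulting linear relation for the MMSE matrix produces exactly $\mathrm{mmse}(\dv V_2|\dv V_1,\dv V_2+\dv Z)=\dv\Sigma_{\dv z}-\dv\Sigma_{\dv z}\,\dv J(\dv V_2+\dv Z|\dv V_1)\,\dv\Sigma_{\dv z}$.

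The main obstacle I anticipate is technical rather than conceptual: making the differentiation-under-the-integral step rigorous in the complex, circularly-symmetric setting. This requires fixing the correct Wirtinger convention for the complex gradient/score so that the normalization matches the definition of $\dv J$ used in Lemma~\ref{lemma-fisher}, justifying the interchange of $\nabla_{\dv y}$ with the convolution integral (a dominated-convergence argument, guaranteed here by the smoothing effect of the Gaussian kernel together with the finite-second-moment hypothesis on $(\dv V_1,\dv V_2)$), and invoking a complex form of the orthogonality principle to conclude that the Hermitian cross-correlations vanish as claimed. Once these regularity points are in place, the remaining algebra is routine.
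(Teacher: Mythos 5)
The paper does not prove this lemma; it is imported verbatim from the cited reference \cite{EU14}, so there is no in-paper argument to compare against. Your derivation --- conditioning on $\dv V_1$, the Tweedie/Stein representation $\nabla_{\dv y}\log p(\dv Y|\dv V_1)=-\dv\Sigma_{\dv z}^{-1}\bigl(\dv Y-\mathbb{E}[\dv V_2|\dv V_1,\dv Y]\bigr)$, and the orthogonality-principle evaluation of $\mathbb{E}\bigl[(\dv e+\dv Z)(\dv e+\dv Z)^{\dagger}\bigr]=\dv\Sigma_{\dv z}-\mathrm{mmse}(\dv V_2|\dv V_1,\dv Y)$ --- is correct and is essentially the standard proof of this MMSE--Fisher-information identity used in the cited source; the Wirtinger normalization you flag is consistent with the convention of Lemma~\ref{lemma-fisher} (for which a circularly-symmetric Gaussian has $\dv J=\dv\Sigma^{-1}$), so no stray factor of $2$ arises.
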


\noindent From MMSE estimation of Gaussian random vectors, we have  
\begin{align}
\dv X &= \mathbb{E} [\dv X|\dv Y_{\bar{\mc S}}] + \dv W_{\bar{\mc S}} \nonumber\\
&= \dv G_{\bar{\mc S}} \dv Y_{\bar{\mc S}} + \dv W_{\bar{\mc S}} 
\label{equation-outer-3}
\end{align}
where $\dv G_{\bar{\mc S}} := \dv\Sigma_{\dv w_{\bar{\mc S}}} \dv H_{\bar{\mc S}}^\dagger \dv\Sigma_{\dv n_{\bar{\mc S}}}^{-1}$, and $\dv W_{\bar{\mc S}} \sim \mathcal{CN}(\dv 0, \dv\Sigma_{\dv w_{\bar{\mc S}}})$ is a Gaussian vector that is independent of $\dv Y_{\bar{\mc S}}$ and  
\begin{equation}~\label{equation-covariance-w}
\dv\Sigma_{\dv w_{\bar{\mc S}}}^{-1} :=  \dv\Sigma_{\dv x}^{-1} + \dv H_{\bar{\mc S}}^\dagger \dv\Sigma_{\dv n_{\bar{\mc S}}}^{-1} \dv H_{\bar{\mc S}}. 
\end{equation}

\noindent Next, we show that the cross-terms of $\mathrm{mmse}\left(\dv Y_{\mc S^c}|\dv X,U_{\mc S^{c},q},\dv Y_0,q \right)$ are zero. For $i\in\mc S^c$ and $j\neq i$, we have 
\begin{align}
& \mathbb{E} \left[\left(Y_i-\mathbb{E}[Y_i|\dv X,U_{\mc S^{c},q},\dv Y_0,q]\right) \left(Y_j-\mathbb{E}[Y_j|\dv X,U_{\mc S^{c},q},\dv Y_0,q]\right)^\dagger\right] \nonumber\\
&\qquad \qquad \stackrel{(a)}{=}  \mathbb{E}\left[ \mathbb{E}\left[ \left(Y_i-\mathbb{E}[Y_i|\dv X,U_{\mc S^{c},q},\dv Y_0,q]\right) \left(Y_j-\mathbb{E}[Y_j|\dv X,U_{\mc S^{c},q},\dv Y_0,q]\right)^\dagger|\dv X,\dv Y_0 \right] \right] \nonumber\\
&\qquad \qquad \stackrel{(b)}{=}  \mathbb{E}\left[ \mathbb{E}\left[ \left(Y_i-\mathbb{E}\left[Y_i|\dv X,U_{\mc S^{c},q},\dv Y_0,q\right]\right)|\dv X,\dv Y_0 \right] \times \mathbb{E}\left[\left(Y_j-\mathbb{E}[Y_j|\dv X,U_{\mc S^{c},q},\dv Y_0,q]\right)^\dagger|\dv X,\dv Y_0 \right] \right] \nonumber\\
&\qquad \qquad   = \dv 0,
\label{equation-cross-terms}
\end{align}
where $(a)$ is due to the law of total expectation; $(b)$ is due to the Markov chain $\dv Y_k \mkv (\dv X,\dv Y_0) \mkv \dv Y_{\mc K \setminus k}$. Then, we have
\begin{align}
\mathrm{mmse}\big(\dv G_{\bar{\mc S}} \dv Y_{\bar{\mc S}} \big|\dv X,U_{\mc S^c,q},\dv Y_0,q \big) &= \dv G_{\bar{\mc S}}  \: \mathrm{mmse}\left(\dv Y_{\bar{\mc S}}|\dv X,U_{\mc S^c,q},\dv Y_0,q \right) \dv G_{\bar{\mc S}}^\dagger \nonumber\\
&\stackrel{(a)}{=} 
\dv G_{\bar{\mc S}}
\begin{bmatrix}
\dv 0 & \dv 0 \\
\dv 0 & \mathrm{diag}(\{\mathrm{mmse}(\dv Y_k|\dv X,U_{\mc S^c,q},\dv Y_0,q)\}_{k\in\mc S^c})
\end{bmatrix}
\dv G_{\bar{\mc S}}^\dagger \nonumber\\ 
&\stackrel{(b)}{=} \dv G_{\bar{\mc S}} \dv\Lambda_{\bar{\mc S},q} \dv G_{\bar{\mc S}}^\dagger \label{equation-outer-4}
\end{align}
where $(a)$ follows since the cross-terms are zero as shown in~\eqref{equation-cross-terms}; and $(b)$ follows due to~\eqref{equation-outer-1-mmse} and the definition of $\dv\Lambda_{\bar{\mc S},q}$ given in~\eqref{equation-definition-Tq}.

\vspace{0.2cm}

\noindent We note that $\dv W_{\bar{\mc S}}$ is independent of $\dv Y_{\bar{\mc S}}=(\dv Y_0, \dv Y_{{\mc S}^c})$; and, with the Markov chain $U_{{\mc S}^c} \mkv \dv Y_{{\mc S}^c} \mkv (\dv X, \dv Y_0)$, which itself implies $U_{{\mc S}^c} \mkv \dv Y_{{\mc S}^c} \mkv (\dv X, \dv Y_0,\dv W_{\bar{\mc S}})$, this yields that $\dv W_{\bar{\mc S}}$ is independent of $U_{{\mc S}^c}$. Thus, $\dv W_{\bar{\mc S}}$ is independent of $(\dv G_{\bar{\mc S}} \dv Y_{\bar{\mc S}},U_{{\mc S}^c},\dv Y_0,Q)$. Applying Lemma~\ref{lemma-Brujin} with 
\begin{subequations}
\begin{align}
\dv V_1 &:= (U_{{\mc S}^c},\dv Y_0,Q) \\
\dv V2 &:=\dv G_{\bar{\mc S}} \dv Y_{\bar{\mc S}} \\
\dv Z &:= \dv W_{\bar{\mc S}},
\end{align}
\end{subequations}
 we get   
\begin{align*}    
\dv J(\dv X|U_{S^c,q},\dv Y_0,q) &= \dv\Sigma_{\dv w_{\bar{\mc S}}}^{-1} - \dv\Sigma_{\dv w_{\bar{\mc S}}}^{-1} \: \mathrm{mmse} \big( \dv G_{\bar{\mc S}} \dv Y_{\bar{\mc S}} \big| \dv X,U_{\mc S^{c},q},\dv Y_0,q \big) \dv\Sigma_{\dv w_{\bar{\mc S}}}^{-1} \\
&\stackrel{(a)}{=} \dv\Sigma_{\dv w_{\bar{\mc S}}}^{-1} - \dv\Sigma_{\dv w_{\bar{\mc S}}}^{-1} \dv G_{\bar{\mc S}} \dv\Lambda_{\bar{\mc S},q} \dv G_{\bar{\mc S}}^\dagger \dv\Sigma_{\dv w_{\bar{\mc S}}}^{-1} \\
&\stackrel{(b)}{=} \dv\Sigma_{\dv x}^{-1} + \dv H_{\bar{\mc S}}^\dagger \dv\Sigma_{\dv n_{\bar{\mc S}}}^{-1} \dv H_{\bar{\mc S}} -  \dv H_{\bar{\mc S}}^\dagger \dv\Sigma_{\dv n_{\bar{\mc S}}}^{-1} \dv\Lambda_{\bar{\mc S},q} \dv\Sigma_{\dv n_{\bar{\mc S}}}^{-1} \dv H_{\bar{\mc S}}\\
&= \dv\Sigma_{\dv x}^{-1} + \dv H_{\bar{\mc S}}^\dagger \dv\Sigma_{\dv n_{\bar{\mc S}}}^{-1} \left( \dv I - \dv\Lambda_{\bar{\mc S},q} \dv\Sigma_{\dv n_{\bar{\mc S}}}^{-1}  \right) \dv H_{\bar{\mc S}},
\end{align*}
where $(a)$ is due to~\eqref{equation-outer-4}; and $(b)$ follows due to the definitions of $\dv\Sigma_{\dv w_{\bar{\mc S}}}^{-1}$ and $\dv G_{\bar{\mc S}}$. This proves~\eqref{equation-Fisher-equality}.

\noindent Next, averaging over the time sharing random variable $Q$ both sides of the inequalities~\eqref{equation-Gausss-CEO-first-inequality-q} and~\eqref{equation-Gausss-CEO-second-inequality-q} and letting $\dv\Omega_k := \sum_{q\in \mc Q}p(q) \dv\Omega_{k,q}$ we get
\begin{align}
I(\dv Y_k;\dv U_k|\dv X,\dv Y_0,Q) &= \sum_{q \in \mc Q} p(q) I(\dv Y_k;\dv U_k|\dv X,\dv Y_0,Q=q)  \nonumber\\
&\stackrel{(a)}{\geq} - \sum_{q \in \mc Q} p(q) \log|\dv I- \dv\Omega_{k,q} \dv\Sigma_k| \nonumber\\
&\stackrel{(b)}{\geq} -\log |\dv I - \sum_{q \in \mc Q} p(q) \dv\Omega_{k,q} \dv\Sigma_k| \nonumber\\  
&= -\log |\dv I- \dv\Omega_k \dv\Sigma_k| 
\label{equation-Gausss-CEO-first-inequality}
\end{align}
where $(a)$ follows from~\eqref{equation-Gausss-CEO-first-inequality-q}; and $(b)$ follows from the concavity of the log-det function and Jensen's Inequality. 

\noindent Besides,  we have
\begin{align}
I(U_{\mc S^c}; \dv X|\dv Y_0,Q) &= h(\dv X|\dv Y_0) - \sum_{q \in \mc Q} p(q) h(\dv X|U_{S^c,q}, \dv Y_0, Q=q) \nonumber\\ 
&\stackrel{(a)}{\leq} h(\dv X|\dv Y_0) - \sum_{q \in \mc Q} p(q) \log\left| (\pi e) \left(\dv\Sigma_{\dv x}^{-1} + \dv H_{\bar{\mc S}}^\dagger \dv\Sigma_{\dv n_{\bar{\mc S}}}^{-1} \big( \dv I - \dv\Lambda_{\bar{\mc S},q} \dv\Sigma_{\dv n_{\bar{\mc S}}}^{-1}  \big) \dv H_{\bar{\mc S}} \right)^{-1} \right| \nonumber\\
&\stackrel{(b)}{\leq} h(\dv X|\dv Y_0) - \log\left| (\pi e) \left( \dv\Sigma_{\dv x}^{-1} + \dv H_{\bar{\mc S}}^\dagger \dv\Sigma_{\dv n_{\bar{\mc S}}}^{-1} \big( \dv I - \dv\Lambda_{\bar{\mc S}} \dv\Sigma_{\dv n_{\bar{\mc S}}}^{-1}  \big) \dv H_{\bar{\mc S}} \right)^{-1} \right|, 
\label{equation-Gausss-CEO-second-inequality}  
\end{align}
where $(a)$ is due to~\eqref{equation-Gausss-CEO-second-inequality-q}; and $(b)$ is due to the concavity of the log-det function and Jensen's inequality and the definition of $\dv\Lambda_{\bar{\mc S}}$ given in~\eqref{equation-definition-T}.

\noindent Using~\eqref{equation-Gausss-CEO-second-inequality}, we get
\begin{align}
I(U_{\mc S^c}; \dv X|\dv Y_0,Q) &= \mathbb{E}_Q \Big[ I\left(U_{\mc S^c}; \dv X|\dv Y_0,Q=q\right) \Big] \nonumber\\ 
&\leq h(\dv X|\dv Y_0) - \log\left| (\pi e) \left( \dv\Sigma_{\dv x}^{-1} + \dv H_{\bar{\mc S}}^\dagger \dv\Sigma_{\dv n_{\bar{\mc S}}}^{-1} \left( \dv I - \dv\Lambda_{\bar{\mc S}} \dv\Sigma_{\dv n_{\bar{\mc S}}}^{-1}  \right) \dv H_{\bar{\mc S}} \right)^{-1} \right| \nonumber\\
& =  \log\left|\dv I + \dv\Sigma_{\dv x} \dv H_{\bar{\mc S}}^\dagger \dv\Sigma_{\dv n_{\bar{\mc S}}}^{-1} \left( \dv I - \dv\Lambda_{\bar{\mc S}} \dv\Sigma_{\dv n_{\bar{\mc S}}}^{-1}  \right) \dv H_{\bar{\mc S}} \right| - I(\dv X; \dv Y_0) \nonumber\\
&= \log\left|\dv I + \dv\Sigma_{\dv x} \dv H_{\bar{\mc S}}^\dagger \dv\Sigma_{\dv n_{\bar{\mc S}}}^{-1} \left( \dv I - \dv\Lambda_{\bar{\mc S}} \dv\Sigma_{\dv n_{\bar{\mc S}}}^{-1}  \right) \dv H_{\bar{\mc S}} \right| - \log\left|\dv I + \dv\Sigma_{\dv x} \dv H^{\dagger}_0 \dv\Sigma^{-1}_0\dv H_0\right|.
\label{equation-Gausss-CEO-second-inequality-final-form}
\end{align}

\noindent Finally, substituting in~\eqref{rates-exponent-region-vector-Gaussian-model-in-terms-of-auxiliaries} using~\eqref{equation-Gausss-CEO-first-inequality} and~\eqref{equation-Gausss-CEO-second-inequality-final-form} we get~\eqref{rate-exponent-region-gaussian-hypothesis-testing-against-conditional-independence}. The proof of the converse terminates by taking the union over all matrices $\dv\Omega_k$ and observing that those satisfy $\dv\Omega_k = \sum_{q\in \mathcal{Q}}p(q) \dv\Omega_{k,q} \preceq\mathbf{\Sigma}_{k}^{-1}$ since $\mathbf{0} \preceq \dv\Omega_{k,q} \preceq\mathbf{\Sigma}_{k}^{-1}$ for all $k \in \mc K$.

\subsubsection{Direct part}

The proof of the direct part follows by evaluating the region described by~\eqref{rates-exponent-region-vector-Gaussian-model-in-terms-of-auxiliaries} and~\eqref{joint-distribution-rates-exponent-region-vector-Gaussian-model-in-terms-of-auxiliaries} using Gaussian test channels and no time-sharing. Specifically, we set $Q=\emptyset$ and
\begin{subequations}
\begin{align}
\dv U_k &= \dv Y_k + \dv V_k \\
&= \dv H_k \dv X + \dv Z_k + \dv V_k 
\end{align}
\label{optimal-test-channel-achievability-vector-Gaussian-case}
\end{subequations}
where the noise $\dv V_k$ is zero-mean Gaussian with covariance matrix 
\begin{equation}
\dv\Gamma_k = \Big[\big(\dv I - \dv\Omega_k\dv\Sigma_k\big)^{-1}-\dv I\Big]^{-1} \dv\Sigma_k
\label{covariance-matrix-compression-noise-optimal-test-channel-achievability-vector-Gaussian-case}
\end{equation}
 for some matrix $\dv\Omega_k$ such that  $\dv 0 \preceq \dv\Omega_k \preceq \dv\Sigma_k^{-1}$; and is independent of $\dv Y_k$ and of other noises.

\noindent Specifically, using such choice $(\dv U_1,\hdots,\dv U_K,\dv X, \dv Y_0,\dv Y_1,\hdots,\dv Y_K)$ is jointly Gaussian and we get

\begin{align}
I(\dv Y_k;\dv U_k|\dv X,\dv Y_0) &\stackrel{(a)}{=} h(\dv Z_k + \dv V_k|\dv X,\dv Y_0) - h(\dv V_k|\dv X,\dv Y_0,\dv Y_k) \\
&\stackrel{(b)}{=} h(\dv Z_k + \dv V_k|\dv X,\dv Y_0,\dv Z_0) - h(\dv V_k|\dv X,\dv Y_0,\dv Y_k,\dv Z_0, \dv Z_k) \\
&\stackrel{(c)}{=} h(\dv Z_k + \dv V_k|\dv X,\dv Z_0) - h(\dv V_k|\dv X,\dv Z_0, \dv Z_k) \\
&\stackrel{(d)}{=} h(\dv Z_k + \dv V_k|\dv Z_0) - h(\dv V_k|\dv Z_0, \dv Z_k) \\
&\stackrel{(e)}{=} h(\dv Z_k + \dv V_k|\dv Z_0) - h(\dv V_k) \\
&= \log\left|({\pi}e)\mathrm{mmse}(\dv Z_k + \dv V_k|\dv Z_0)\right| - \log\left|({\pi}e)\dv\Gamma_k\right|\\
&= \log\left|\dv\Sigma_k+\dv\Gamma_k\right| - \log\left|\dv\Gamma_k\right|\\
&\stackrel{(f)}{=} -  \log|\dv I- \dv\Omega_k \dv\Sigma_k| 
\label{proof-rates-exponent-region-vector-Gaussian-HT-model-first-term}
\end{align}
where: $(a)$ follows by substituting using~\eqref{optimal-test-channel-achievability-vector-Gaussian-case}; $(b)$ and $(c)$ hold using~\eqref{mimo-gaussian-ht-model-2} and~\eqref{distributions-under-null-hypothesis-ht-against-conditional-independence-vector-Gaussian}; $(d)$ holds since $(\dv Z_0, \dv Z_k, \dv V_k)$ is independent of $\dv X$; $(e)$ holds since the noise $\dv V_k$ is independent of $(\dv Z_0, \dv Z_k)$; $(f)$ follows by substituting using~\eqref{covariance-matrix-compression-noise-optimal-test-channel-achievability-vector-Gaussian-case}. 

\noindent Similarly, for given $\mc S \subseteq \mc K$ we have
\begin{align}
I(\dv U_{\mc S^c};\dv X|\dv Y_0) &\stackrel{(a)}{=} h(\dv U_{\mc S^c}|\dv Y_0) - h(\dv Z_{\mc S^c}+\dv V_{\mc S^c}|\dv X,\dv Y_0) \\
&\stackrel{(b)}{=} h(\dv U_{\mc S^c}|\dv Y_0) - h(\dv Z_{\mc S^c}+\dv V_{\mc S^c}|\dv X,\dv Y_0,\dv Z_0) \\
&\stackrel{(c)}{=} h(\dv U_{\mc S^c}|\dv Y_0) - h(\dv Z_{\mc S^c}+\dv V_{\mc S^c}|\dv Z_0) \\
&\stackrel{(b)}{=} h(\dv U_{\mc S^c}|\dv Y_0) -  \log\left|({\pi}e)(\dv\Sigma_{\mc S^c}+\dv\Gamma_{\mc S^c})\right| \\
&\stackrel{(d)}{=} h(\dv Y_{\mc S^c} + \dv V_{\mc S^c} |\dv Y_0) -  \log\left|({\pi}e)(\dv\Sigma_{\mc S^c}+\dv\Gamma_{\mc S^c})\right| \\
&\stackrel{(e)}{=} \log\left|\dv I + \dv\Sigma_{\dv x} \dv H_{\bar{\mc S}}^\dagger \dv\Sigma_{\dv n_{\bar{\mc S}}}^{-1} \left( \dv I - \dv\Lambda_{\bar{\mc S}} \dv\Sigma_{\dv n_{\bar{\mc S}}}^{-1}  \right) \dv H_{\bar{\mc S}} \right| - \log\left|\dv I + \dv\Sigma_{\dv x} \dv H^{\dagger}_0 \dv\Sigma^{-1}_0\dv H_0\right|
\label{proof-rates-exponent-region-vector-Gaussian-HT-model-second-term}
\end{align}
here: $(a)$ follows by substituting using~\eqref{optimal-test-channel-achievability-vector-Gaussian-case}; $(b)$ holds using ~\eqref{mimo-gaussian-ht-model-2}; $(c)$ holds using $\dv Y_0$ is a deterministic function of $(\dv X, \dv Z_0)$ and $(\dv Z_{\mc S^c}, \dv V_{\mc S^c}, \dv Z_0)$ is independent of $\dv X$; $(d)$ follows by substituting using~\eqref{optimal-test-channel-achievability-vector-Gaussian-case}; and $(e)$ holds using~\eqref{covariance-matrix-compression-noise-optimal-test-channel-achievability-vector-Gaussian-case} and straightforward algebra which is omitted here for brevity.

\noindent Finally, substituting in~\eqref{rates-exponent-region-vector-Gaussian-model-in-terms-of-auxiliaries} using~\eqref{proof-rates-exponent-region-vector-Gaussian-HT-model-first-term} and~\eqref{proof-rates-exponent-region-vector-Gaussian-HT-model-second-term} we get~\eqref{rate-exponent-region-gaussian-hypothesis-testing-against-conditional-independence}; and this completes the proof of the direct part of Theorem~\ref{theorem-rate-exponent-region-gaussian-hypothesis-testing-against-conditional-independence}.

\renewcommand{\theequation}{D-\arabic{equation}}
\setcounter{equation}{0}  
\subsection{Proof of the Inequality~\eqref{upper-bound-EPI-exponent-rate-function-point-to-point-model}}~\label{appendix-proof-upper-bound-EPI-exponent-rate-function-point-to-point-model}

\noindent Since $Y = X + Z$ with $Z$ Gaussian and independent from $X$, invoking the strong entropy power inequality of Courtade~\cite[Theorem 1]{C18} gives
\begin{equation}
e^{2\big[h(Y) - I(X;U)\big]} \geq e^{2\big[h(X) - I(U; Y)\big]} + e^{2h(Z)}.
\label{proof-upper-bound-EPI-exponent-rate-function-point-to-point-model-step1}
\end{equation}

\noindent Thus, we get
\begin{equation}
I(U;X) \leq h(Y) -\frac{1}{2} \log\left(e^{2\big[h(X) - I(U; Y\big]} + e^{2h(Z)}\right).
\label{proof-upper-bound-EPI-exponent-rate-function-point-to-point-model-step2}
\end{equation}

\noindent Using~\eqref{optimal-exponent-rate-function-point-to-point-model}, we have
\begin{align}
E(R) &= \max_{P_{U|Y} \::\: I(U;Y) \: \leq R} \:\: I(U;X) \\
&\stackrel{(a)}{\leq} \max_{P_{U|Y} \::\: I(U;Y) \: \leq R} \:\: h(Y) -\frac{1}{2} \log\left(e^{2\big[h(X) - I(U; Y)\big]} + e^{2h(Z)}\right) \\
&\stackrel{(b)}{\leq} \max_{P_{U|Y} \::\: I(U;Y) \: \leq R} \:\: h(Y) -\frac{1}{2} \log\left(e^{2\big[h(X) - R\big]} + e^{2h(Z)}\right) \\
&= h(Y) -\frac{1}{2} \log\left(e^{2\big[h(X) - R\big]} + e^{2h(Z)}\right) \\
&\stackrel{(c)}{=} \frac{1}{2} \log \left(\frac{N(Y)}{N(X)e^{-2R}+\sigma^2_Z}\right)
\end{align}
where $(a)$ holds by using~\eqref{proof-upper-bound-EPI-exponent-rate-function-point-to-point-model-step2}, $(b)$ holds by using that $I(U;Y) \leq R$ and $(c)$ holds by substituting using~\eqref{definition-entropy-power}.

\renewcommand{\theequation}{E-\arabic{equation}}
\setcounter{equation}{0}  
\subsection{Proof of Theorem~\ref{theorem-upper-bound-rate-exponent-function-distributed-setting}}~\label{appendix-proof-theorem-upper-bound-rate-exponent-function-distributed-setting}

Recall the result of Theorem~\ref{theorem-rate-exponent-region-hypothesis-testing-DM-case}. Specializing it to the model described by~\eqref{distribution-under-null-hypothesis-ht-against-independence-2} and~\eqref{distribution-under-alternate-hypothesis-ht-against-independence-2}, we get that the region $\mc R_{\text{HT}}$  is given by the union of all  non-negative tuples $(R_1,\ldots,R_K,E)$ that satisfy, for all subsets $\mc S \subseteq \mc K$, 
\begin{equation}
E \leq I(U_{\mc S^c};X|Q) + \sum_{k \in \mc S} \big(R_k - I(Y_k;U_k|X,Q)\big)
\label{single-letter-characterization-rate-exponent-region}
\end{equation}
for some auxiliary random variables $(U_1,\ldots,U_K,Q)$ with distribution $P_{U_{\mc K},Q}(u_{\mc K},q)$ such that 
\begin{equation}
P_{X, Y_{\mc K}, U_{\mc K}, Q} = P_Q  P_{X} \prod_{k=1}^K P_{Y_k|X}\: \prod_{k=1}^{K} P_{U_k|Y_k,Q}.
\label{distribution-single-letter-characterization-rate-exponent-region}
\end{equation}

\noindent Let $\mc S \subseteq \mc K$ be given, and for $k \in \mc K$ define
\begin{equation}
\gamma_k = \frac{1}{\sigma^2_k} \left(1-e^{-2I(U_k;Y_k|X,Q)}\right).
\label{parametrization-proof-upper-bound}
\end{equation}
\noindent Note that for all $k \in \mc K$, we have $0 \leq \gamma_k \leq 1/\sigma^2_k$.

\noindent If $\mc S=\mc K$, it is easy to see that~\eqref{upper-bound-rate-exponent-function-distributed-setting-full-set} follows directly from~\eqref{single-letter-characterization-rate-exponent-region} using the substitution~\eqref{parametrization-proof-upper-bound}. In the rest of the proof, we therefore suppose that $\mc S$ is a strict subset of $\mc S$, i.e., $\mc S \subset \mc K$.
 
\noindent Using~\eqref{distribution-single-letter-characterization-rate-exponent-region}, it is easy to see that $X \mkv Y(\mc S^c) \mkv (Y_{\mc S^c}, U_{\mc S^c})$ forms a Markov chain conditionally given $Q$. That is,
\begin{equation}
X \mkv Y(\mc S^c) \mkv (Y_{\mc S^c}, U_{\mc S^c}) \:|\: Q.
\label{markov-chain-sufficient-statistic}
\end{equation}

\noindent Since $Y(\mc S^c) = X + Z(\mc S^c)$ with $X$ and $Z(\mc S^c)$ being independent conditionally given $Q$ and $Z(\mc Z^c)$ conditionally Gaussian given $Q$, invoking the conditional strong entropy power inequality of Courtade~\cite[Corollary 2]{C18} yields
\begin{equation}
e^{2\big[h(Y(\mc S^c)|Q) - I(X;U_{\mc S^c}|Q)\big]} \geq e^{2\big[h(X|Q) - I(U_{\mc S^c}; Y(\mc S^c)|Q)\big]} + e^{2h(Z(\mc S^c)|Q)}.
\label{proof-upper-bound-step1}
\end{equation}
\noindent Continuing from~\eqref{proof-upper-bound-step1} using that $Q$ is independent from $(X,Y(\mc S^c), Z(\mc S^c))$, we get
\begin{equation}
e^{2\big[h(Y(\mc S^c)) - I(X;U_{\mc S^c}|Q)\big]} \geq \frac{e^{2h(X)}}{e^{2h(Y(\mc S^c))}} e^{2h\left(Y(\mc S^c)|U_{\mc S^c}, Q\right)} +  e^{2h(Z(\mc S^c))}.
\label{proof-upper-bound-step2}
\end{equation}

\noindent The conditional entropy term $h(Y(\mc S^c)|U_{\mc S^c}, Q)$ is given by
\begin{align}
h\left(Y(\mc S^c)|U_{\mc S^c}, Q\right) &= h\left(Y(\mc S^c)|U_{\mc S^c}, X, Q\right) + I\left(X; Y(\mc S^c), U_{\mc S^c} | Q\right) - I\left(X; U_{\mc S^c} | Q\right) \\
&= h\left(Y(\mc S^c)|U_{\mc S^c}, X, Q\right) + I\left(X; Y(\mc S^c)\right) - I\left(X; U_{\mc S^c} | Q\right)
\label{proof-upper-bound-step3}
\end{align}
where the last equality follows using~\eqref{markov-chain-sufficient-statistic}.

\noindent Also, recalling~\eqref{definition-sufficient-statistic} we have
\begin{align}
 e^{2h\left(Y(\mc S^c)|U_{\mc S^c},  X, Q\right)}  &\stackrel{(a)}{=} e^{2h\left(\frac{1}{|\mc S^c|} \sum_{k \in \mc S^c}\frac{\sigma^2_{\mc S^c}}{\sigma^2_k} Y_k |U_{\mc S^c}, X, Q\right)} \\
&\stackrel{(b)}{\geq} \frac{1}{|\mc S^c|^2} \sum_{k \in \mc S^c} \left(\frac{\sigma^2_{\mc S^c}}{\sigma^2_k}\right)^2  e^{2h(Y_k|U_k,X,Q)}
\label{proof-upper-bound-step4}
\end{align}
 where: $(a)$ follows by substituting using~\eqref{definition-sufficient-statistic}, and $(b)$ follows using the entropy power inequality noticing that for  all $\mc A$ such that $k \notin \mc A$ we have $U_k \mkv Y_k \mkv (X, U_{\mc A}) \: |\: Q$ and the random variables $\{Y_k|U_k, X, Q\}$ are independent.

\noindent Furthermore, we have
\begin{align}
h(Y_k|U_k,X,Q) &= h(Y_k|X,Q) + I(U_k; Y_k | X,Q)\\
&= h(Z_k) + I(U_k;Y_k|X,Q)
\label{proof-upper-bound-step5}
\end{align}
where the last equality follows by substituting using $Y_k=X+Z_k$ and the noise $Z_k$ is independent from $(X,Q)$.

\noindent Now, substituting in~\eqref{proof-upper-bound-step2} using~\eqref{proof-upper-bound-step3}, ~\eqref{proof-upper-bound-step4} and~\eqref{proof-upper-bound-step5}, we get
\begin{align}
e^{2\big[h(Y(\mc S^c)) - I(X;U_{\mc S^c}|Q)\big]} \geq e^{2h(Z(\mc S^c))} & + \frac{e^{2h(X)}}{e^{2h(Y(\mc S^c))}} e^{2\big[I(X; Y(\mc S^c)) - I(X; U_{\mc S^c} | Q)\big]} \nonumber\\
& \times  \frac{1}{|\mc S^c|^2} \sum_{k \in \mc S^c} \left(\frac{\sigma^2_{\mc S^c}}{\sigma^2_k}\right)^2  e^{2h(Z_k)} e^{-2I(U_k; Y_k|X,Q)}.
\label{proof-upper-bound-step6}
\end{align}

\noindent Using~\eqref{proof-upper-bound-step6}, we have
\begin{align}
I(U_{\mc S^c};X|Q) &\leq \frac{1}{2} \log \left( e^{-2h(Z(\mc S^c))} \left[ e^{2h(Y(\mc S^c))} - \frac{e^{2h(X)}e^{2I(X; Y(\mc S^c))}}{e^{2h(Y(\mc S^c))}} \frac{1}{|\mc S^c|^2} \sum_{k \in \mc S^c} \left(\frac{\sigma^2_{\mc S^c}}{\sigma^2_k}\right)^2  e^{2h(Z_k)} e^{-2I(U_k; Y_k|X,Q)} \right] \right) \\
& = \frac{1}{2} \log \left( \frac{e^{2h(Y(\mc S^c))}}{e^{2h(Z(\mc S^c))}} -\frac{e^{2h(X)}}{e^{4h(Z(\mc S^c))}} \frac{1}{|\mc S^c|^2} \sum_{k \in \mc S^c} \left(\frac{\sigma^2_{\mc S^c}}{\sigma^2_k}\right)^2  e^{2h(Z_k)} e^{-2I(U_k; Y_k|X,Q)}  \right) \\
&\stackrel{(a)}{=} \frac{1}{2} \log \left( \frac{N(Y(\mc S^c))}{N(Z(\mc S^c))} -  \frac{N(X)}{(N(Z(\mc S^c)))^2} \sum_{k \in \mc S^c} \left(\frac{\sigma^2_{\mc S^c}}{|\mc S^c|}\right)^2   (\frac{1}{\sigma^2_k} -\gamma_k)  \right) \\
&\stackrel{(b)}{=} \frac{1}{2} \log \left(|\mc S^c| \frac{N(Y(\mc S^c))}{\sigma^2_{\mc S^c}} -N(X) \sum_{k \in \mc S^c} (\frac{1}{\sigma^2_k} - \gamma_k)\right)
\label{proof-upper-bound-step7}
\end{align}
where $(a)$ holds by substituting that for a continuous random variable $A$ we have $e^{2h(A)}=2{\pi}e N(A)$ and $Z_k$ is Gaussian with variance $\sigma^2_k$, and $(b)$ holds by noticing that $N(Z(\mc S^c))=\sigma^2_{\mc S^c}/|\mc S^c|$ and using~\eqref{parametrization-proof-upper-bound}.

\noindent Finally, combining~\eqref{single-letter-characterization-rate-exponent-region} and~\eqref{proof-upper-bound-step7} and substituting using~\eqref{parametrization-proof-upper-bound}, we get~\eqref{upper-bound-rate-exponent-function-distributed-setting-strict-subset}; and this completes the proof of the theorem.

\section*{Acknowledgment}
\addcontentsline{toc}{section}{Acknowledgment} 

The author would like to thank Aaron Wagner for fruitful discussions about the relation of Theorem 1 to the outer bound of~\cite[Theorem 2]{RW12}. In particular the steps~\eqref{bound-rahman-wagner-first-step}-~\eqref{bound-rahman-wagner-last-step}, as well as the note of Remark~\ref{relation-to-Courtade-Weissman}, are due to him. The author also thanks the anonymous reviewers for various useful comments and suggestions which improved the quality of this paper.

\vspace{-0.4cm}

\bibliographystyle{IEEEtran}
\bibliography{draft-paper-arxiv-Gaussian-hypothesis-testing}
\end{document}